\newtheorem{theorem}{Theorem}
\newtheorem{lemma}[theorem]{Lemma}
\newtheorem{proposition}[theorem]{Proposition}
\newtheorem{observation}[theorem]{Observation}
\newcommand{\figurenames}{\figurename{s}}
\begin{document}

\title{Modem Illumination of Monotone Polygons\thanks{%
A preliminary version of this work has been presented at EuroCG'09~\cite{Aich}.\newline
O. Aichholzer and B. Vogtenhuber partially supported by the ESF EUROCORES programme EuroGIGA -- CRP `ComPoSe', Austrian Science Fund (FWF): I648-N18. 
R. Fabila-Monroy partially supported by CONACyT (Mexico), grant 153984. 
D. Flores-Pe\~naloza partially supported by CONACyT (Mexico), grant 168277, and PAPIIT IA102513 (UNAM, Mexico). 
T. Hackl supported by the Austrian Science Fund (FWF): P23629-N18 `Combinatorial Problems on Geometric Graphs'. 
J. Urrutia partially supported by CONACyT (Mexico) grant CB-2007/80268.
  }
 }

\author{Oswin Aichholzer\thanks{Institute for Software Technology, Graz University of Technology, Graz, Austria, \texttt{[oaich|thackl|apilz|bvogt]@ist.tugraz.at}.}
\and Ruy Fabila-Monroy\thanks{Departamento de Matem\'aticas, Cinvestav, D.F. M\'exico, M\'exico, \texttt{ruyfabila@math.cinvestav.edu.mx}.}
\and David Flores-Pe\~naloza\thanks{Departamento de Matem\'aticas, Facultad de Ciencias, Universidad Nacional Aut\'onoma de M\'exico, D.F. M\'exico, M\'exico, \texttt{dflorespenaloza@gmail.com}.}
\and Thomas Hackl$^\dagger$
\and Jorge Urrutia\thanks{Instituto de Matem\'aticas, Universidad Nacional Aut\'onoma de M\'exico, D.F. M\'exico, M\'exico, \texttt{urrutia@matem.unam.mx}.}
\and Birgit Vogtenhuber$^\dagger$
}

\maketitle

\begin{abstract}
  We study a generalization of the classical problem of the illumination of polygons.
  Instead of modeling a light source we model a wireless device whose radio signal can penetrate a given number~$k$ of walls.
  We call these objects \mbox{$k$-modems} and study the minimum number of \mbox{$k$-modems} sufficient and sometimes necessary to illuminate monotone and monotone orthogonal polygons.
  We show that every monotone polygon with~$n$ vertices can be illuminated with $\big\lceil \frac{n-2}{2k+3} \big\rceil$ \mbox{$k$-modems}.
  In addition, we exhibit examples of monotone polygons requiring at least $\lceil \frac {n-2} {2k+3}\rceil$ \mbox{$k$-modems} to be illuminated.

  For monotone orthogonal polygons with~$n$ vertices we show that for~$k=1$ and for even~$k$, every such polygon can be illuminated with $\big\lceil \frac{n-2}{2k+4} \big\rceil$ \mbox{$k$-modems}, while for odd~$k\geq3$, $\big\lceil \frac{n-2}{2k+6} \big\rceil$ $k$-modems are always sufficient.
  Further, by presenting according examples of monotone orthogonal polygons, we show that both bounds are tight.
\end{abstract}


\section{Introduction}

New technologies inspire new research problems, and wireless networking is a typical example of this.
Nowadays, wireless technologies surround us everywhere.
We use them in devices such as cellular phones, satellite communications, and, in our homes, we use wireless modems to connect to the Internet.

This has triggered, among other things, the development of a new class of algorithms designed specifically to work with wireless networks, such as cellular networks, sensor networks, and \emph{ad-hoc} networks~\cite{Kran2,Urrutia2,Watten}.
The development of GPS, also a byproduct of wireless technologies, has allowed the development of so-called ``local algorithms" for routing problems in cellular and \emph{ad-hoc} networks~\cite{BMSU,Krana,Urrutia2} that allow relayed communication between any two nodes $u$~and~$v$ of a network, at any time using  only the position of $u$~and~$v$, as well as the current position of a message while traveling from~$u$~to~$v$.
For more details see~\cite{Urrutia2,Watten}.

In this paper we study what we call the \emph{Modem Illumination Problem}.
This problem stems from our daily use of laptop computers and wireless modems.
Experience shows that when trying to connect a laptop to a wireless modem, there are two factors that have to be considered: the \emph{distance} to the wireless modem and, perhaps most important in many buildings, the \emph{number of walls} separating our laptop from the wireless modem.
From now on, the term \emph{modem} will be used to refer to a wireless modem.
We call a modem \emph{a $k$-modem} if it is strong enough to transmit a stable signal through~$k$ walls along a straight line.
Thus, we say that a point~$p$ in a polygon~$P$ is illuminated by a \mbox{$k$-modem}~$M$ in~$P$ if the line segment joining~$p$ to~$M$ \emph{crosses} at most~$k$ walls (edges) of~$P$.\\

\noindent \emph{\bf{The Modem Illumination Problem}: }
Let~$P$ be a student center (f.k.a. art gallery) modeled by a polygon~$P$ with~$n$ vertices.
How many \mbox{$k$-modems} located at points of~$P$ are always sufficient, and sometimes necessary, to illuminate all points in~$P$?\\

We point out that we allow a modem to be located at a point~$q$ on an edge~$e$ (or even on a vertex~$v$) of~$P$.
In this case, we do not consider~$e$ (or the two edges incident to~$v$) as a barrier for the modem.
That is, the line segments connecting~$q$ and any point of $P$ do not cross~$e$ (or the two edges incident to~$v$).
Thus, if~$p$ is an interior point of~$P$, the line segment connecting~$p$ and~$q$ may cross an odd number of edges of $P$.

For $k=0$ our problem corresponds to Chv\'atal's Art Gallery Theorem~\cite{Chvatal} which states that $\big\lfloor\frac{n}{3}\big\rfloor$ watchmen are always sufficient and sometimes necessary to guard an art gallery with $n$ walls.
Many generalizations of the original Art Gallery problem have been studied, see~\cite{ORourke,Shermer,Urrutia} for comprehensive surveys.
The modem illumination problem was introduced in~\cite{Aich,Ruy}.

A similar problem, posed by Urrutia and solved by Fulek et al.~\cite{Pach}, is the following:
What is the smallest number $\tau = \tau (n)$ such that in any collection of $n$ pairwise disjoint convex sets in the $d$-dimensional Euclidean space, there is a point such that any ray emanating from it meets at most $\tau$ sets of the collection?
In our language, they proved that one \mbox{$\frac{dn+1}{d+1}$-modem} is always sufficient to illuminate the $d$-dimensional Euclidean space in presence of~$n$ convex obstacles.

In~\cite{Ball}, several variations of the Modem Illumination Problem are studied.
For example they present upper and lower bounds on the number of \mbox{$k$-modems} needed to illuminate the plane in the presence of obstacles modeled by line segments (with fixed slopes), or nested polygons.
They also present some bounds on illuminating special classes of simple polygons with \mbox{$2$-modems} in the interior of the polygon.

Polygon illumination with wireless devices has also been studied in a slightly different context,  the so-called sculpture garden problem~\cite{EGS,CHOU}.  
There, each device only broadcasts a signal within a given angle of the polygon and has unbounded range.
The task is to describe the polygon (that is, distinguish it from the exterior) by a combination of the devices, meaning that for each point~$p$ in the interior of the polygon no point outside the polygon receives signals from the same devices as~$p$. 

We remark that the general case of the Modem Illumination Problem for \mbox{$k$-modems} is widely open.
Almost no tight bounds are known so far for any class of polygons for any $k \geq 1$; 
see for example the recent Column of O'Rourke~\cite{ORourke2}.
Very recently, there has been some development on these problems in several directions.
Duque and Hidalgo~\cite{DH} have announced
an upper bound of $O(\frac{n}{k})$ on the number of  \mbox{$k$-modems}
needed to illuminate the interior of a simple polygon of $n$ sides.
In the case of orthogonal polygons they give a tighter bound of $6\frac{n}{k}+1$.
For the related question on ``edge-transmitters'' (modems where the signal is emanated from a whole edge instead of only a point; see also~\cite{Shermer}), Cannon et al.~\cite{edgetrans} consider several classes of polygons.
They show lower (i.e., sometimes necessary) and upper (i.e., always sufficient) bounds on the number of needed transmitters for simple (orthogonal) (monotone) polygons.
For example, for monotone polygons they prove $\big\lceil\frac{n-3}{8}\big\rceil$ and $\big\lceil\frac{n-2}{9}\big\rceil$ as lower and upper bounds, and they give a tight bound of $\big\lceil\frac{n-2}{10}\big\rceil$ for monotone orthogonal polygons.
Finally, in~\cite{np}, the same authors study the corresponding algorithmic optimization problems for both, point and edge transmitters.
In particular, they show that it is NP-hard to compute the minimum number of \mbox{$k$-modems} needed to illuminate a given simple polygon. 


In this paper we provide lower and upper bounds for the Modem Illumination Problem for monotone polygons and monotone orthogonal polygons, which are a reasonable model of most real life buildings.
These bounds are asymptotically tight (i.e., the lower and upper bounds match):
$\big\lceil \frac{n-2}{2k+3} \big\rceil$ \mbox{$k$-modems} for monotone polygons;
$\big\lceil \frac{n-2}{2k+4} \big\rceil$ \mbox{$k$-modems} for monotone orthogonal polygons for~$k=1$ and even~$k$;
and $\big\lceil \frac{n-2}{2k+6} \big\rceil$ \mbox{$k$-modems} for monotone orthogonal polygons for~odd~$k\geq3$.
(Note that these bounds improve the bounds from the preliminary version~\cite{Aich}.)

\section{Illumination of (general) monotone polygons with $k$-modems}
\label{sec:gmono}

To keep things as simple as possible we introduce several conventions on monotone polygons.
When we speak of a polygon, we refer to both the boundary and the interior of the polygon.
For technical reasons and without loss of generality, we make the following assumption: for a non-ortho\-gonal monotone polygon, we assume that no two of its edges (on different chains) are parallel.

For any monotone polygon, we assume that the direction of monotonicity is the \mbox{$x$-axis}, and that no two vertices have the same $x$-coordinate.
We denote a (monotone) polygon~$P$ with~$n$ vertices as (monotone) \mbox{$n$-gon}.
Further, we denote the vertices of~$P$ by $v_1, \ldots, v_{n}$, where the labels are given to the vertices with respect to their $x$-sorted order, such that $v_1$ is the leftmost and $v_{n}$ is the rightmost vertex of~$P$.
Note that this implies that, in general, the vertices are not labeled with respect to their order along the boundary of~$P$.

To validate possible \mbox{$k$-modem} positions we will consider rays from these positions.
For a point $q$ of $P$ let ${\cal R}(q)$ be the set of all rays starting at $q$.
Observe that at the last edge~$e$ of~$P$ that is intersected by a ray~$r$, $r$ leaves~$P$ and never enters~$P$ again.
In other words, no line segment connecting~$q$ with a point~$p$ on~$r\cap P$ is crossing~$e$.
Hence, a \mbox{$k$-modem} does not have to overcome $e$ in the direction of $r$ to fully illuminate $P$.
If a \mbox{$k$-modem} at $q$ illuminates~$P$ (in the direction of a ray $r$), then we say that~$q$ is a \emph{valid} \mbox{$k$-modem} position for~$P$ (in the direction of~$r$).

\begin{observation}\label{obs:rayk+1}
For any point $q$ of $P$ and each $r\in{\cal R}(q)$, $q$ is a valid \mbox{$k$-modem} position for $P$ in the direction of $r$ if and only if $r$ intersects at most $k+1$ edges of~$P$.

A point~$q$ is a valid \mbox{$k$-modem} position for~$P$ if and only if every $r\in{\cal R}(q)$ intersects at most $k+1$ edges of~$P$.
\end{observation}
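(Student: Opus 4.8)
The plan is to reduce the two-dimensional illumination condition to a one-dimensional crossing count taken separately along each ray, and then to quantify over all rays. First I would fix a ray $r\in{\cal R}(q)$ and consider the points $p\in r\cap P$. Since $qp$ is an initial portion of $r$, the edges of $P$ crossed by the segment $qp$ are exactly those crossed by $r$ before it reaches $p$; in particular this count is non-decreasing as $p$ moves away from $q$ along $r$. Hence $q$ illuminates every point of $r\cap P$ (i.e.\ is valid in the direction of $r$) if and only if the \emph{maximum} of this count over $p\in r\cap P$ is at most $k$.

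It therefore remains to evaluate this maximum. Let $e_1,\dots,e_m$ be the edges crossed by $r$, listed by increasing distance from $q$. The key geometric fact is that, because $P$ is bounded, after the last crossing $e_m$ the ray $r$ lies outside $P$ and never returns; consequently the point of $r\cap P$ farthest from $q$ is the one reached at (or arbitrarily close to, and just before) $e_m$. The segment from $q$ to such a point crosses precisely $e_1,\dots,e_{m-1}$, so the maximal count equals $m-1$ (the case $m=0$, where $r\cap P$ reduces to $q$ itself, being trivial). Combining this with the previous paragraph, $q$ is valid in the direction of $r$ if and only if $m-1\le k$, that is, if and only if $r$ crosses at most $k+1$ edges of $P$, which is the first assertion.

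The second assertion then follows by a quantifier exchange: $q$ illuminates $P$ exactly when it illuminates every point $p\in P$, and every such $p$ lies on the ray from $q$ through $p$; thus $q$ is valid for $P$ if and only if it is valid in the direction of every $r\in{\cal R}(q)$, i.e.\ every ray crosses at most $k+1$ edges. I expect the main obstacle to be the careful treatment of degeneracies rather than the core argument: I must justify the ``leaves $P$ forever after $e_m$'' claim and the inclusion of the point just inside $e_m$ under the paper's general-position assumptions, handle modems placed on an edge or at a vertex (where the convention excludes the incident edges, so they never appear among $e_1,\dots,e_m$), and distinguish genuine transversal crossings from mere contacts at an endpoint, so that $m$ and the equivalence $m-1\le k\Leftrightarrow m\le k+1$ are well defined.
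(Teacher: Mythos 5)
Your proof is correct and follows essentially the same route as the paper, whose justification for this observation is precisely the remark that a ray leaves $P$ for good at the last edge it intersects, so no segment from $q$ to a point of $r\cap P$ ever crosses that last edge, reducing the illumination condition to counting at most $k+1$ ray crossings. Your additional care about monotone crossing counts, the $m=0$ case, and the edge/vertex placement convention only makes explicit what the paper leaves implicit.
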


If $q$ is a point on the boundary~$\partial P$ of~$P$ then we distinguish two subsets of rays.
Let ${\cal R}^o(q)\subset {\cal R}(q)$ be the subset of rays that start to the outside of~$P$ and let ${\cal R}^i(q)\subset {\cal R}(q)$ be the subset of rays that start to the inside of~$P$.

\begin{observation}\label{obs:rayio}
For any point~$q$ of~$\partial P$, each ray in ${\cal R}^o(q)$ intersects an even number of edges of~$P$ and each ray ${\cal R}^i(q)$ intersects an odd number of edges of~$P$.
\end{observation}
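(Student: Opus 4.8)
The plan is to prove the statement by the classical Jordan-curve parity argument, exploiting that $P$ is bounded so that every ray eventually escapes into the exterior of~$P$. Fix a point $q\in\partial P$ and a ray $r\in{\cal R}(q)$. First I would choose a point $p$ on $r$ at distance $\epsilon$ from~$q$, with $\epsilon>0$ small enough that the open segment from $q$ to~$p$ crosses no edge of~$P$. By the very definitions of ${\cal R}^i(q)$ and ${\cal R}^o(q)$, the point $p$ then lies in the interior of~$P$ exactly when $r\in{\cal R}^i(q)$, and in the exterior of~$P$ exactly when $r\in{\cal R}^o(q)$. Since the open segment between $q$ and~$p$ meets no edge, the number of edges crossed by~$r$ equals the number of edges crossed by the sub-ray $r'$ starting at~$p$.

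Next I would walk along $r'$ from $p$ towards infinity, tracking whether the current point is interior or exterior to~$P$. Each transversal crossing of an edge of~$\partial P$ toggles this status, and no other event changes it. Because $P$ is bounded, the far end of $r'$ lies in the exterior of~$P$. Hence the number of edge crossings of~$r'$ is even when $p$ is exterior and odd when $p$ is interior. Combining this with the reduction of the previous paragraph yields the claim: rays in ${\cal R}^o(q)$ cross an even number of edges and rays in ${\cal R}^i(q)$ cross an odd number.

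The main obstacle is the treatment of degenerate rays, namely those passing through a vertex of~$P$ or running along an edge, for which ``number of crossings'' must be defined so that the toggle interpretation survives. I would handle this with the standard convention that a passage through a vertex counts as a crossing precisely when the two incident edges lie on opposite sides of~$r$, while a tangential contact contributes no net crossing. Equivalently, one may first establish the statement for the open dense set of generic directions and then extend it by observing that the interior/exterior parity at $p$ is invariant under small rotations of~$r$ about~$q$ that do not sweep across a vertex. Either route reduces the degenerate cases to the generic argument above, completing the proof.
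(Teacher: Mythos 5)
The paper gives no proof of this observation at all---it is stated as a self-evident fact---and your Jordan-curve parity argument (walk along the ray from a point just past $q$, toggle interior/exterior at each boundary crossing, and use that the ray ends up in the exterior because $P$ is bounded) is exactly the standard reasoning the paper implicitly relies on. Your proof is correct; in fact your explicit convention for degenerate rays through vertices (a vertex passage counts as a crossing only when the two incident edges lie on opposite sides of the ray) is more careful than the paper itself, whose literal wording ``intersects an even/odd number of edges'' silently glosses over exactly this case.
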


Recall that, if a possible \mbox{$k$-modem} position~$q$ is on an edge (or vertex) of~$P$, then no ray in ${\cal R}(q)$ crosses that edge (or the two edges incident to that vertex).
With this in mind we state a few simple observations on the size of polygons that can be fully illuminated by a single \mbox{$k$-modem}.

\begin{proposition}\label{prop:anywhere}
Every $(k+2)$-gon~$P$ can be illuminated with a \mbox{$k$-modem} placed anywhere in the interior or on the boundary of~$P$.
\end{proposition}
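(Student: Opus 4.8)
The plan is to reduce the claim to a counting argument via Observation~\ref{obs:rayk+1}: it suffices to show that for any placement $q$ of the modem in~$P$, every ray $r\in{\cal R}(q)$ crosses at most $k+1$ edges of~$P$. Since $P$ is a $(k+2)$-gon it has exactly $k+2$ edges, and each straight ray meets each edge in at most one point, so a priori a ray could meet up to $k+2$ edges. The whole work is therefore to shave off exactly one crossing in the worst case.

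First I would dispose of the boundary placements, which are immediate. If $q$ lies in the relative interior of an edge~$e$, then by the convention that~$e$ is not a barrier for a modem sitting on it, only the remaining $k+1$ edges can be crossed, so every ray crosses at most $k+1$ edges and $q$ is valid. If $q$ is a vertex, then the two incident edges are not barriers, leaving only $k$ candidate edges, and again every ray crosses at most $k\le k+1$ of them.

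The interesting case is $q$ in the interior of~$P$. Here I would look at the full line~$\ell$ supporting~$r$ and let~$r'$ be the opposite ray of~$r$, so that $\ell = r\cup r'$ with the two halves sharing only the endpoint~$q$, which lies on no edge. The key observation is that~$r'$ starts at the interior point~$q$ and runs to infinity; since~$P$ is bounded, far points of~$r'$ lie outside~$P$, so~$r'$ must leave~$P$ and therefore crosses at least one edge. Because~$\ell$ meets each of the $k+2$ edges at most once, the crossings of~$\ell$ split between~$r$ and~$r'$, giving (crossings on~$r$)~$+$~(crossings on~$r'$)~$\le k+2$; combined with (crossings on~$r'$)~$\ge 1$ this forces~$r$ to cross at most $k+1$ edges. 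Hence every ray from an interior~$q$ is good, and~$q$ is a valid \mbox{$k$-modem} position. Together with the boundary cases this covers all placements.

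The main obstacle is the bookkeeping around degeneracies: a ray could graze a vertex or run along an edge, in which case ``crossing'' must be counted consistently, and one must make sure that the facts ``at most one crossing per edge'' and ``the opposite ray exits'' are not spoiled. I expect to handle this by invoking the general-position conventions already fixed for~$P$ (no two vertices share an \mbox{$x$-coordinate}, no two edges on different chains are parallel) together with a limiting argument in which such rays are obtained as limits of non-degenerate ones, so that the clean counting used above remains valid.
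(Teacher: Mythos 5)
Your proposal is correct and follows essentially the same argument as the paper: on the boundary the edge(s) containing the modem are not barriers, and in the interior the ray opposite to~$r$ must cross at least one edge, so every ray crosses at most $k+1$ of the $k+2$ edges and Observation~\ref{obs:rayk+1} applies. Your treatment is just a more detailed write-up of the paper's proof, including the degeneracy bookkeeping that the paper leaves implicit.
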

\begin{proof}
  Let $q$ be the point of~$P$ where the \mbox{$k$-modem} is placed.
  If~$q$ is on the boundary of~$P$, then at least one edge of~$P$ is not crossed by any ray in ${\cal R}(q)$.
  If~$q$ is in the interior of~$P$, then for each $r\in{\cal R}(q)$ at least one edge of~$P$ is crossed by the ray starting in the opposite direction of $r$ from $q$.
  Therefore, each ray in ${\cal R}(q)$ crosses at most $k+1$ edges out of the $k+2$ edges of~$P$.
  By Observation~\ref{obs:rayk+1}, a \mbox{$k$-modem} at~$q$ illuminates the whole polygon. 
\end{proof}

Note that this proposition will not be used in this paper.
It nevertheless is of interest, as it is also true for general (not necessarily monotone) simple polygons.
The next lemma is true for general simple polygons too, and will be needed later on.

\begin{lemma}\label{lem:vertex}
Every $(k+3)$-gon~$P$ can be illuminated with a \mbox{$k$-modem} placed on any vertex of~$P$.
\end{lemma}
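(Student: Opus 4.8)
The plan is to reduce the statement directly to the ray-counting criterion of Observation~\ref{obs:rayk+1}. Let $v$ be an arbitrary vertex of the $(k+3)$-gon~$P$ and place the \mbox{$k$-modem} at $q=v$. By Observation~\ref{obs:rayk+1}, it suffices to show that every ray $r\in{\cal R}(v)$ intersects at most $k+1$ edges of~$P$; once this is established, $v$ is a valid \mbox{$k$-modem} position and the modem illuminates all of~$P$.

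First I would invoke the vertex convention stated just before Proposition~\ref{prop:anywhere}: since $q=v$ lies on a vertex, the two edges of~$P$ incident to~$v$ are not treated as barriers, i.e.\ no ray in ${\cal R}(v)$ is counted as crossing them. This leaves exactly $(k+3)-2=k+1$ edges that a ray could possibly cross. Next I would use the elementary fact that a ray, being contained in a line, meets each individual edge (a line segment) in at most one point, so each of these $k+1$ remaining edges contributes at most one crossing to any fixed ray~$r$. Summing, every ray emanating from~$v$ crosses at most $k+1$ edges, which is precisely the threshold required by Observation~\ref{obs:rayk+1}. This completes the argument; note that it needs no case distinction between ${\cal R}^i(v)$ and ${\cal R}^o(v)$ and no convexity of~$P$, which explains why the lemma remains true for general simple polygons. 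It is also worth highlighting the contrast with Proposition~\ref{prop:anywhere}: placing the modem at a vertex excludes \emph{two} incident edges rather than one, and it is exactly this extra excluded edge that lets the bound improve from $k+2$ to $k+3$.

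The argument is essentially a counting argument, so the main thing that needs care — the closest thing to an obstacle — is justifying rigorously that the two incident edges genuinely drop out of the count and that degenerate rays do not secretly introduce extra crossings. The two cases to check are a ray that runs along an edge incident to~$v$ and a ray that passes through another vertex of~$P$. Both are dispatched by the vertex convention together with the standing general-position assumptions on~$P$ (no two vertices sharing an $x$-coordinate, no two non-incident parallel edges), so the clean bound of $k+1$ crossings per ray holds for every $r\in{\cal R}(v)$. I would therefore keep the write-up short and lean entirely on Observation~\ref{obs:rayk+1}.
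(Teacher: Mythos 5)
Your proof is correct and follows essentially the same route as the paper's: exclude the two edges incident to the chosen vertex via the vertex convention, conclude that every ray in ${\cal R}(v)$ crosses at most $(k+3)-2=k+1$ edges, and invoke Observation~\ref{obs:rayk+1}. The extra remarks on degenerate rays and the comparison with Proposition~\ref{prop:anywhere} are fine but not needed; the core counting argument matches the paper's proof exactly.
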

\begin{proof}
  Let $q$ be the vertex of~$P$ where the \mbox{$k$-modem} is placed.
  No ray in ${\cal R}(q)$ crosses any of the two edges incident to~$q$.
  Therefore, each ray in ${\cal R}(q)$ crosses at most $k+1$ edges out of the $k+3$ edges of~$P$.
  By Observation~\ref{obs:rayk+1}, a \mbox{$k$-modem} at~$q$ illuminates the whole polygon. 
\end{proof}

For every monotone ($x$-monotone by convention) polygon~$P$, every vertical line~$\ell$ cuts~$P$ into at most two parts, because $P\cap\ell$ is either one connected component or empty.
Let~$H_L(\ell)$ be the (closed) half-plane bounded to the right by~$\ell$.
Following suit, let~$H_R(\ell)$ be the (closed) half-plane bounded to the left by~$\ell$.
Note that $H_L(\ell)$ (or $H_R(\ell)$) is to the left (or right) of~$\ell$ (including~$\ell$).

Hence, for a vertical line~$\ell$ that intersects~$P$, we call $P\cap H_L(\ell)$ and $P\cap H_R(\ell)$ the left and right part of~$P$, respectively.
We say that $P\cap H_L(\ell)$ (or $P\cap H_R(\ell)$) contains an edge~$e$ of~$P$ if $(P\cap H_L(\ell))\cap e$ (or $(P\cap H_L(\ell))\cap e$) is neither empty nor a single point.
%
%
Note that at least one edge of~$P$ is always split by a vertical line that intersects~$P$, because, by convention, no two vertices of~$P$ have the same $x$-coordinate.

Strictly speaking, $P\cap H_L(\ell)$ and $P\cap H_R(\ell)$ are not polygons.
Their rightmost and leftmost edge, respectively, is missing.
We could fix this by adding the line segment $s=P\cap\ell$ to both parts of~$P$ as an (auxiliary) edge.
But this would violate the convention that no two vertices share the same $x$-coordinate.
To maintain this convention we can slightly perturb the end point of~$s$ that is no vertex of~$P$ along its edge of~$P$ without changing the situation.
Furthermore, we say that $P\cap H_L(\ell)$ (or $P\cap H_R(\ell)$) is illuminated if the polygon $(P\cap H_L(\ell))\cup s$ (or $(P\cap H_R(\ell))\cup s$) would be illuminated.
The following statement summarizes observations about splitting~$P$. 

\begin{observation}\label{obs:simplesplit}
Let~$P$ be an $x$-monotone $n$-gon and let $\ell_i$ be a vertical line through $v_i$, $2\leq i\leq n-1$.
\begin{itemize}\vspace*{-0.5ex}
\item $P\cap H_L(\ell_i)$ contains $i$ edges of~$P$.
\item $P\cap H_R(\ell_i)$ contains $n-i+1$ edges of~$P$.
\item If both $P\cap H_L(\ell_i)$ and $P\cap H_R(\ell_i)$ are illuminated, then~$P$ is illuminated.
\end{itemize}
\end{observation}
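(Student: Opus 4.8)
The plan is to treat the three bullet points separately, deriving the two edge counts from the chain structure of an $x$-monotone polygon and then transferring illumination from the two parts back to~$P$. First I would record the structural fact that the boundary of an $x$-monotone $n$-gon splits into an upper and a lower chain, both $x$-monotone and sharing only the extreme vertices $v_1$ and $v_n$. Since the labels respect the $x$-order, this means that $v_1$ has both incident edges going to the right (to vertices of larger $x$-coordinate), $v_n$ has both incident edges going to the left, and every intermediate vertex $v_j$ with $2\le j\le n-1$ has exactly one incident edge going left and one going right.

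For the first bullet I would observe that an edge has a portion strictly to the left of $\ell_i$ precisely when its left endpoint lies in $\{v_1,\dots,v_{i-1}\}$ (the vertices with $x$-coordinate smaller than that of $v_i$); since no two vertices share an $x$-coordinate, being ``contained in $P\cap H_L(\ell_i)$'' is exactly this condition. Counting edges by their left endpoint and using the structural fact, $v_1$ contributes $2$ such edges and each of $v_2,\dots,v_{i-1}$ contributes $1$, giving $2+(i-2)=i$ edges. The second bullet is the mirror image: an edge is contained in $P\cap H_R(\ell_i)$ iff its right endpoint lies in $\{v_{i+1},\dots,v_n\}$, and counting by right endpoint gives $2+(n-1-i)=n-i+1$. (The two totals sum to $n+1$ rather than $n$ because the unique edge that $\ell_i$ crosses in its interior is counted on both sides.)

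For the third bullet I would take a set of $k$-modems illuminating $(P\cap H_L(\ell_i))\cup s$ and one illuminating $(P\cap H_R(\ell_i))\cup s$, where $s=P\cap\ell_i$, and claim their union illuminates~$P$. Every point $p\in P$ lies in one of the two parts; say $p$ is in the left part and is illuminated there by a modem~$M$, so the segment $\overline{pM}$ crosses at most $k$ edges of the left sub-polygon. The heart of the argument is to show that $\overline{pM}$ crosses exactly the same number of edges of~$P$. Since both $p$ and $M$ lie in the half-plane $H_L(\ell_i)$, which is convex, the segment $\overline{pM}$ stays in $H_L(\ell_i)$ and hence never crosses the auxiliary edge $s\subseteq\ell_i$; moreover, for the single edge of~$P$ split by~$\ell_i$, any intersection of $\overline{pM}$ with it must occur on its left portion, which is the corresponding edge of the sub-polygon, while every edge of~$P$ lying entirely in $H_L(\ell_i)$ is literally an edge of the sub-polygon. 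Thus crossings of $\overline{pM}$ with edges of~$P$ correspond one-to-one with its crossings of edges of the left sub-polygon, so $\overline{pM}$ crosses at most $k$ edges of~$P$ and $M$ illuminates~$p$ in~$P$; the symmetric argument handles points in the right part, and points on~$s$ are illuminated by either set.

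The main obstacle I anticipate is precisely this crossing-preservation step for the third bullet: I must verify carefully that introducing the auxiliary edge~$s$ (and perturbing its non-vertex endpoint to respect the $x$-distinctness convention) neither creates nor destroys crossings along any connecting segment used by a transferred modem. The convexity of the half-planes $H_L(\ell_i)$ and $H_R(\ell_i)$ is the tool that keeps each such segment on one side of~$\ell_i$, which is what makes the edge correspondence clean; the first two bullets are routine once the chain structure and the left/right-endpoint bookkeeping are in place.
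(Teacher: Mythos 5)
Your proof is correct: the endpoint bookkeeping on the two monotone chains gives exactly the counts $i$ and $n-i+1$, and the convexity argument for transferring illumination (a segment between two points of the closed half-plane $H_L(\ell_i)$ can only meet the split edge on its left portion and can never properly cross the auxiliary edge $s$) is sound. The paper states this as an Observation with no proof at all, treating it as immediate from the conventions it has just set up, so your write-up is simply the natural formalization of the intended argument rather than a different route.
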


Combining this simple splitting with Lemma~\ref{lem:vertex}, we can state the following:

\begin{proposition}\label{prop:2k+3}
Every $x$-monotone $(2k+3)$-gon~$P$ can be illuminated with a \mbox{$k$-modem} placed on $v_{k+2}$.
\end{proposition}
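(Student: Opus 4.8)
The plan is to place the $k$-modem on the vertex $v_{k+2}$, draw the vertical line $\ell_{k+2}$ through it, and analyze the left and right parts separately using the splitting machinery from Observation~\ref{obs:simplesplit}. By that observation, $P\cap H_L(\ell_{k+2})$ contains $k+2$ edges of $P$ and $P\cap H_R(\ell_{k+2})$ contains $(2k+3)-(k+2)+1=k+2$ edges of $P$. The key point is that $v_{k+2}$ is a vertex of each of these two parts (indeed, it is the common rightmost/leftmost vertex shared by both), so I can try to apply Lemma~\ref{lem:vertex} to each part independently, and then invoke the third bullet of Observation~\ref{obs:simplesplit} to conclude that all of $P$ is illuminated.

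First I would make the two parts into honest polygons by adding the auxiliary edge $s=P\cap\ell_{k+2}$, exactly as described in the paragraphs preceding Observation~\ref{obs:simplesplit}; with the slight perturbation of the non-vertex endpoint of $s$, each part becomes a simple polygon in which $v_{k+2}$ is a genuine vertex. Next I would count edges in each augmented part: $P\cap H_L(\ell_{k+2})$ has its $k+2$ original edges plus the single auxiliary edge $s$, for a total of $k+3$ edges, and likewise $(P\cap H_R(\ell_{k+2}))\cup s$ has $k+2+1=k+3$ edges. Thus each part is a $(k+3)$-gon with $v_{k+2}$ as one of its vertices, so Lemma~\ref{lem:vertex} guarantees that a $k$-modem at $v_{k+2}$ illuminates each augmented part. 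Finally, since both parts are illuminated, the third bullet of Observation~\ref{obs:simplesplit} yields that $P$ itself is illuminated.

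The main obstacle I anticipate is making the edge bookkeeping airtight, in particular ensuring that the auxiliary edge $s$ is correctly not counted as a barrier for the modem. Because the modem sits at $v_{k+2}$, which is an endpoint of $s$ in each augmented part, the convention stated just before Proposition~\ref{prop:anywhere} tells us that no ray in ${\cal R}(v_{k+2})$ crosses the edges incident to $v_{k+2}$; I must verify that $s$ is (or behaves as) one of the two edges incident to $v_{k+2}$ in each part, so that Lemma~\ref{lem:vertex} applies cleanly with $v_{k+2}$ playing the role of the modem vertex. The subtle case is whether $v_{k+2}$ is an upper-chain or lower-chain vertex, which determines how $s$ attaches; I would argue that in either configuration $s$ and one original edge of $P$ are the two edges of the part incident to $v_{k+2}$, so the count of $k+3$ edges and the non-barrier status of the incident edges both hold, completing the argument.
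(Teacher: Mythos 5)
Your proposal is correct and follows essentially the same route as the paper's own proof: split $P$ at the vertical line through $v_{k+2}$, observe that each half together with the auxiliary edge $s$ forms a $(k+3)$-gon having $v_{k+2}$ as a vertex, apply Lemma~\ref{lem:vertex} to each, and conclude via Observation~\ref{obs:simplesplit}. Your added care about how $s$ attaches at $v_{k+2}$ is a sound (if implicit in the paper) verification and does not change the argument.
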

\begin{proof}
  Let $m=k+2$, let $\ell_{m}$ be the vertical line through~$v_m$, and let~$s_m = P\cap\ell_{m}$.
  The polygon $(P\cap H_L(\ell_m))\cup s_m$ contains $k+2$ edges of~$P$ (Observation~\ref{obs:simplesplit}) plus the (auxiliary) edge~$s_m$.\footnote{%
  More exactly, the boundary of the polygon $(P\cap H_L(\ell_m))\cup s_m$ consists of the first $k+1$ edges of $P$ plus the part of the $(k+2)^{nd}$ edge of $P$ that is to the left of $\ell_m$ plus $s_m$.}
  The polygon $(P\cap H_R(\ell_m))\cup s_m$ contains $n-m+1=k+2$ edges of~$P$ (Observation~\ref{obs:simplesplit}) plus the (auxiliary) edge~$s_m$.
  By Lemma~\ref{lem:vertex}, $v_m$ is a valid \mbox{$k$-modem} position for both $(k+3)$-gons.
  Hence, by Observation~\ref{obs:simplesplit}, a \mbox{$k$-modem} at~$v_m=v_{k+2}$ illuminates the whole polygon~$P$. 
\end{proof}

Combining Observation~\ref{obs:simplesplit} and Proposition~\ref{prop:2k+3} we can derive a first bound for the number of \mbox{$k$-modems}.
It is easy to see that we can split each $x$-monotone $n$-gon into a (left) $x$-monotone $(2k+3)$-gon and a (right) $x$-monotone $(n-2k)$-gon using a vertical line~$\ell_{2k+2}$ through $v_{2k+2}$ (and the auxiliary edge~$s_{2k+2}$).
Recursively splitting the (right) $x$-monotone $(n-2k)$-gon directly leads to a (non-optimal) upper bound of $\big\lceil \frac{n-3}{2k} \big\rceil$ \mbox{$k$-modems} to illuminate a monotone $n$-gon.

We improve this bound in the remainder of this section, which is divided into three steps.
First we increase the size of a polygon that we can guarantee to illuminate with a single \mbox{$k$-modem}.
Then we introduce an efficient way to split a large polygon into smaller subpolygons.
And in the final step we combine the two previous steps to an upper bound on the Modem Illumination Problem and provide matching lower bound examples.

\subsection{Illuminating monotone polygons with a single $k$-modem}
\label{sec:gmono-single}

Before we can improve Proposition~\ref{prop:2k+3}, we need a solid basis of naming conventions.
Similar as before, let $v_m\in\{v_3,\ldots,v_{n-2}\}$ be a vertex of a monotone $n$-gon~$P$, let $\ell_m$ be the vertical line through~$v_m$, and let $f$ be the edge of~$P$ crossed by~$\ell_m$, with~$p_m$ being the point of this intersection.
Let $v^{f}_{L}$ and $v^{f}_{R}$ be the left and right end point of~$f$, respectively.
Let $f_L$ be the edge sharing $v^{f}_{L}$ with $f$ and let $f_R$ be the edge sharing $v^{f}_{R}$ with $f$.
Let $v^e_L$ and $v^e_R$ be the left and right neighbor of $v_m$ on $P$, respectively.
Let $e_L=v_m v^e_L$ and $e_R=v_m v^e_R$.
Observe that these naming conventions induce an ``$f$-side'' and an ``$e$-side'' of~$P$.

\begin{figure}[htb]
  \centering
  \includegraphics{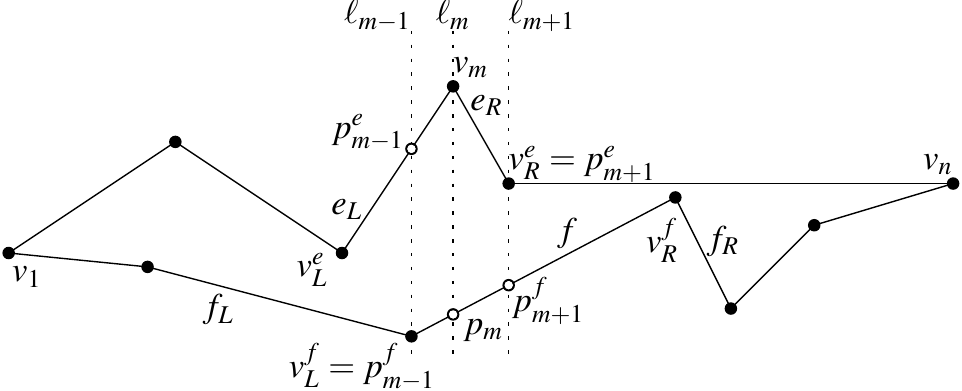}
  \caption{Example to illustrate the naming conventions for a monotone $n$-gon.}
  \label{fig:naming_convention}
\end{figure}

Let $\ell_{m-1}$ and $\ell_{m+1}$ be the vertical lines through $v_{m-1}$ and $v_{m+1}$, respectively.
Let $p^e_{m-1}$ and $p^f_{m-1}$ be the intersection of $\ell_{m-1}$ with $e_L$ and $f$, respectively.
Similarly, let $p^e_{m+1}$ and $p^f_{m+1}$ be the intersection of $\ell_{m+1}$ with~$e_R$ and~$f$, respectively.
Note that either $p^e_{m-1}$ or $p^f_{m-1}$ is $v_{m-1}$ and either~$p^e_{m+1}$ or~$p^f_{m+1}$ is~$v_{m+1}$.
Thus, for each of $\ell_{m-1}$ and $\ell_{m+1}$, one of the intersections is not a proper crossing with the respective edge, because the line passes through a vertex of~$P$.
Further note that $v_{m-1}$ is either $v^e_L$ or $v^{f}_{L}$, and that $v_{m+1}$ is either $v^e_R$ or $v^{f}_{R}$.

For simplicity and without loss of generality, let~$v_m$ be above~$f$.
Hence, the ``$e$-side'' of~$P$ is the upper side of~$P$ and the ``$f$-side'' is the lower one.
With upper and lower side of~$P$ we name the respective pieces of the boundary of~$P$ that result from removing the vertices $v_1$ and~$v_n$.
See \figurename~\ref{fig:naming_convention} for an illustrating example.

Let~$P$ be an $x$-monotone $(2k+5)$-gon and let $m=\frac{2k+5+1}{2}=k+3$.
We show that at least one out of $p^e_{m-1}$, $p^f_{m-1}$, $v_{m}$, $p_{m}$, $p^e_{m+1}$, and $p^f_{m+1}$ is a valid \mbox{$k$-modem} position for~$P$.
We distinguish the two cases of $k$ being even and $k$ being odd, because the proofs differ slightly for these cases.

\begin{lemma}\label{lem:2k+5even}
Let $k\geq0$ be even.
Every $x$-monotone $(2k+5)$-gon~$P$ can be fully illuminated by a \mbox{$k$-modem} placed on at least one position out of $v_m$, $p_m$, $p^f_{m-1}$, and $p^f_{m+1}$.
\end{lemma}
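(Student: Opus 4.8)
The plan is to fix the vertical line $\ell_m$ through $v_m$ (with $m=k+3$) and exploit the balanced split it induces. By Observation~\ref{obs:simplesplit}, $P\cap H_L(\ell_m)$ and $P\cap H_R(\ell_m)$ each contain exactly $k+3$ edges of~$P$, and the only edge shared by the two parts is~$f$. All four candidate positions are anchored to this line: $v_m$ and $p_m$ lie on $\ell_m$, while $p_m$, $p^f_{m-1}$, $p^f_{m+1}$ lie on~$f$. The elementary fact I would record first is that a ray emanating from any point of $\ell_m$ enters either $H_L(\ell_m)$ or $H_R(\ell_m)$, so by the split it crosses at most $k+3$ edges of~$P$; the whole task is then to shave this count down to the $k+1$ required by Observation~\ref{obs:rayk+1}.

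First I would analyse $q=p_m$ on edge~$f$. Since $q$ lies on $f$, edge $f$ is not a barrier, so a leftward or rightward ray crosses at most $k+2$ of the remaining edges. To remove the last edge I invoke the parity Observation~\ref{obs:rayio}: a ray from $q$ into the interior of~$P$ crosses an \emph{odd} number of edges of~$P$, and since it does not cross the free edge~$f$, that odd number is among the other edges; as $k$ is even, $k+2$ is even, so this count is at most $k+1$. This settles every interior ray, and the crux reduces to the \emph{exterior} rays (those leaving $P$ across the $f$-side) and the near-vertical rays along~$\ell_m$.

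The main obstacle is precisely these exterior rays: Observation~\ref{obs:rayio} only guarantees them an \emph{even} crossing number, which for even~$k$ could a priori equal $k+2$, one too many. To control them I would use the $x$-monotonicity of the lower side: an exterior ray from a point of the $f$-side descends below a monotone polyline and can be shown to meet only few edges on its side. Where $p_m$ alone does not suffice, I resort to the remaining three positions through a finite case analysis on whether the $x$-neighbours $v_{m-1}$ and $v_{m+1}$ of $v_m$ lie on the $e$-side or the $f$-side. When $v_{m-1}=v^f_L$ (resp.\ $v_{m+1}=v^f_R$) the point $p^f_{m-1}$ (resp.\ $p^f_{m+1}$) is a \emph{vertex} of~$P$, freeing a second incident edge; when instead $v_{m-1}=v^e_L$ (resp.\ $v_{m+1}=v^e_R$) one either shifts the modem to $p^f_{m-1}$ or $p^f_{m+1}$ to rebalance the split, or uses $v_m$ to free both $e_L$ and $e_R$ for the upward rays (again closing the gap by the same parity step).

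Concretely, placing the modem at $p^f_{m-1}$ puts it on $\ell_{m-1}$, where the left part carries only $k+2$ edges of~$P$ (Observation~\ref{obs:simplesplit}) and hence at most $k+1$ after discarding~$f$, while the extra edge now on the right is absorbed either by parity or by monotonicity; the mirror statement holds for $p^f_{m+1}$, whose right part carries only $k+2$ edges. Assembling the cases, in every configuration at least one of $v_m$, $p_m$, $p^f_{m-1}$, $p^f_{m+1}$ makes every ray cross at most $k+1$ edges, which by Observation~\ref{obs:rayk+1} certifies it as a valid \mbox{$k$-modem} position and, via the split of Observation~\ref{obs:simplesplit}, illuminates all of~$P$. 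I expect the bookkeeping of the exterior and near-vertical rays in the mixed cases to be the most delicate part; the parity trick for even~$k$ is what lets the interior rays fall into place cleanly, and is the reason this even-$k$ statement is separated from the odd-$k$ one.
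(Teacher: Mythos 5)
Your plan does share the paper's skeleton---split at $\ell_m$, kill interior rays by the parity argument (Observation~\ref{obs:rayio}, using that $k+2$ is even), and fall back on $p^f_{m-1}$ or $p^f_{m+1}$ when neither $v_m$ nor $p_m$ covers all of~$P$---but the decisive content of the proof is missing. First, you never pin down \emph{when} $v_m$ and when $p_m$ illuminates each half. The paper's Lemma~\ref{lem:2k+5subeven} does exactly this via supporting-line conditions: $v_m$ covers $P\cap H_L(\ell_m)$ if the supporting line of $e_L$ does not meet $f$ left of $\ell_m$, and $p_m$ covers it if the supporting line of $f$ does not meet $e_L$ (and at least one condition always holds, since they are mutually exclusive failures). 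Your substitute dichotomy---whether $v_{m\pm1}$ lies on the $e$-side or the $f$-side---is strictly coarser: the failure of $p_m$ for the left half \emph{implies} $v_{m-1}=v^f_L$, but not conversely, so a case analysis on the side of $v_{m\pm1}$ alone cannot decide which candidate to use. Relatedly, ``$x$-monotonicity of the lower side'' does not control exterior rays from $p_m$: precisely when the supporting line of $f$ crosses $e_L$, an exterior ray in ${\cal R}^o(p_m)$ does cross $e_L$, and that is exactly the failing configuration.

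Second, and more seriously, your accounting for the fallback position is wrong at the crux. From $p^f_{m-1}$ the right part $P\cap H_R(\ell_{m-1})$ contributes $k+4$ edges; discarding $f$ leaves $k+3$, which is \emph{odd} for even $k$, so parity alone still allows an interior ray to cross $k+3>k+1$ edges---``absorbed either by parity or by monotonicity'' does not close this gap. The paper closes it by first deriving, from the two assumed failures ($v_m$ invalid for the right half, $p_m$ invalid for the left half), the geometric consequence that $v^e_R$ lies to the right of the ray from $p^f_{m-1}$ through $v_m$; this is what guarantees that no ray from $p^f_{m-1}$ crosses both $e_L$ and $e_R$, and that exterior rays cross neither, reducing the count to $k+2$ so that parity can finish for interior rays. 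Without this link between the failure conditions and the geometry at $p^f_{m-1}$, the fallback claim is simply unjustified---indeed, if $v^e_R$ lay to the left of that ray, a single interior ray from $p^f_{m-1}$ could cross $e_L$, $e_R$, and $k+1$ further edges, so $p^f_{m-1}$ would not be a valid \mbox{$k$-modem} position. That missing implication is the heart of the paper's proof, and it is the genuine gap in yours.
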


\begin{lemma}\label{lem:2k+5odd}
Let $k\geq1$ be odd.
Every $x$-monotone $(2k+5)$-gon~$P$ can be fully illuminated by a \mbox{$k$-modem} placed on at least one position out of $v_m$, $p_m$, $p^e_{m-1}$, $p^f_{m-1}$, $p^e_{m+1}$, and $p^f_{m+1}$.
\end{lemma}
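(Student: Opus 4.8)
The plan is to work with Observation~\ref{obs:rayk+1} throughout: a point is a valid position precisely when no ray from it crosses $k+2$ or more edges, and I will call such a ray \emph{bad}. I set $m=k+3$, so that the vertical line $\ell_m$ through $v_m$ splits $P$ into a left and a right part, each containing exactly $k+3$ edges of $P$ (Observation~\ref{obs:simplesplit}). Since $v_m$ and $p_m$ both lie on $\ell_m$, every ray from either of them stays entirely inside one of the two parts. At $v_m$ the two incident upper edges are free, and on each side only the incident upper edge fails to be crossable among the $k+3$ edges there; hence a ray from $v_m$ into one side crosses at most $k+2$ edges. The same bound holds at $p_m$, where now the edge $f$ is the free one. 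So the central candidates can only fail through a ray that attains this maximum of $k+2$ on one side.

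Next I would exploit the parity statement of Observation~\ref{obs:rayio} together with $k$ being odd, so that $k+2$ is odd. An outward ray crosses an even number of edges and therefore at most $k+1$, so it is never bad; an inward ray crosses an odd number, so a bad inward ray crosses \emph{exactly} $k+2$ edges, i.e.\ it pierces \emph{every} crossable edge on its side. This is the key reduction: a central candidate can fail only through an extremal inward ray that crosses all $k+2$ edges of one part, and such a ray is pinned into a narrow cone of directions controlled by the edges incident to $v_{m+1}$ (for a ray into the right part) or to $v_{m-1}$ (for a ray into the left part).

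I would then destroy these extremal rays by shifting the modem to the appropriate neighbour among $p^e_{m+1},p^f_{m+1}$ (right) and $p^e_{m-1},p^f_{m-1}$ (left). Suppose $v_m$ has a bad ray into the right part. Splitting instead at $\ell_{m+1}$ leaves only $k+2$ edges of $P$ in the right part (Observation~\ref{obs:simplesplit}), and I would free the one forced crossing nearest $\ell_m$ by choosing the neighbour on the chain that carries $v_{m+1}$: if $v_{m+1}=v^e_R$ lies on the upper chain I place the modem at $p^e_{m+1}=v_{m+1}$ (freeing two edges at that vertex), and if $v_{m+1}=v^f_R$ lies on the lower chain, or if the crossing to be saved is the right portion of $f$, I place it at $p^f_{m+1}$ (freeing $f$); either way the rightward count drops to $k+1$. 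The symmetric choice among $p^e_{m-1},p^f_{m-1}$ handles a bad ray into the left part, while $p_m$ and these same four neighbours cover the reflected case in which the extremal ray rises from the lower chain. Assuming all six positions fail would then force mutually incompatible extremal rays on a common side, giving the desired contradiction.

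The main obstacle I expect is the third step: showing rigorously that shifting to a neighbouring candidate eliminates the extremal bad ray \emph{without} introducing a fresh bad ray into the opposite part, and carrying out this bookkeeping uniformly across the four combinations (upper versus lower chain for $v_{m\pm1}$, left versus right for the direction of the offending ray). At a shifted, non-central position the clean ``$k+3$ edges per side'' split is replaced by the split at $\ell_{m\pm1}$, so I must re-run the parity count of Observation~\ref{obs:rayio} there and verify that the only candidate new bad ray is geometrically ruled out by the extremal configuration that caused the original failure. This is precisely where the two extra $e$-candidates, which are absent from the even case, are indispensable.
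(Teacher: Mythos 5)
Your setup is sound and matches the paper's: you split at $\ell_m$ with $m=k+3$, use Observation~\ref{obs:rayk+1}, and exploit the parity of Observation~\ref{obs:rayio} (with $k+2$ odd) to conclude that a central candidate $v_m$ or $p_m$ can only fail via an \emph{inward} ray crossing exactly $k+2$ edges, i.e.\ every crossable edge of one side. The idea of repairing such a failure by falling back to the four points on $\ell_{m\pm1}$ is also the paper's strategy. However, your third step is not a proof but a declaration of what would need to be proved, and it is precisely the hard part. Two concrete gaps: First, your claim that outward rays are ``never bad'' is only true at the central positions, where the per-side maximum is $k+2$; at a shifted position such as $p^e_{m+1}$, the part $P\cap H_L(\ell_{m+1})$ contains $k+4$ edges, and since $k+3$ is \emph{even} (as $k$ is odd), parity alone permits an outward ray crossing $k+3>k+1$ edges. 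One needs the extra geometric facts that such rays cannot cross $e_R$ (or $e_L$ on the mirrored side) before parity brings the count down to $k+1$, and inward rays need a separate argument entirely. Second, your selection rule --- shift toward the offending ray, choosing the neighbour on the chain carrying $v_{m+1}$ --- is aimed at fixing the side where the bad ray lives, but that side is fixed automatically (both $p^e_{m+1}$ and $p^f_{m+1}$ are vertices of the $(k+3)$-gon $\bigl(P\cap H_R(\ell_{m+1})\bigr)\cup s_{m+1}$, so Lemma~\ref{lem:vertex} applies). The actual difficulty is the \emph{opposite}, larger side, and your rule says nothing about it.

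The paper resolves this with machinery your proposal lacks: conditional validity criteria for $v_m$ and $p_m$ expressed through convexity/reflexivity of $v^e_L$, $v^f_L$, $v^e_R$, $v^f_R$ (with special handling when one of them is $v_1$ or $v_{2k+5}$), and, in the critical crossed case ($v_m$ fails for the right part, $p_m$ fails for the left part), a dichotomy on whether $v^e_R$ lies right or left of the ray from $p^f_{m-1}$ through $v_m$. In the first branch $p^f_{m-1}$ --- a \emph{left} neighbour --- repairs the \emph{right}-side failure, contrary to your shift-toward-the-ray heuristic; in the second branch ($v_m$ reflex) one shows $p^e_{m+1}$ works, via a further case split on $v^e_L$ reflex versus convex, where the convex case forces $v^e_L=v_1$ and uses the failure of $p_m$ to deduce that $v^f_L$ is convex. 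Without this case analysis (or an equivalent one), assuming all six positions fail does not visibly ``force mutually incompatible extremal rays'': you have not shown that failure of a shifted position implies any constraint at all, so no contradiction can be extracted. As it stands, the proposal identifies the correct framework and the correct obstacle, but the obstacle is the theorem.
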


We postpone the proofs of both lemmas until we have proven the necessary tools.
The following theorem simply summarizes the two lemmas and realizes the desired improvement over Proposition~\ref{prop:2k+3}.

\begin{theorem}\label{thm:2k+5}
For $k\geq0$ and $m=k+3$, every $x$-monotone $(2k+5)$-gon~$P$ can be illuminated by a \mbox{$k$-modem} placed on at least one position out of $v_m$, $p_m$, $p^e_{m-1}$, $p^f_{m-1}$, $p^e_{m+1}$, and $p^f_{m+1}$.
\end{theorem}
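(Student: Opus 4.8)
The plan is to prove Theorem~\ref{thm:2k+5} as a straightforward case distinction on the parity of~$k$, invoking the two preceding lemmas that have already isolated the relevant candidate positions. Since $m=k+3$ is fixed and the six candidate positions $v_m$, $p_m$, $p^e_{m-1}$, $p^f_{m-1}$, $p^e_{m+1}$, $p^f_{m+1}$ are all defined purely in terms of the monotone structure around~$v_m$ (via the naming conventions established above), no new geometric construction is needed at this level. The theorem merely unifies the two parity cases into a single statement, so the proof amounts to checking that each lemma's conclusion is subsumed by the assertion of the theorem.

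Concretely, I would first treat even $k\geq 0$: here Lemma~\ref{lem:2k+5even} guarantees that at least one of $v_m$, $p_m$, $p^f_{m-1}$, $p^f_{m+1}$ is a valid \mbox{$k$-modem} position for~$P$, and since each of these four positions already appears among the six named in the theorem, the claim holds. Next I would treat odd $k\geq 1$: here Lemma~\ref{lem:2k+5odd} guarantees a valid position among $v_m$, $p_m$, $p^e_{m-1}$, $p^f_{m-1}$, $p^e_{m+1}$, $p^f_{m+1}$, which is \emph{exactly} the six-element list of the theorem. Since every $k\geq 0$ is either even or odd and the two lemmas cover these ranges respectively, in all cases at least one of the six named positions is a valid \mbox{$k$-modem} position, which is precisely what the theorem asserts.

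The only genuine content at this level is the bookkeeping observation that the even-case candidate set $\{v_m,p_m,p^f_{m-1},p^f_{m+1}\}$ is contained in the odd-case candidate set, so that the union over the two parity cases collapses to the single six-element list; the two additional positions $p^e_{m-1}$ and $p^e_{m+1}$ are simply never needed when $k$ is even. Accordingly, the real difficulty of the result is deferred entirely to the two lemmas themselves, whose (postponed) proofs must establish validity through Observations~\ref{obs:rayk+1} and~\ref{obs:rayio}, bounding by $k+1$ the number of edges crossed by every ray from each candidate position and exploiting the parity constraint that interior rays cross an odd number and exterior rays an even number of edges. That ray-counting analysis, rather than the summary above, is where the main obstacle lies; here the argument is immediate once both lemmas are granted.
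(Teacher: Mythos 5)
Your proposal is correct and takes essentially the same approach as the paper: the paper explicitly states that Theorem~\ref{thm:2k+5} ``simply summarizes the two lemmas,'' i.e., it is obtained exactly as you describe, by splitting on the parity of~$k$ and observing that the even-case candidate set $\{v_m,p_m,p^f_{m-1},p^f_{m+1}\}$ of Lemma~\ref{lem:2k+5even} is contained in the six-position list, which coincides with the candidate set of Lemma~\ref{lem:2k+5odd} for odd~$k$. You are also right that all substantive work is deferred to the postponed proofs of those two lemmas.
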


In the proof of Proposition~\ref{prop:2k+3} we split the polygon in the middle.
To prove Lemmas~\ref{lem:2k+5even} and~\ref{lem:2k+5odd} (and therefore Theorem~\ref{thm:2k+5}) we consider three possible splitting lines, namely $\ell_{m-1}$, $\ell_m$, and $\ell_{m+1}$.
We will start by proving always valid \mbox{$k$-modem} positions for the leftmost and rightmost parts of~$P$.

\begin{lemma}\label{lem:2k+5always}
Let $k\geq0$ and let~$P$ be an $x$-monotone $(2k+5)$-gon.
The points $p^e_{m-1}$ and $p^f_{m-1}$ are valid \mbox{$k$-modem} positions for $P\cap H_L(\ell_{m-1})$, and $p^e_{m+1}$ and $p^f_{m+1}$ are valid \mbox{$k$-modem} positions for $P\cap H_R(\ell_{m+1})$.
\end{lemma}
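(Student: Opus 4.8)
The plan is to reduce each of the four claims to a single application of Lemma~\ref{lem:vertex} on an appropriately augmented $(k+3)$-gon. I focus on the left part $P\cap H_L(\ell_{m-1})$; the claim for the right part $P\cap H_R(\ell_{m+1})$ then follows by the mirror-symmetric argument.

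First I would pin down the size of the augmented polygon. By the first item of Observation~\ref{obs:simplesplit}, $P\cap H_L(\ell_{m-1})$ contains exactly $m-1=k+2$ edges of~$P$. Adjoining the auxiliary edge $s_{m-1}=P\cap\ell_{m-1}$ closes this region into a polygon whose boundary consists of those $k+2$ contained edges together with $s_{m-1}$, so $(P\cap H_L(\ell_{m-1}))\cup s_{m-1}$ has exactly $(k+2)+1=k+3$ edges; that is, it is a $(k+3)$-gon. (Here $n=2k+5$ and $m=k+3$, so these counts are forced.)

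Next I would identify the two candidate positions as vertices of this $(k+3)$-gon. The auxiliary segment $s_{m-1}$ runs along $\ell_{m-1}$ from the upper chain of~$P$ down to the lower chain, and its two endpoints are precisely $p^e_{m-1}$, its intersection with $e_L$, and $p^f_{m-1}$, its intersection with $f$; as already noted in the naming conventions, exactly one of these two coincides with the vertex $v_{m-1}$ of~$P$. Since $s_{m-1}$ is the rightmost edge of the augmented polygon and meets the adjacent original edges at these endpoints, both $p^e_{m-1}$ and $p^f_{m-1}$ are genuine vertices of $(P\cap H_L(\ell_{m-1}))\cup s_{m-1}$. Applying Lemma~\ref{lem:vertex} to this $(k+3)$-gon, a $k$-modem placed on either vertex illuminates it; by the definition of illumination for $P\cap H_L(\ell)$, this means exactly that $P\cap H_L(\ell_{m-1})$ is illuminated. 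Hence $p^e_{m-1}$ and $p^f_{m-1}$ are valid $k$-modem positions for $P\cap H_L(\ell_{m-1})$. The symmetric count via the second item of Observation~\ref{obs:simplesplit}, namely $n-(m+1)+1=k+2$ edges in $P\cap H_R(\ell_{m+1})$, yields the analogous conclusion for $p^e_{m+1}$ and $p^f_{m+1}$.

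The only real obstacle is the bookkeeping that underlies the two preceding paragraphs: verifying that the augmented polygon has exactly $k+3$ edges and that the candidate points are corner vertices rather than interior points of an edge. This is precisely where the standing convention that no two vertices of~$P$ share an $x$-coordinate does the work, since it forces $\ell_{m-1}$ to pass through $v_{m-1}$ while crossing the opposite chain properly, so that the upper and lower chains meet $\ell_{m-1}$ at two distinct points and $s_{m-1}$ is a bona fide edge with two well-defined endpoints. Once this is in place, the statement follows directly from Observation~\ref{obs:simplesplit} and Lemma~\ref{lem:vertex} with no further computation.
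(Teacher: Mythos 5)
Your proof is correct and follows essentially the same route as the paper: close off $P\cap H_L(\ell_{m-1})$ (resp.\ $P\cap H_R(\ell_{m+1})$) with the auxiliary edge $s_{m-1}$ (resp.\ $s_{m+1}$) to obtain a $(k+3)$-gon whose vertices include the two candidate points, then invoke Lemma~\ref{lem:vertex}. The extra bookkeeping you supply---that the endpoints of the auxiliary edge are exactly $p^e_{m-1}$ and $p^f_{m-1}$, and that the perturbation convention keeps everything consistent---is the same detail the paper handles with its closing remark about perturbing an endpoint of $s_{m-1}$ and $s_{m+1}$.
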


\begin{figure}[htb]
  \centering
  \includegraphics{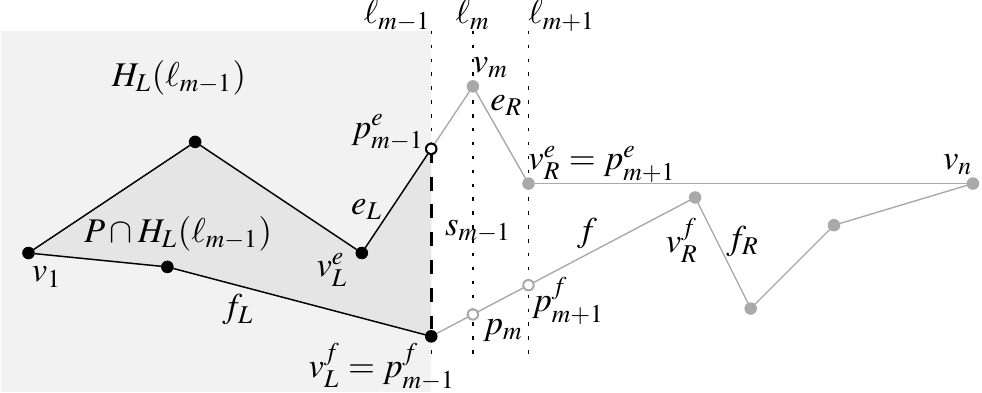}
  \caption{Example for the proof of Lemma~\ref{lem:2k+5always}. 
    The gray shaded area indicates $H_L(\ell_{m-1})$, where $P\cap H_L(\ell_{m-1})$ is shaded slightly darker.
    The bold dashed edge depicts the artificial edge~$s_{m-1}$.}
  \label{fig:2k+5always}
\end{figure}

\begin{proof}
Recall that there are $k+2$ edges of~$P$ contained in $P\cap H_L(\ell_{m-1})$.
Adding $s_{m-1}=P\cap\ell_{m-1}$ to $P\cap H_L(\ell_{m-1})$ as a $(k+3)$rd edge results in a polygon~$L$ that can be illuminated by a \mbox{$k$-modem} placed at any of its vertices, by Lemma~\ref{lem:vertex}.
As $p^e_{m-1}$ and $p^f_{m-1}$ are vertices of~$L$, both points are valid \mbox{$k$-modem} positions for $P\cap H_L(\ell_{m-1})$.
See \figurename~\ref{fig:2k+5always} for an example.

Similarly, the $(k+3)$-gon~$R$, composed of $P\cap H_R(\ell_{m+1})$ and $s_{m+1}=P\cap\ell_{m+1}$, can be illuminated by a \mbox{$k$-modem} placed at any of its vertices, by Lemma~\ref{lem:vertex}.
Hence, $p^e_{m+1}$ and $p^f_{m+1}$ are valid \mbox{$k$-modem} positions for $P\cap H_R(\ell_{m+1})$.

Recall that we can slightly perturb an end point of both~$s_{m-1}$ and~$s_{m+1}$ to maintain the convention of no two vertices sharing a common $x$-coordinate.
\end{proof}

In the following, we first consider the case of~$k$ being even and then the slightly more involved case of~$k$ being odd.
Before proving Lemma~\ref{lem:2k+5even}, we start with a conditional result for \mbox{$k$-modems} on~$\ell_m$.

\begin{lemma}\label{lem:2k+5subeven}
Let $k\geq0$ be even and let~$P$ be an $x$-monotone $(2k+5)$-gon.

The left part $P\cap H_L(\ell_{m})$ of~$P$ can be illuminated by a \mbox{$k$-modem} placed
at~$v_m$ if the supporting line of $e_L$ does not intersect $f$ to the left of $\ell_m$ and
at~$p_m$ if the supporting line of $f$ does not intersect $e_L$.
Further, at least one of $v_m$ and $p_m$ is a valid \mbox{$k$-modem} position for $P\cap H_L(\ell_{m})$ by these conditions.

The right part $P\cap H_R(\ell_{m})$ of~$P$ can be illuminated by a \mbox{$k$-modem} placed
at $v_m$ if the supporting line of $e_R$ does not intersect $f$ to the right of $\ell_m$ and
at $p_m$ if the supporting line of $f$ does not intersect $e_R$.
Further, at least one of $v_m$ and $p_m$ is a valid \mbox{$k$-modem} position for $P\cap H_R(\ell_{m})$ by these conditions.
\end{lemma}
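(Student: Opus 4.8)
The plan is to treat the two candidate positions and the two parts of~$P$ uniformly, reducing everything to counting edge crossings of rays and then exploiting the \emph{parity} of these counts together with the two supporting-line conditions.

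First I would fix the left part and set $L := (P\cap H_L(\ell_m))\cup s_m$. By Observation~\ref{obs:simplesplit} the left part contains $m=k+3$ edges of~$P$, so together with the auxiliary edge~$s_m$ the polygon~$L$ is a $(k+4)$-gon. By Observation~\ref{obs:rayk+1}, a \mbox{$k$-modem} placed at a point~$q$ illuminates~$L$ if and only if every ray in ${\cal R}(q)$ crosses at most $k+1$ of the $k+4$ edges of~$L$. For $q=v_m$ the two edges of~$L$ incident to~$q$ are $e_L$ and $s_m$, and for $q=p_m$ they are $s_m$ and the left subedge of~$f$; in either case these two edges are never crossed, leaving exactly $k+2$ crossable edges. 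Here the assumption that~$k$ is even pays off: applying Observation~\ref{obs:rayio} to~$L$, every ray directed into the interior of~$L$ crosses an \emph{odd} number of edges and hence at most $k+1$ (because $k+2$ is even), while every ray directed to the outside of~$L$ crosses an \emph{even} number. Consequently the only way a ray can fail the bound is to be an outward ray that crosses \emph{all} $k+2$ non-incident edges.

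Next I would identify this critical outward ray. As a ray rotates around the vertex~$q$ of the $x$-monotone polygon~$L$, its crossing count changes only when the ray passes a vertex or becomes collinear with an edge, and the maximal outward count is attained in the limit where the ray grazes one of the two edges incident to~$q$ and continues through~$L$. For $q=v_m$ the only incident edge whose supporting line can re-enter~$L$ is~$e_L$ (the vertical line $\ell_m$ supporting~$s_m$ meets~$L$ only along~$s_m$), and this grazing ray picks up the last remaining lower edge, namely the subedge of~$f$, exactly when the supporting line of~$e_L$ re-enters~$L$ and meets the opposite chain within its $x$-range, that is, when it meets~$f$ to the left of~$\ell_m$. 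Symmetrically, for $q=p_m$ the relevant incident edge is~$f$, and its grazing ray reaches all $k+2$ edges exactly when the supporting line of~$f$ meets~$e_L$. This yields the two stated implications, and I expect this step---making precise that there is a unique extremal outward direction and that it attains $k+2$ crossings iff the supporting line crosses the opposite chain inside~$L$---to be the main obstacle, requiring care with the $x$-monotonicity of the two chains and with the perturbation of the free endpoint of~$s_m$ demanded by the no-equal-$x$ convention.

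Finally, for the dichotomy I would argue that the two forbidding hypotheses cannot hold simultaneously. Since $e_L$ and~$f$ lie on opposite chains, they are non-adjacent edges of the simple polygon and hence disjoint as segments; by the convention that no two edges on different chains are parallel, their supporting lines meet in exactly one point~$X$. If the supporting line of~$e_L$ met~$f$ to the left of~$\ell_m$ then $X\in f$, and if the supporting line of~$f$ met~$e_L$ then $X\in e_L$; both together would force $X\in e_L\cap f=\emptyset$, a contradiction. Hence at least one of the two conditions holds, so at least one of~$v_m$ and~$p_m$ is a valid \mbox{$k$-modem} position for the left part. The statement for the right part is completely symmetric, obtained by replacing $e_L$, $v^e_L$ and ``to the left of~$\ell_m$'' by $e_R$, $v^e_R$ and ``to the right of~$\ell_m$''.
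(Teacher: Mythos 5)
Your reduction to the $(k+4)$-gon $L$, the parity step (inward rays cross an odd number of the $k+2$ crossable edges, hence at most $k+1$, so only an outward ray crossing \emph{all} $k+2$ of them can fail), and your dichotomy via the unique intersection point $X$ of the two supporting lines are all correct and agree in substance with the paper's proof. The genuine gap is the claim carrying your second paragraph: that the maximal crossing count over outward rays is attained in the limit where the ray grazes an incident edge. That claim is false. For instance (say for $k=4$), let the upper chain of $L$ between $v^e_L$ and $v_1$ consist of two tall spikes reaching far above the supporting line $\lambda$ of $e_L$, separated by valleys that stay only slightly above $\lambda$, with $v_1$ below $\lambda$. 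The grazing ray then crosses exactly two edges (the first edge out of $v^e_L$ and the last edge descending to $v_1$), whereas an outward ray of intermediate direction, passing under the spike tips but over the valleys, crosses four chain edges (both counts are even, so parity is not violated). Thus the maximum is attained at an interior direction of the outward angular range, and your reduction of ``some outward ray crosses all $k+2$ edges'' to ``the grazing ray does'' does not go through. (The converse half of your ``exactly when''---that the grazing ray picks up all $k+2$ edges once $\lambda$ meets $f$---also fails, since edges lying entirely below $\lambda$ are never crossed by any outward ray; but that direction is not needed for the lemma.)

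Fortunately, no extremal analysis is needed: the implication you want has a direct separation proof, which is what the paper uses. Every outward ray from $v_m$ that enters $H_L(\ell_m)$ has direction strictly between the upward vertical and the direction of $e_L$, hence lies (apart from its apex $v_m$) strictly above $\lambda$. On the other hand, $p_m$ lies strictly below $\lambda$, being vertically below $v_m\in\lambda$ (and $\lambda$ is not vertical). So if $\lambda$ does not meet $f$ to the left of $\ell_m$, then by continuity the entire subedge of $f$ in $L$ stays strictly below $\lambda$, and no outward ray can cross it; every outward ray therefore misses at least one of the $k+2$ crossable edges, hence crosses at most $k+1$, and $v_m$ is a valid position for $P\cap H_L(\ell_m)$ by Observation~\ref{obs:rayk+1}, inward rays being handled by your parity step via Observation~\ref{obs:rayio}. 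Exchanging the roles of the two edges gives the claim for $p_m$: outward rays from $p_m$ into $H_L(\ell_m)$ lie strictly below the supporting line of $f$, while $e_L$ contains the point $v_m$ strictly above that line, so if that line does not meet $e_L$, the whole edge $e_L$ stays strictly above it and no outward ray from $p_m$ crosses $e_L$. The right part is symmetric. With this replacement your proposal becomes a correct proof, essentially identical to the paper's.
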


\begin{proof}
First we consider $v_m$.
By Observation~\ref{obs:simplesplit}, each ray in ${\cal R}(v_m)\cap( P\cap H_L(\ell_{m}) )$ as well as each ray in ${\cal R}(v_m)\cap( P\cap H_R(\ell_{m}) )$ crosses at most $k+3$ edges.
Further, each ray in ${\cal R}(v_m)$ crosses neither~$e_L$ nor~$e_R$ and thus crosses at most $k+2$ edges.
As $k+2$ is even, by Observation~\ref{obs:rayio}, each ray in ${\cal R}^i(v_m)$ crosses at most $k+1$ edges.
If the supporting line of $e_L$ does not intersect $f$ to the left of $\ell_m$, then no ray in ${\cal R}^o(v_m)\cap( P\cap H_L(\ell_{m}) )$ crosses~$f$.
Hence, by Observation~\ref{obs:rayk+1}, $v_m$ is a valid \mbox{$k$-modem} position for $P\cap H_L(\ell_{m})$ in this case.
If the supporting line of $e_R$ does not intersect $f$ to the right of $\ell_m$, then no ray in ${\cal R}^o(v_m)\cap( P\cap H_R(\ell_{m}) )$ crosses~$f$.
Hence, by Observation~\ref{obs:rayk+1}, $v_m$ is a valid \mbox{$k$-modem} position for $P\cap H_R(\ell_{m})$ in this case.

Now we consider $p_m$.
By Observation~\ref{obs:simplesplit} and because no ray in ${\cal R}(p_m)$ crosses~$f$ (which is an edge contained in both $P\cap H_L(\ell_{m})$ and $P\cap H_R(\ell_{m})$), each ray in ${\cal R}(p_m)$ crosses at most $k+2$ edges.
As $k+2$ is even, by Observation~\ref{obs:rayio}, each ray in ${\cal R}^i(p_m)$ intersects at most $k+1$ edges.
If the supporting line of~$f$ does not intersect~$e_L$, then no ray in ${\cal R}^o(p_m)$ crosses~$e_L$.
Hence, by Observation~\ref{obs:rayk+1}, $p_m$ is a valid \mbox{$k$-modem} position for $P\cap H_L(\ell_{m})$ in this case.
If the supporting line of~$f$ does not intersect~$e_R$, then no ray in ${\cal R}^o(p_m)$ crosses~$e_R$.
Hence, by Observation~\ref{obs:rayk+1}, $p_m$ is a valid \mbox{$k$-modem} position for $P\cap H_R(\ell_{m})$ in this case.

Note that the supporting line of~$f$ can intersect~$e_L$ only if the supporting line of~$e_L$ does not intersect~$f$ (to the left of $\ell_m$), and vice versa.
Thus, at least one of~$v_m$ and~$p_m$ is a valid \mbox{$k$-modem} position for $P\cap H_L(\ell_{m})$.
Similarly, the supporting line of~$f$ can intersect~$e_R$ only if the supporting line of~$e_R$ does not intersect~$f$ (to the right of~$\ell_m$), and vice versa.
Thus, at least one of $v_m$ and $p_m$ is a valid \mbox{$k$-modem} position for $P\cap H_R(\ell_{m})$.
\end{proof}

We combine the previous results to prove Lemma~\ref{lem:2k+5even}.

\begin{proof}[Proof of Lemma~\ref{lem:2k+5even}]
For each part $P\cap H_L(\ell_{m})$ and $P\cap H_R(\ell_{m})$ of $P$, at least one of $v_m$ and $p_m$ is a valid \mbox{$k$-modem} position by Lemma~\ref{lem:2k+5subeven}.
If at least one of these positions is a valid \mbox{$k$-modem} position for both $P\cap H_L(\ell_{m})$ and $P\cap H_R(\ell_{m})$, then the lemma is proven.

Thus assume without loss of generality that $v_m$ is not a valid \mbox{$k$-modem} position for $P\cap H_R(\ell_{m})$ and that $p_m$ is not a valid \mbox{$k$-modem} position for $P\cap H_L(\ell_{m})$.
(See \figurename~\ref{fig:evensplit}.) 
We prove that $p^f_{m-1}$ is a valid \mbox{$k$-modem} position for $P$ in this case.
Note that this case implies that $p^f_{m-1}=v_{m-1}=v^f_L$.
By symmetry, $p^f_{m+1}=v_{m+1}=v^f_R$ is a valid \mbox{$k$-modem} position for $P$ in the mirrored case ($v_m$ is not a valid \mbox{$k$-modem} position for $P\cap H_L(\ell_{m})$ and $p_m$ is not a valid \mbox{$k$-modem} position for $P\cap H_R(\ell_{m})$).

\begin{figure}[htb]
  \centering
  \includegraphics{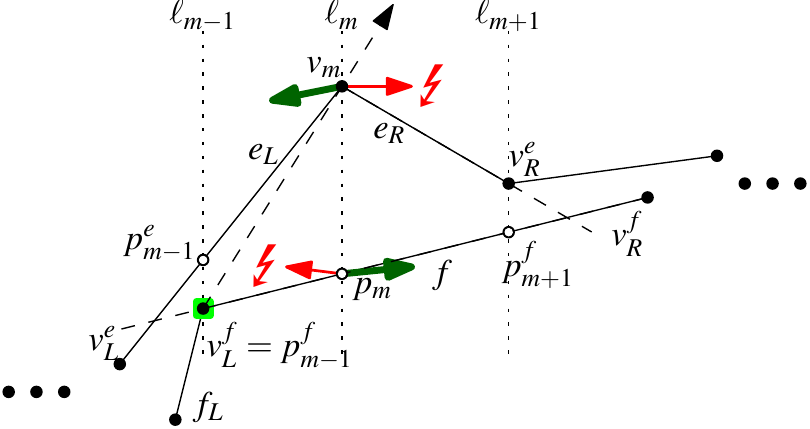}
  \caption{Example for $k$ even. By assumption, $v_m$ is not a valid \mbox{$k$-modem} position for $P\cap H_R(\ell_{m})$ and $p_m$ is not a valid \mbox{$k$-modem} position for $P\cap H_L(\ell_{m})$.}
  \label{fig:evensplit}
\end{figure}

By Lemma~\ref{lem:2k+5subeven}, if~$v_m$ is not a valid \mbox{$k$-modem} position for $P\cap H_R(\ell_{m})$, then the supporting line of~$e_R$ intersects~$f$ to the right of~$\ell_m$.
Further, if~$p_m$ is not a valid \mbox{$k$-modem} position for $P\cap H_L(\ell_{m})$, then the supporting line of~$f$ intersects~$e_L$.
Observe that this implies that $v^e_R$ lies to the right of the ray from $p^f_{m-1}$ towards (and through) $v_m$.

Therefore, each ray $r\in( {\cal R}(p^f_{m-1})\cap H_R(\ell_{m-1}) )$ that is crossing~$e_L$ cannot cross~$e_R$.
Further, if~$r$ is crossing~$e_L$, then~$r$ is in ${\cal R}^i(p^f_{m-1})$.
(Note that~$e_L$ is the only edge in $P\cap H_L(\ell_{m})$ that can be crossed by~$r$.
No ray~$r$ crosses~$f$.)

The point $p^f_{m-1}$ is a valid \mbox{$k$-modem} position for $P\cap H_L(\ell_{m-1})$ by Lemma~\ref{lem:2k+5always}.
Observe that we split~$P$ at~$\ell_{m-1}$.
By Observation~\ref{obs:simplesplit}, for $P\cap H_R(\ell_{m-1})$ there are in total $k+4$ edges to be considered (including~$f$ and (a part of)~$e_L$).
As no ray in ${\cal R}(p^f_{m-1})$ crosses~$f$, $k+3$ edges remain.

Each ray in $( {\cal R}^o(p^f_{m-1})\cap H_R(\ell_{m-1}) )$ crosses neither~$e_L$ nor~$e_R$, leaving at most $k+1$ edges to cross.
Consider each $r^i\in( {\cal R}^i(p^f_{m-1})\cap H_R(\ell_{m-1}) )$.
Ray~$r^i$ crosses at most one of~$e_L$ and~$e_R$, leaving at most $k+2$ edges to cross.
As $k+2$ is even, $r_i$ crosses at most $k+1$ edges by Observation~\ref{obs:rayio}.
Hence, all rays in ${\cal R}(p^f_{m-1})$ cross at most $k+1$ edges, implying that by Observation~\ref{obs:rayk+1} $p^f_{m-1}$ is a valid \mbox{$k$-modem} position for~$P$.
\end{proof}

With this proof, ``half of'' Theorem~\ref{thm:2k+5} is proven.
The other ``half'' is shown in the next two proofs, which have of the same structure as the two preceeding ones, but need slightly more involved arguments.

\begin{lemma}\label{lem:2k+5subodd}
Let $k\geq1$ be odd and let~$P$ be an $x$-monotone $(2k+5)$-gon.

The left part $P\cap H_L(\ell_{m})$ of~$P$ can be illuminated by a \mbox{$k$-modem} placed
at $v_m$ if $v^e_L$ is reflex or $v^f_L$ is convex and $v^f_L\neq v_1$, and by a \mbox{$k$-modem} placed
at $p_m$ if $v^f_L$ is reflex or $v^e_L$ is convex and $v^e_L\neq v_1$.
Further, at least one of $v_m$ and $p_m$ is a valid \mbox{$k$-modem} position for $P\cap H_L(\ell_{m})$ by these conditions.
If $p_m$ is a valid \mbox{$k$-modem} position for $P\cap H_L(\ell_{m})$ by these conditions and $v^e_L$ is to the left of the ray from $p^f_{m+1}$ towards (and through) $v_m$, then $p^f_{m+1}$ is a valid \mbox{$k$-modem} position for $P\cap H_L(\ell_{m+1})$.

The right part $P\cap H_R(\ell_{m})$ of~$P$ can be illuminated by a \mbox{$k$-modem} placed
at $v_m$ if $v^e_R$ is reflex or $v^f_R$ is convex and $v^f_R\neq v_{2k+5}$, and by a \mbox{$k$-modem} placed
at $p_m$ if $v^f_R$ is reflex or $v^e_R$ is convex and $v^e_R\neq v_{2k+5}$.
Futher, least one of $v_m$ and $p_m$ is a valid \mbox{$k$-modem} position for $P\cap H_R(\ell_{m})$ by these conditions.
If $p_m$ is a valid \mbox{$k$-modem} position for $P\cap H_R(\ell_{m})$ by these conditions and $v^e_R$ is to the right of the ray from $p^f_{m-1}$ towards (and through) $v_m$, then $p^f_{m-1}$ is a valid \mbox{$k$-modem} position for $P\cap H_R(\ell_{m-1})$.
\end{lemma}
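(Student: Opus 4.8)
The plan is to mirror the structure of the proof of Lemma~\ref{lem:2k+5subeven}, exploiting that the parity of $k$ has flipped. By Observation~\ref{obs:simplesplit}, both $P\cap H_L(\ell_m)$ and $P\cap H_R(\ell_m)$ contain $k+3$ edges of~$P$ (here $n=2k+5$ and $m=k+3$). Placing the modem at $v_m$, no ray in ${\cal R}(v_m)$ crosses the incident edges $e_L$ or $e_R$, so a ray into the left part crosses at most $k+2$ of the crossable edges; placing it at $p_m\in f$, no ray crosses $f$, again leaving at most $k+2$ crossable edges on either side. The crucial difference to the even case is that now $k+2$ is \emph{odd}. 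Hence, by Observation~\ref{obs:rayio}, the exterior rays (even crossing number) automatically cross at most $k+1$ edges and need no attention, whereas the interior rays (odd crossing number) are the dangerous ones: a priori such a ray might cross all $k+2$ crossable edges. The whole argument therefore reduces to ruling out, under the stated conditions, an interior ray that crosses every crossable edge.

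First I would treat $v_m$ and the left part. An interior ray from $v_m$ that attains the maximum $k+2$ must cross, in particular, every remaining edge of the upper sub-chain and the whole lower sub-chain left of $\ell_m$. I would argue that the local configuration at the two vertices adjacent to the ``corner'' of this part --- namely $v^e_L$ on the $e$-side and $v^f_L$ on the $f$-side --- forces at least one crossable edge to be missed by every interior ray. Concretely, if $v^e_L$ is reflex, then the cone of interior directions at $v_m$ cannot deliver a ray that grazes far enough along the upper chain to reach its leftmost edges; symmetrically, if $v^f_L$ is convex (and is not the leftmost vertex $v_1$, which is the degenerate case that must be excluded because the chain then terminates there), the lower chain bends away so that no single interior ray can sweep through all of its edges. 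In each case at least one crossable edge survives, bringing the interior bound down to $k+1$ and validating $v_m$ via Observation~\ref{obs:rayk+1}. The analysis for $p_m$ is the mirror image, with the roles of the $e$-side and $f$-side (and hence of $v^e_L$ and $v^f_L$, convex and reflex) interchanged, which is exactly why its condition is the complement of the one for $v_m$.

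To conclude ``at least one of $v_m$ and $p_m$ is valid,'' I would verify that the negation of the $v_m$-condition forces the $p_m$-condition. Negating ``$v^e_L$ reflex or ($v^f_L$ convex and $\neq v_1$)'' leaves $v^e_L$ convex together with $v^f_L$ reflex or $v^f_L=v_1$; in the first branch the $p_m$-condition holds through its disjunct ``$v^f_L$ reflex,'' and in the second through ``$v^e_L$ convex and $\neq v_1$,'' the only gap being the situation $v^e_L=v_1=v^f_L$, which is impossible since $P\cap H_L(\ell_m)$ must contain $k+3\geq 4$ edges. The right-part statements then follow verbatim by the left--right mirror symmetry of the construction, with $v_1$ replaced by $v_{2k+5}$.

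Finally, the two transfer clauses extend a valid position on $f$ from the half-plane $H_L(\ell_m)$ to the larger $H_L(\ell_{m+1})$ (respectively $H_R(\ell_{m-1})$). Here I would recount crossings for $p^f_{m+1}$ over $P\cap H_L(\ell_{m+1})$, which by Observation~\ref{obs:simplesplit} carries exactly one more edge; the hypothesis that $v^e_L$ lies to the left of the ray from $p^f_{m+1}$ through $v_m$ is precisely what guarantees that this single additional edge cannot be crossed together with the edges already under control, so the interior count stays at $k+1$. I expect the main obstacle to be this last geometric bookkeeping together with making rigorous the claim that the reflex/convex conditions force an interior ray to miss an edge: one must reason carefully about the interior cone at $v_m$ (or $p_m$) and about how a maximal ray would have to thread through both sub-chains, and one must not overlook the degenerate situations in which $v^e_L$, $v^f_L$, or their right-hand counterparts coincide with $v_1$ or $v_{2k+5}$ --- which is exactly what the ``$\neq v_1$'' and ``$\neq v_{2k+5}$'' provisos are there to handle.
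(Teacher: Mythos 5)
Your proposal follows essentially the same route as the paper's proof: the same parity argument (exterior rays are harmless because $k+2$ is odd, so only interior rays must be forced to miss one more edge), the same use of the local reflex/convex conditions at $v^e_L$ and $v^f_L$ to produce that miss, the same logical check that the two conditions cannot fail simultaneously (which would force $v^e_L=v^f_L=v_1$, impossible since the left part contains $k+3\geq 4$ edges), and the same crossing recount plus mutual-exclusion argument for the transfer clause, finishing the right part by symmetry. The only difference is one of precision rather than approach: where you say a chain ``bends away'' so that some edge ``survives,'' the paper pins down exactly which edge every interior ray misses (the second edge incident to $v^e_L$ when it is reflex; at most one of $f$ and $f_L$ when $v^f_L$ is convex and $\neq v_1$; and, for $p^f_{m+1}$, that crossing $e_R$ excludes $e_L$ or the edges at $v^e_L$) --- statements of essentially the same granularity that you defer to ``geometric bookkeeping.''
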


\begin{proof}
First we consider $v_m$.
Each ray in ${\cal R}(v_m)$ crosses neither~$e_L$ nor~$e_R$ and thus, by Observation~\ref{obs:simplesplit}, crosses at most $k+2$ edges.
As $k+2$ is odd, by Observation~\ref{obs:rayio}, each ray in ${\cal R}^o(v_m)$ crosses at most $k+1$ edges.
If $v^e_L$ is reflex, then no ray in ${\cal R}^i(v_m)$ crosses the other edge (besides~$e_L$) incident to~$v^e_L$.
If~$v^f_L$ is convex and not the first vertex of~$P$, then each ray in ${\cal R}^i(v_m)$ crosses at most one out of~$f$ and~$f_L$.
In both cases, by Observation~\ref{obs:rayk+1}, $v_m$ is a valid \mbox{$k$-modem} position for $P\cap H_L(\ell_{m})$.


Now we consider $p_m$, $p^f_{m-1}$, and $p^f_{m+1}$.
No ray in ${\cal R}(p_m)$, ${\cal R}(p^f_{m-1})$, or ${\cal R}(p^f_{m+1})$ crosses~$f$.
Thus, by Observation~\ref{obs:simplesplit}, each ray in ${\cal R}(p_m)$ crosses at most $k+2$ edges, and each ray in ${\cal R}(p^f_{m-1})\cap H_R(\ell_{m-1})$ or ${\cal R}(p^f_{m+1})\cap H_L(\ell_{m+1})$ crosses at most $k+3$ edges.
Each ray in ${\cal R}^o(p^f_{m-1})\cap H_R(\ell_{m-1})$ cannot cross~$e_L$ and likewise, each ray in ${\cal R}^o(p^f_{m+1})\cap H_L(\ell_{m+1})$ cannot cross~$e_r$.
As $k+2$ is odd, each ray in ${\cal R}^o(p_m)$, ${\cal R}^o(p^f_{m-1})\cap H_R(\ell_{m-1})$, and ${\cal R}^o(p^f_{m+1})\cap H_L(\ell_{m+1})$ cross at most $k+1$ edges by Observation~\ref{obs:rayio}.


If $v^f_L$ is reflex, then no ray in ${\cal R}^i(p_m)$ and ${\cal R}^i(p^f_{m+1})$ crosses~$f_L$.
If, in addition, $v^e_L$ is to the left of the ray from $p^f_{m+1}$ through $v_m$, then each ray in ${\cal R}^i(p^f_{m+1})$ that crosses~$e_R$ does not cross~$e_L$.
If $v^e_L$ is convex and not the first vertex of~$P$, then each ray in ${\cal R}^i(p_m)$ crosses at most one out of the two edges incident to $v^e_L$.
If in addition, $v^e_L$ is to the left of the ray from $p^f_{m+1}$ through $v_m$, then each ray in ${\cal R}^i(p^f_{m+1})$ either crosses~$e_R$ but none of the two edges incident to~$v^e_L$ or the ray does not cross~$e_R$ and crosses at most one out of the two edges incident to~$v^e_L$.
Summing up, if $v^f_L$ is reflex or if $v^e_L\neq v_1$ is convex, then each ray in ${\cal R}(p_{m})$ crosses at most $k+1$ edges and thus, $p_m$ is a valid \mbox{$k$-modem} position for $P\cap H_L(\ell_{m})$ by Observation~\ref{obs:rayk+1}.
If, in addition, $v^e_L$ is to the left of the ray from $p^f_{m+1}$ through $v_m$, then each ray in ${\cal R}(p^f_{m+1})\cap H_L(\ell_{m+1})$ crosses at most $k+1$ edges and thus, $p^f_{m+1}$ is a valid \mbox{$k$-modem} position for $P\cap H_L(\ell_{m+1})$ by Observation~\ref{obs:rayk+1}.


As $v^e_L$ as well as $v^f_L$ can either be convex or reflex and not both can be the first vertex of~$P$, at least one out of~$v_m$ and~$p_m$ is a valid \mbox{$k$-modem} position for~$P\cap H_L(\ell_{m})$ by the conditions of the lemma.

\smallskip

Altogether, this proves the first half of the lemma. The second half follows by symmetric arguments.  
%
\end{proof}

We now prove Lemma~\ref{lem:2k+5odd} which concludes the proof for Theorem~\ref{thm:2k+5}.

\begin{proof}[Proof of Lemma~\ref{lem:2k+5odd}]
For each part $P\cap H_L(\ell_{m})$ and $P\cap H_R(\ell_{m})$ of~$P$, at least one of~$v_m$ and~$p_m$ is a valid \mbox{$k$-modem} position by Lemma~\ref{lem:2k+5subodd}.
If at least one of these positions is a valid \mbox{$k$-modem} position for both $P\cap H_L(\ell_{m})$ and $P\cap H_R(\ell_{m})$, then the lemma is proven.

Thus assume without loss of generality that~$v_m$ is not a valid \mbox{$k$-modem} position for $P\cap H_R(\ell_{m})$ and that~$p_m$ is not a valid \mbox{$k$-modem} position for $P\cap H_L(\ell_{m})$.
(See \figurename~\ref{fig:oddsplit}.) 
We prove that at least one out of $p^f_{m-1}$ and $p^e_{m+1}$ is a valid \mbox{$k$-modem} position for~$P$ in this case.
By symmetry, at least one out of $p^f_{m+1}$ and $p^e_{m-1}$ is a valid \mbox{$k$-modem} position for~$P$ in the mirrored case ($v_m$ is not a valid \mbox{$k$-modem} position for $P\cap H_L(\ell_{m})$ and $p_m$ is not a valid \mbox{$k$-modem} position for $P\cap H_R(\ell_{m})$).

\begin{figure}[htb]
  \centering
  \includegraphics{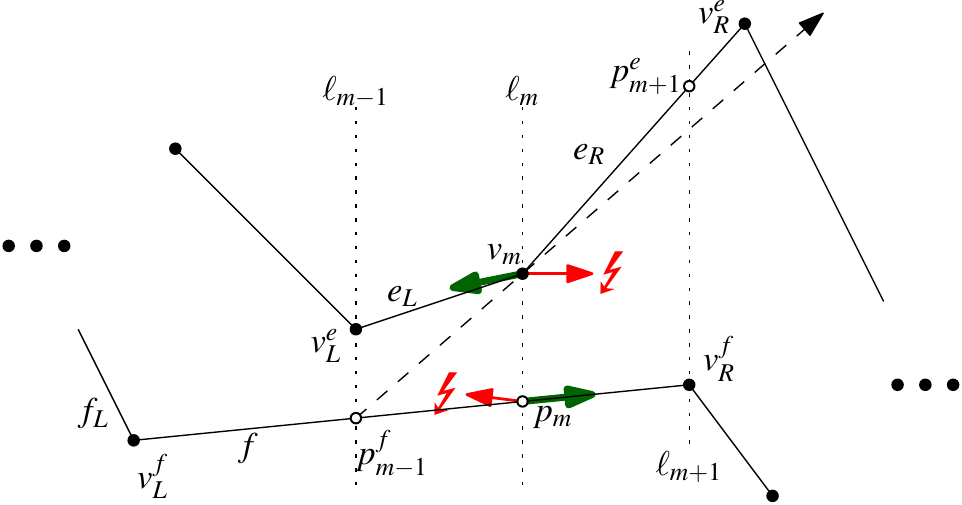}
  \caption{Example for $k$ odd. By assumption, $v_m$ is not a valid \mbox{$k$-modem} position for $P\cap H_R(\ell_{m})$ and $p_m$ is not a valid \mbox{$k$-modem} position for $P\cap H_L(\ell_{m})$.}
  \label{fig:oddsplit}
\end{figure}

If $v^e_R$ is to the right of the ray from $p^f_{m-1}$ through~$v_m$, then, by Lemma~\ref{lem:2k+5subodd}, $p^f_{m-1}$ is a valid \mbox{$k$-modem} position for $P\cap H_R(\ell_{m-1})$, and, by Lemma~\ref{lem:2k+5always}, $p^f_{m-1}$ is a valid \mbox{$k$-modem} position for $P\cap H_L(\ell_{m-1})$.

Thus, assume that $v^e_R$ is to the left of the ray from $p^f_{m-1}$ through~$v_m$.
Observe that this implies that $v_m$ is reflex.
We prove that $p^e_{m+1}$ is a valid \mbox{$k$-modem} position for~$P$ in this case.
By Lemma~\ref{lem:2k+5always} this is true for $P\cap H_R(\ell_{m+1})$.

For $P\cap H_L(\ell_{m+1})$ observe that each ray in ${\cal R}(p^e_{m+1})\cap H_L(\ell_{m+1})$ is crossing at most $k+3$ edges by Observation~\ref{obs:simplesplit} (and because no ray in ${\cal R}(p^e_{m+1})$ crosses~$e_R$).
We distinguish two cases:
\begin{enumerate}
\item \textbf{$v^e_L$ is reflex}:\\
Each ray in ${\cal R}^o(p^e_{m+1})\cap H_L(\ell_{m+1})$ crosses at most one of the two edges incident to~$v^e_L$.
As $k+2$ is odd, each ray in ${\cal R}^o(p^e_{m+1})\cap H_L(\ell_{m+1})$ crosses at most $k+1$ edges by Observation~\ref{obs:rayio}.
Each ray in ${\cal R}^i(p^e_{m+1})\cap H_L(\ell_{m+1})$ crosses neither of the two edges incident to~$v^e_L$ and thus, crosses at most $k+1$ edges.
\item \textbf{$v^e_L$ is convex}:\\
In this case $v^e_L=v_1$ and $v^f_L\neq v_1$ is convex as otherwise, by Lemma~\ref{lem:2k+5subodd}, $p_m$ would be a valid \mbox{$k$-modem} position for $P\cap H_L(\ell_{m})$, which would contradict our assumption.
Hence, each ray in ${\cal R}(p^e_{m+1})\cap H_L(\ell_{m+1})$ crosses at most one out of~$f$ and~$f_L$, leaving at most $k+2$ edges to cross.
As $k+2$ is odd, each ray in ${\cal R}^o(p^e_{m+1})\cap H_L(\ell_{m+1})$ crosses at most $k+1$ edges by Observation~\ref{obs:rayio}.
Further, no ray in ${\cal R}^i(p^e_{m+1})\cap H_L(\ell_{m+1})$ crosses~$e_L$.
\end{enumerate}
Therefore, $p^e_{m+1}$ is a valid \mbox{$k$-modem} position for~$P$ in both cases.
To summarize, we proved that if neither~$v_m$ nor~$p_m$ is a valid \mbox{$k$-modem} position for~$P$ and $p_m$ is not a valid \mbox{$k$-modem} position for $P\cap H_L(\ell_{m})$, then at least one out of~$p^f_{m-1}$ and~$p^e_{m+1}$ is a valid \mbox{$k$-modem} position for~$P$. 
In the symmetric case, if~$p_m$ is not a valid \mbox{$k$-modem} position for $P\cap H_R(\ell_{m})$, then symmetric arguments prove that at least one out of~$p^e_{m-1}$ and~$p^f_{m+1}$ is a valid \mbox{$k$-modem} position for~$P$.
\end{proof}

\subsection{Improved polygon splitting}
\label{sec:gmono-split}

As mentioned before, we will now provide an ``efficient'' way of splitting a monotone polygon~$P$ into smaller monotone polygons.
Unlike the simple splitting in Observation~\ref{obs:simplesplit}, the resulting parts need to be monotone polygons again, so that a recursive splitting is possible.
Further, the sum over the sizes of the smaller polygons should be as small as possible (hence, ``efficient'' splitting).
And, of course, the splitting has to ensure that if all small monotone polygons are illuminated, then~$P$ is illuminated.

\begin{lemma}[Splitting Lemma]\label{lem:splitting} 
  Let~$P$ be an $x$-monotone $n$-gon and let $3 \leq i \leq  n-1$.
  $P$ can be split into two $x$-monotone polygons~$P_L$ and~$P_R$, such that:
  \begin{itemize}
  \item $P_L$ has $i$ vertices,
  \item $P_R$ has $n-i+2$ vertices, and
  \item if $P_L$ is illuminated by modems placed in $P_L\cap H_L(\ell_{i-1})$ and $P_R$ is illuminated by modems placed in $P_R\cap H_R(\ell_{i})$, then also~$P$ is illuminated by those modems.
  \end{itemize}
\end{lemma}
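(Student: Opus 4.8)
The plan is to produce an explicit cut that wastes only a single extra vertex instead of the two wasted by the naive vertical split in Observation~\ref{obs:simplesplit}, and then to argue that the half-plane restrictions on the admissible modem positions make the illumination transfer faithfully. First I would fix, without loss of generality, that $v_i$ lies on the upper chain, and take $P_R$ to be the standard right part obtained by the vertical cut through $v_i$, i.e.\ $P_R=(P\cap H_R(\ell_i))\cup s_i$ with $s_i=P\cap\ell_i$, where $\ell_i$ crosses the opposite (lower) chain in the edge $f$ at a point $p_i$. By Observation~\ref{obs:simplesplit} this polygon has exactly $n-i+2$ vertices, namely $v_i,\dots,v_n$ together with $p_i$, and all of its admissible modems lie in $H_R(\ell_i)$. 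The real work is therefore in building $P_L$ with only $i$ vertices while still covering the trapezoidal strip $P\cap\{\,\ell_{i-1}\le x\le \ell_i\,\}$.

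To save the vertex I would \emph{straighten} the boundary at $v_i$: rather than closing $P_L$ by the vertical chord $s_i$ (which creates corners at both $v_i$ and $p_i$, giving $i+1$ vertices), I would close it by prolonging the edge $e_L$ arriving at $v_i$ until it meets the opposite chain in a single new vertex $c_L$, so that $v_i$ becomes an interior point of a straight edge and is no longer a corner. This gives $P_L$ the vertices $v_1,\dots,v_{i-1}$ and $c_L$, exactly $i$ of them, and its right boundary reaches beyond $\ell_i$, so the whole strip is covered. The prolongation stays inside $P$ precisely when $v_i$ is reflex; when $v_i$ is convex it fails, and here I would invoke the supporting-line dichotomy already used in Lemma~\ref{lem:2k+5subeven} (the line of $e_L$ meets $f$ on one side exactly when the line of $f$ does not meet $e_L$, and vice versa) to decide whether the cap around $v_i$ is absorbed by $P_L$ through the prolongation of $e_L$ or is instead handed to $P_R$ by a symmetric straightening. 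In every case $P_L\cup P_R\supseteq P$ and both parts are $x$-monotone. Finally I would restrict the admissible modems of $P_L$ to $H_L(\ell_{i-1})$, i.e.\ strictly to the left of the strip; this confinement is the crucial bookkeeping device.

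With the two parts in hand I would prove the illumination clause. Let $M$ be a modem in $P_L\cap H_L(\ell_{i-1})$ illuminating $P_L$, and let $p\in P$ be a point it is responsible for. Along the ray from $M$ through $p$, every edge of $P_L$ crossed before the closing edge is a genuine edge of $P$, and by Observation~\ref{obs:rayk+1} at most $k+1$ such edges are met; the only artificial edge is the closing edge of $P_L$. The point of requiring $M\in H_L(\ell_{i-1})$ is that any ray leaving $P_L$ through its closing edge is, at that moment, also leaving $P$ for good and cannot re-enter, exactly as observed for the last edge intersected by a ray in the discussion preceding Observation~\ref{obs:rayk+1}; hence crossing the cut never conceals a real edge of $P$. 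Combining this with the parity bookkeeping of Observations~\ref{obs:rayk+1} and~\ref{obs:rayio}, and with the symmetric statement for $P_R$ and modems in $H_R(\ell_i)$, every ray in $P$ meets at most $k+1$ edges, so $P$ is illuminated.

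I expect the construction step, not the transfer step, to be the main obstacle: arranging one cut that simultaneously realises the prescribed counts $i$ and $n-i+2$, keeps both parts $x$-monotone, and still covers the strip forces the reflex/convex case analysis at $v_i$ together with the supporting-line dichotomy, and it is here that one must verify that the cap near a convex $v_i$ is never left uncovered by either part. The subtlety in the transfer step is milder but must still be checked with care: one has to confirm that a modem confined to the outer half-plane really does see the closing edge of its part only as the place where the corresponding ray exits $P$, so that no edge of $P$ on the far side of the cut is ever undercounted.
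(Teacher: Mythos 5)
Your transfer step is essentially right, although for the wrong stated reason. A modem $M\in P_L\cap H_L(\ell_{i-1})$ and any point $p\in P\cap H_L(\ell_i)$ both lie in the half-plane $H_L(\ell_i)$, inside which $\partial P_L$ and $\partial P$ coincide as point sets; hence every edge of $P$ crossed by the segment $Mp$ lies inside a distinct edge of $P_L$, and the crossing count transfers (symmetrically for $P_R$). Your actual justification --- that a ray crossing the closing edge of $P_L$ ``is leaving $P$ for good'' --- is false: in your construction the closing edge is a segment through the interior of $P$, and a ray crossing it is typically still inside $P$ and goes on to cross further edges of $P$ to the right of $\ell_i$. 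This part is repairable, so it is not the main problem.

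The genuine gap is the construction of $P_L$. You prolong the single edge $e_L$ beyond $v_i$ until it first meets the opposite chain at $c_L$, and claim that $P_L$ has exactly the $i$ vertices $v_1,\dots,v_{i-1},c_L$. That count is wrong in general: your $P_L$ follows the lower chain of $P$ all the way to $c_L$, and the prolongation can travel far to the right, passing above arbitrarily many lower-chain vertices, before it first hits that chain; every such vertex becomes a vertex of $P_L$. (Take $e_L$ with small negative slope and a lower chain that zigzags well below it: the prolongation stays inside $P$ --- so $v_i$ being reflex does not help --- and meets the chain only after the chain has spent many vertices. If $e_L$ slopes upward, $c_L$ may not exist at all, since the ray can exit through the upper chain and pass above $v_n$.) With the count broken, the recursion in Theorem~\ref{thm:gmono-generalbound} collapses. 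The paper escapes exactly this trap by prolonging \emph{two} edges rather than one: the edge $e$ with left endpoint $v_{i-1}$ and the opposite-chain edge $f$ crossed by $\ell_{i-1}$ are both extended to their intersection point $x$ (which exists by the no-parallel-edges convention), and the lower chain of $P_L$ is truncated at the left endpoint $v^f_L$ of $f$, so no stray lower-chain vertices can enter; the vertex set is exactly $\{v_1,\dots,v_{i-1},x\}$. The case distinction is then not reflex/convex at $v_i$, but whether $x$ lies to the right or to the left of $v_{i-1}$; this determines whether $P_L$ or $P_R$ receives the extension vertex $x$, the other part being closed off by a chord ($p^f_{i-1}v_i$ or $v_{i-1}v_i$ in the two subcases). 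Your appeal to the supporting-line dichotomy of Lemma~\ref{lem:2k+5subeven} gestures in this direction, but as written it neither defines the fallback polygons nor repairs the vertex count, so the construction step of your proof does not go through.
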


\begin{figure}[htb]
  \centering
  \includegraphics[page = 1]{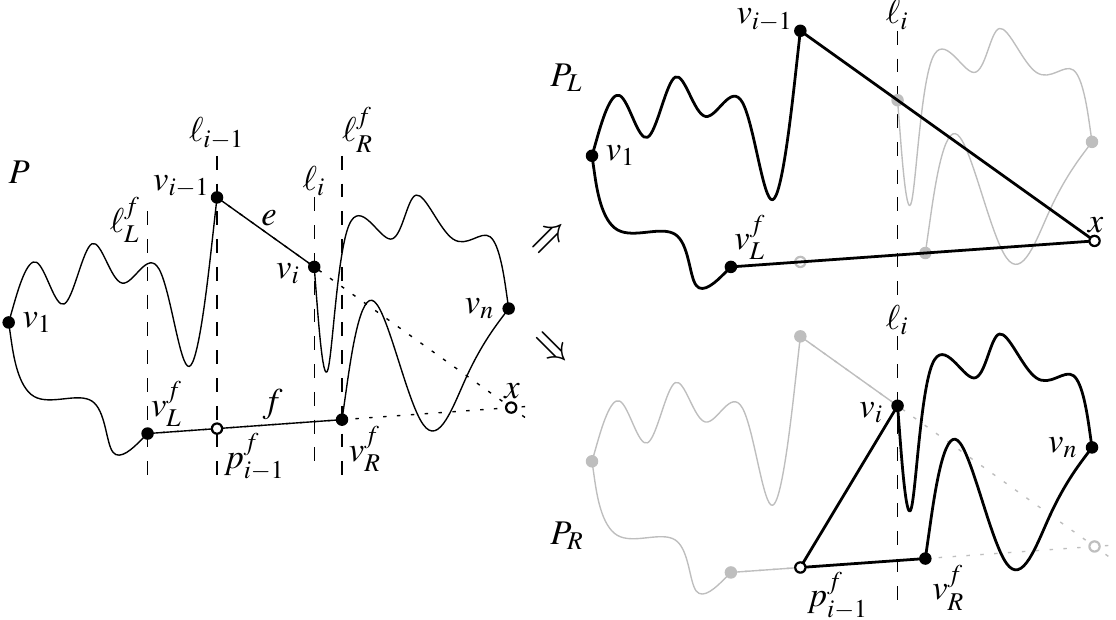}
  \caption{Illustrating the proof of Lemma~\ref{lem:splitting}, with~$x$ to the right of~$v_{i-1}$ and~$v_i$ lies above~$f$.}
  \label{fig:splittinglemma1}
\end{figure}

\begin{proof}
Let~$f$ be the edge of~$P$ intersected by the vertical line~$\ell_{i-1}$.
Without loss of generality, we assume that~$v_{i-1}$ lies above~$f$.
Let~$v^f_L$ and~$v^f_R$ be the left and right end point of~$f$, respectively.
Let~$e$ be the edge of~$P$ having $v_{i-1}$ as its left end point and let~$v^e_R$ be the other end point of~$e$.

Observe that the boundary~$\partial P$ of~$P$ can be partitioned at~$v_1$ and~$v_n$ into an upper polygonal chain~$\partial P^u$ and a lower polygonal chain~$\partial P^l$.
Both chains have~$v_1$ and~$v_n$ as end vertices, but are otherwise disjoint (vertex and edge disjoint).

Since we do not have parallel edges, the supporting lines of the edges~$e$ and~$f$ intersect at a point~$x$.

\begin{figure}[htb]
  \centering
  \includegraphics[page = 2]{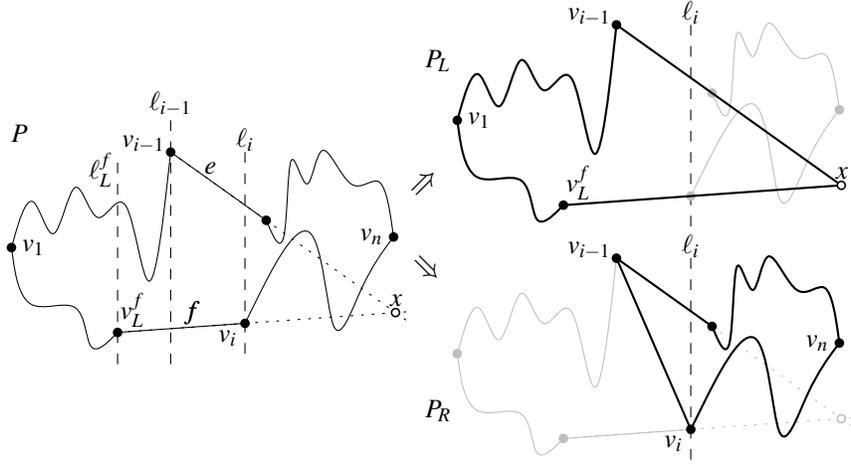}
  \caption{Illustrating the proof of Lemma~\ref{lem:splitting}, with~$x$ to the right of~$v_{i-1}$ and~$v_i$ lies below~$e$.}
  \label{fig:splittinglemma2}
\end{figure}

Assume first that $x$ is to the right of $v_{i-1}$ (see \figurenames~\ref{fig:splittinglemma1},~\ref{fig:splittinglemma2}).
We construct an upper polygonal chain $\partial P_L^u$ by joining $\partial P^u\cap H_L(\ell_{i-1})$ with the edge from $v_{i-1}$ to $x$.
Let $\ell^f_L$ be the vertical line through~$v^f_L$.
We construct a lower polygonal chain $\partial P_L^l$ by joining $\partial P^l\cap H_L(\ell^f_{L})$ with the edge from~$v^f_L$ to~$x$.
Apart from the common end vertices~$v_1$ and~$x$ these two chains are disjoint.
Hence, joining $\partial P_L^u$ and $\partial P_L^l$ we get the polygonal cycle $\partial P_L$, which is the boundary of the $x$-monotone polygon~$P_L$.
To define the $x$-monotone polygon~$P_R$ we distinguish two cases, depending on whether~$v_i$ lies above~$f$ or below~$e$:
\begin{enumerate}
\item \textbf{$v_i$ lies above~$f$} (\figurename~\ref{fig:splittinglemma1}):
    Let $p^f_{i-1}$ be the intersection of $\ell_{i-1}$ and~$f$.
    The upper polygonal chain $\partial P_R^u$ is the edge from~$p^f_{i-1}$ to~$v_i$ joined with $\partial P^u\cap H_R(\ell_{i})$.
    Let $\ell^f_R$ be the vertical line through~$v^f_R$.
    The lower polygonal chain $\partial P_R^l$ is the edge from~$p^f_{i-1}$ to~$v^f_R$ joined with $\partial P^u\cap H_R(\ell^f_{R})$.
\item \textbf{$v_i$ lies below~$e$} (\figurename~\ref{fig:splittinglemma2}):
    The upper polygonal chain $\partial P_R^u$ is $\partial P^u\cap H_R(\ell_{i-1})$.
    The lower polygonal chain $\partial P_R^l$ is the edge from~$v_{i-1}$ to~$v_i$ joined with $\partial P^u\cap H_R(\ell_{i})$.
\end{enumerate}

Observe that $P_L\cap H_L(\ell_i)=P\cap H_L(\ell_i)$ and $P_R\cap H_R(\ell_i)=P\cap H_R(\ell_i)$.
Therefore, if $P_L$ is illuminated by a modem placed in $P_L\cap H_L(\ell_{i-1})$ and $P_R$ is illuminated by a modem placed in $P_R\cap H_R(\ell_{i})$, then~$P$ is illuminated.

It is easy to see that the resulting polygons~$P_L$ and~$P_R$ are indeed $x$-monotone in all cases.
Further, note that also the convention that no two vertices of a polygon have the same $x$-coordinate is respected by both~$P_L$ and~$P_R$.

Finally,~$P_L$ is an $i$-gon because it contains $i-1$ vertices in $P_L\cap H_L(\ell_{i-1})$ plus one vertex to the right of (and excluding) $\ell_{i-1}$.
Similarly,~$P_R$ is an $(n-i+2)$-gon because it contains $n-i+1$ vertices in $P_R\cap H_R(\ell_{i})$ plus one vertex to the left of (and excluding) $\ell_{i}$.

The situation where $x$ is to the left of $v_{i-1}$ is symmetric and hence can be reasoned in essentially the same way.
\end{proof}

Note that this splitting breaks an $x$-monotone $n$-gon~$P$ into two smaller $x$-monotone polygons,~$P_L$ and~$P_R$, which are not necessarily subpolygons of~$P$.
But, $P\subseteq ( P_L\cup P_R )$.
Further, the restrictions on the placement for modems together with the restrictions on the splitting position~$i$ ensure, which the Splitting Lemma can be applied recursively on the smaller polygons.

\subsection{Illuminating arbitrarily large monotone polygons}
\label{sec:gmono-main}

We proved that a single \mbox{$k$-modem} can illuminate an $x$-monotone $(2k+5)$-gon.
In addition we provided an efficient way to break a ``big'' $x$-monotone $n$-gon into smaller $x$-monotone pieces.
We now combine both results to prove one of our main results presented in the following theorem.

\begin{theorem}\label{thm:gmono-generalbound}
Every $x$-monotone $n$-gon can be illuminated with (at most) $\lceil \frac{n-2}{2k+3}\rceil$~\mbox{$k$-modems}, and there exist $x$-monotone $n$-gons that require at least $\lceil \frac{n-2}{2k+3}\rceil$ \mbox{$k$-modems} to be illuminated.
\end{theorem}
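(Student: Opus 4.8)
The plan is to prove the two halves of Theorem~\ref{thm:gmono-generalbound} separately: first the upper bound via a recursive application of the Splitting Lemma together with Theorem~\ref{thm:2k+5}, and then a matching lower bound by explicitly constructing a ``hard'' family of monotone polygons.

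\textbf{Upper bound.} First I would set up the recursion. Given an $x$-monotone $n$-gon~$P$, if $n\leq 2k+5$ then a single \mbox{$k$-modem} suffices: by Theorem~\ref{thm:2k+5} (padding~$P$ with extra collinear or trivial vertices up to size $2k+5$ if $n<2k+5$, which does not change the illumination requirement) one \mbox{$k$-modem} illuminates~$P$. For $n>2k+5$, I would apply the Splitting Lemma~\ref{lem:splitting} with the split index chosen so that the left part~$P_L$ is exactly a $(2k+5)$-gon; that is, take $i=2k+5$. This produces a left $x$-monotone $(2k+5)$-gon~$P_L$, illuminable by one \mbox{$k$-modem} placed (per Theorem~\ref{thm:2k+5}) at one of $v_m, p_m, p^e_{m\pm1}, p^f_{m\pm1}$, and a right $x$-monotone $(n-i+2)=(n-2k-3)$-gon~$P_R$. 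The crucial point I must verify is that the valid modem position guaranteed by Theorem~\ref{thm:2k+5} for~$P_L$ lies in $P_L\cap H_L(\ell_{i-1})$, as required by the third bullet of the Splitting Lemma so that illuminating the pieces illuminates~$P$; all six candidate positions lie at or to the left of $\ell_{m+1}=\ell_{k+4}$, well inside $H_L(\ell_{i-1})=H_L(\ell_{2k+4})$, so this holds. Applying the lemma recursively to~$P_R$ (whose modems are restricted to $P_R\cap H_R(\ell_i)$, again compatible with the next split), each split consumes $2k+3$ vertices in the sense that $P$ of size $n$ reduces to a right piece of size $n-(2k+3)$. Iterating, I get a bound of the form $1+\lceil\frac{n-(2k+5)}{2k+3}\rceil$ \mbox{$k$-modems} for the base and remaining recursion; I would then show this simplifies exactly to $\lceil\frac{n-2}{2k+3}\rceil$. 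The cleanest route is induction on $n$: assume the bound for all smaller monotone polygons, split off a $(2k+5)$-gon using one modem, and bound the rest by the inductive hypothesis, checking the ceiling arithmetic $1+\lceil\frac{(n-2k-3)-2}{2k+3}\rceil=\lceil\frac{n-2}{2k+3}\rceil$.

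\textbf{Lower bound.} For the matching lower bound I would construct an $x$-monotone $n$-gon that forces $\lceil\frac{n-2}{2k+3}\rceil$ \mbox{$k$-modems}. The natural construction is a ``comb'' or spiked monotone polygon: a long monotone corridor with a sequence of deep narrow spikes (prongs) arranged so that each spike can only be illuminated by a modem whose sightline into the spike does not have to penetrate more than $k$ walls, and so that no single \mbox{$k$-modem} can illuminate the tips of two different spikes simultaneously. The idea is to make each spike contribute $2k+3$ vertices and force one dedicated modem, since any modem illuminating the tip of one spike would need to cross more than $k$ edges to reach the tip of another. I would argue the lower bound by identifying a set of ``hard-to-see'' points (spike tips), one per spike, such that for any point~$q$ in~$P$ at most one spike tip is illuminated by a \mbox{$k$-modem} at~$q$; an averaging/counting argument then forces at least as many modems as spikes, and choosing the number of spikes as $\lceil\frac{n-2}{2k+3}\rceil$ completes the bound. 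The main geometric content is designing the spikes so the edge-crossing count from any position to a second spike tip provably exceeds $k+1$, invoking Observation~\ref{obs:rayk+1}.

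\textbf{Main obstacle.} The hardest part is the lower-bound construction and its crossing-count analysis: I must design the polygon precisely enough that every candidate modem position can illuminate at most one distinguished point, which requires controlling, for arbitrary placements (interior, boundary, and vertices, where incident edges do not count as barriers), the exact number of edges any ray crosses en route to each spike tip. The recursive upper-bound arithmetic is comparatively routine once the compatibility of the Theorem~\ref{thm:2k+5} modem positions with the Splitting Lemma's half-plane restrictions is confirmed; the real care goes into verifying that these restrictions are respected at every level of the recursion so the local illuminations genuinely assemble into a global one.
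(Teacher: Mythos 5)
Your upper-bound argument is essentially the paper's own proof: the paper also repeatedly applies the Splitting Lemma to cut off $x$-monotone $(2k+5)$-gons, illuminates each with one \mbox{$k$-modem} via Theorem~\ref{thm:2k+5}, and relies on exactly the compatibility you check, namely that the candidate positions $v_m, p_m, p^e_{m\pm1}, p^f_{m\pm1}$ (with $m=k+3$) lie between $\ell_{k+2}$ and $\ell_{k+4}$, hence not to the left of the piece's second vertex and not to the right of its penultimate vertex, as required by the half-plane restrictions of Lemma~\ref{lem:splitting}. Your ceiling arithmetic $1+\big\lceil\frac{(n-2k-3)-2}{2k+3}\big\rceil=\big\lceil\frac{n-2}{2k+3}\big\rceil$ is correct. (Your padding trick for the final piece with fewer than $2k+5$ vertices is a technicality the paper itself glosses over; it is not a problem.)

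The lower bound, however, has a genuine gap: you describe the desired properties of the construction but not the construction itself, and the version you sketch does not work as stated. In an $x$-monotone polygon a single "deep narrow spike" is forced to be a thin triangle contributing only two vertices (both chains must be $x$-monotone, so a spike cannot zigzag or fold to supply many blocking edges on its own); there is no way to make "each spike contribute $2k+3$ vertices" and simultaneously be the unit that one modem is dedicated to. The blocking that separates witness points must instead come from \emph{groups} of $\Theta(k)$ thin spikes placed between consecutive witness points: the paper uses groups of $k+1$ spikes separated by quadrilateral valleys for even~$k$, and groups of $k$ spikes separated by quadrilateral towers for odd~$k$ (with half-size end groups), placing one witness point in the middle spike of each group, so that each group accounts for exactly $2k+3$ vertices and $n=t(2k+3)-2k$. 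Getting this count right, and proving the illumination regions are pairwise disjoint, is delicate precisely because modems may sit on edges or vertices where incident edges do not count as crossings --- note, for instance, that in a plain comb without valleys a modem placed at the base vertex between two spikes reaches points in several neighboring spikes with very few crossings, which destroys disjointness if the witness spikes are spaced naively. Since this design (including the parity split and the final count $t=\big\lfloor\frac{n+2k}{2k+3}\big\rfloor=\big\lceil\frac{n-2}{2k+3}\big\rceil$) is the actual mathematical content of the matching lower bound, deferring it leaves that half of the theorem unproven.
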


\begin{proof}
For the upper bound let~$P$ be an $x$-monotone $n$-gon.
We iteratively apply the Splitting Lemma (Lemma~\ref{lem:splitting}) to split~$P$ into $t=\left\lceil \frac{n-2}{2k+3} \right\rceil$ $x$-monotone polygons with at most \mbox{$2k\!+\!5$} vertices each, as follows: 
Let~$R_0$ be~$P$.
For $i = 1,\ldots,t-1$, apply Lemma~\ref{lem:splitting} to~$R_{i-1}$ and obtain an $x$-monotone \mbox{$(2k\!+\!5)$-gon} $L_i$ and an $x$-monotone \mbox{$(n\!-\!i(2k\!+\!3))$-gon}~$R_i$.
Let~$L_t$ be the remaining $x$-monotone polygon $R_{t-1}$, which, by definition of $t$, has at most $2k+5$ vertices.
By Lemma~\ref{lem:splitting}, $R_{i-1}$ is illuminated if $L_i$ is illuminated by modems placed anywhere in it but not to the right of its penultimate point and $R_i$ is illuminated by modems placed anywhere in it but not to the left of its second point.
Hence, illuminating each of the obtained $x$-monotone $(2k+5)$-gons $L_1,\ldots,L_t$ with a modem placed not to the left of its second and not to the right of its penultimate point illuminates~$P$.
This is possible for $k\geq 0$ and each $x$-monotone $(2k+5)$-gon by Theorem~\ref{thm:2k+5}.

\begin{figure}[htb]
  \centering
  \includegraphics[page=2]{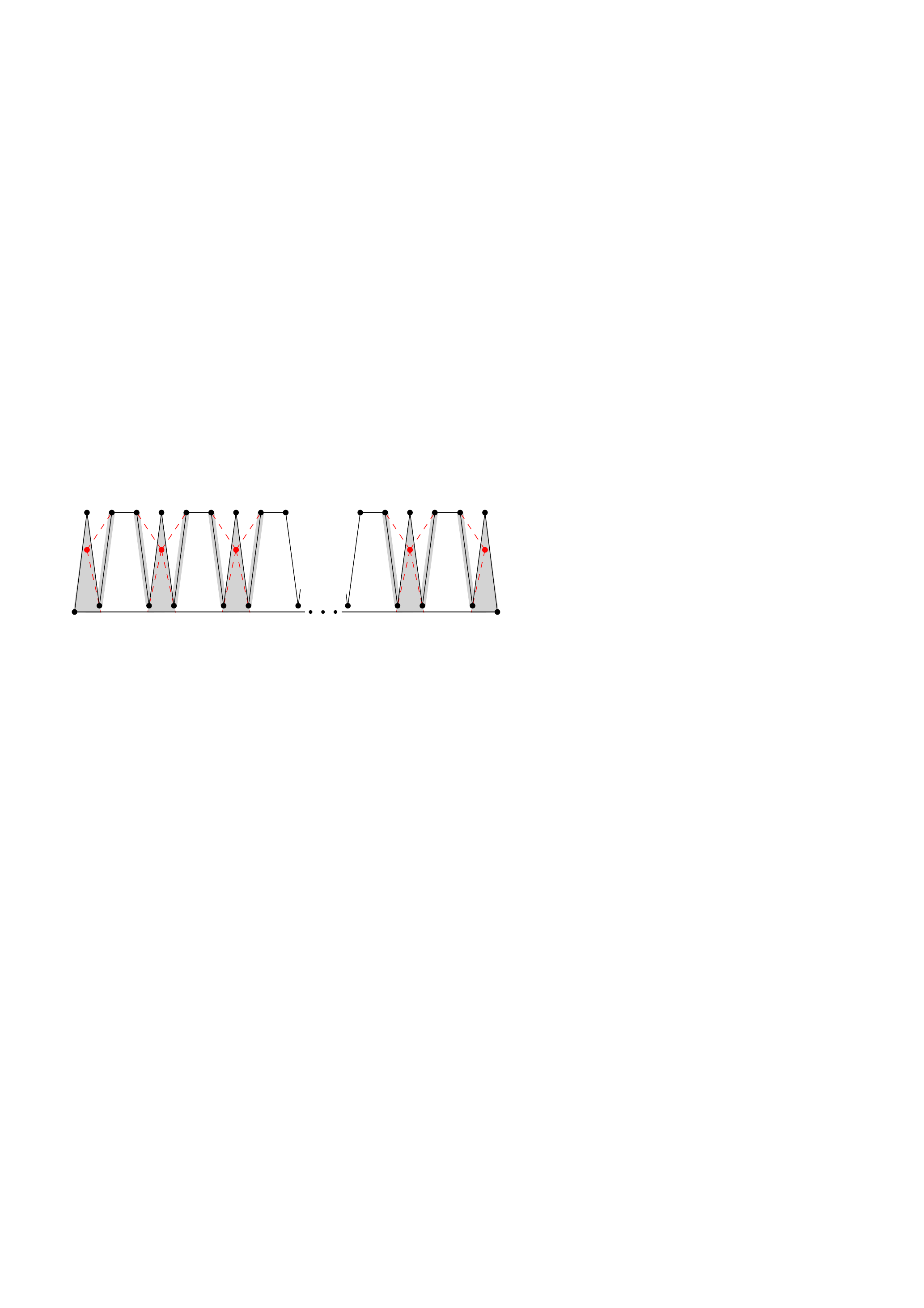} \\
  \vspace{3mm}
  \includegraphics[page=3]{lowerbound} \\
  \caption{Two $x$-monotone $n$-gons that require $\lceil\frac{n-2}{2k+3}\rceil$ \mbox{$k$-modems} each. 
    (Note that the end points of all edges can be slightly perturbed without changing the number of required modems, to respect the convention of no parallel edges.)
    For each of the points in~$W_k$ (depicted as little squares) inside~$P$, the gray region indicates from where it can be illuminated with a \mbox{$2$-modem} (top) or with a \mbox{$3$-modem} (bottom).
    Note that for the lower subfigure, part of these regions are single edges.  }
\label{fig:lower}
\end{figure}

For the lower bound we show how to construct $x$-monotone $n$-gons~$P$ for every value of~$k\geq 0$. 
The construction differs slightly depending on the parity of~$k$.
For each~$k$, the lower chain of~$P$ is a single edge connecting $v_1$ with $v_n$.

For $k$ even, the upper chain of $P$ consists of groups of thin triangular spikes that are separated by quadrilateral valleys. 
The first and the last group of spikes contains $\frac{k+2}{2}$ spikes each, while all other groups contain $k+1$ spikes. 
A sketch of the construction for $k=2$ is shown in \figurename~\ref{fig:lower}~(top).

For $k$ odd, the upper chain of $P$ consists of groups of thin triangular spikes that are separated by quadrilateral towers. 
The first and the last group of spikes contains $\frac{k+1}{2}$ spikes each, while all other groups contain~$k$ spikes. 
A sketch of the construction for $k=3$ is shown in \figurename~\ref{fig:lower}~(bottom).

Observe that each spike needs two vertices, each tower (for $k$ odd) needs three additional vertices, each valley (for $k$ even) needs one additional vertex, and one more vertex is needed to close the polygon. 
Thus, summing up for~$t$ groups of triangular spikes, $P$ has $n=t(2k+3) - 2k$ vertices, independent of whether $k$ is even or odd.


After giving the construction we prove that~$t$ \mbox{$k$-modems} are needed to illuminate~$P$.
We consider a set~$W_k$ of witness points in the interior of~$P$.
One point of~$W_k$ is in the leftmost spike of~$P$.
For each of the \mbox{$t\!-\!2$} groups of $k$~spikes (between two towers or valleys), one point of $W_k$ is in the middle spike.
One last point of $W_k$ is placed in the rightmost spike of~$P$.

Note that the areas from which any two of these points can be illuminated with a \mbox{$k$-modem} are disjoint.
(In the sketch in \figurename~\ref{fig:lower} the points of~$W_k$ are shown as small squares and the areas from which each such witness point can be illuminated is shown shaded.)
Thus, no pair of two points in~$W_k$ can be illuminated by a single \mbox{$k$-modem} and~$P$ requires at least $|W_k|=t$ \mbox{$k$-modems} to be illuminated. 

With $n$ vertices, $P$ can have $t = \left\lfloor \frac{n+2k}{2k+3} \right\rfloor = \left\lceil \frac{n-2}{2k+3} \right\rceil$ groups of spikes and consequently, needs at least that many \mbox{$k$-modems} to be illuminated. 
\end{proof}

Observe that this bound (upper and lower) matches the bound of Chv\'atal's Art Gallery Theorem~\cite{Chvatal}.
The ``watchmen'' there correspond to \mbox{$0$-modems}.
Hence, for $k=0$, the bound $\left\lfloor \frac{n+2k}{2k+3} \right\rfloor$ (from the last paragraph of the proof of Theorem~\ref{thm:gmono-generalbound}) matches the bound~$\left\lfloor\frac{n}{3}\right\rfloor$ from~\cite{Chvatal}.

\section{Illumination of monotone orthogonal polygons}\label{sec:monoortho}

Very often, orthogonal polygons are a sufficiently realistic scenario for placing modems inside buildings in order to cover the interior of the building with wireless reception.
In this section, we give matching lower and upper bounds on the number of \mbox{$k$-modems} needed and required to illuminate monotone orthogonal polygons.

Let~$P$ be an $x$-monotone orthogonal $n$-gon.
Recall that we label the vertices of~$P$, $v_1,\ldots,v_n$ with respect to their $x$-sorted order, such that $v_1$ is the leftmost and $v_n$ is the rightmost vertex of~$P$. In addition, among two vertices with the same $x$-coordinate, the lower vertex (lower $y$-coordinate) gets the higher label. 
For simplicity, we assume that at most two vertices have the same $x$-coordinate.

For validating possible \mbox{$k$-modem} positions, we adopt the respective definitions and observations from Section~\ref{sec:gmono}.
Observe that an orthogonal polygon has (at least) four 'extremal' edges: a topmost, bottommost, leftmost, and rightmost edge, which we denote by $e_t$, $e_b$, $e_l$, and $e_r$, respectively.\footnote{Note that $e_r$ and $e_l$ are unique. If there are more topmost (or bottommost) edges,  let $e_t$ be the leftmost  (or let $e_b$ be the rightmost) among them. }
It is easy to see that for any point $q\in P$ every ray in ${\cal R}(q)$ crosses at most one out of these four extremal edges.

\begin{observation}\label{obs:extremal}
For every orthogonal polygon~$P$ and every point $q\in P$, every ray in ${\cal R}(q)$ crosses at most one out of the four extremal edges of~$P$.
\end{observation}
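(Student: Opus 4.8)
The plan is to prove Observation~\ref{obs:extremal} by a direct geometric argument about how a ray interacts with the four extremal edges of an orthogonal polygon. The key structural fact I would exploit is that in an orthogonal polygon each edge is either horizontal or vertical, and the four extremal edges occupy the four ``boundary'' directions of the bounding box of~$P$: $e_t$ lies on the highest horizontal line touching~$P$, $e_b$ on the lowest, $e_l$ on the leftmost vertical line, and $e_r$ on the rightmost vertical line. Since $q\in P$, the point~$q$ lies weakly inside this bounding box, so it is weakly below the supporting line of~$e_t$, weakly above that of~$e_b$, weakly right of that of~$e_l$, and weakly left of that of~$e_r$.

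First I would fix an arbitrary ray $r\in{\cal R}(q)$ and classify it by the quadrant of its direction vector, i.e.\ by the signs of its horizontal and vertical components. A ray that points (say) up-and-to-the-right can only ever reach the supporting lines of~$e_t$ and~$e_r$ while moving monotonically away from the supporting lines of~$e_b$ and~$e_l$; hence it cannot cross $e_b$ or $e_l$ at all. So the only candidates for being crossed are the ``forward'' two extremal edges. The remaining step is to argue that $r$ cannot cross both of these two remaining extremal edges. The crucial point is that the supporting line of each extremal edge is a side of the bounding box, so the polygon~$P$ lies entirely in the closed half-plane on the interior side of that line. Once a ray starting inside the box crosses the supporting line of $e_t$ (the top side of the box), it has left the box and, by monotone movement in the vertical coordinate, can never return to the box; the same holds for $e_r$. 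Therefore $r$ can cross the supporting line of at most one of the two forward extremal edges while still passing through the polygon interior, and crossing an edge requires being at that edge's supporting line.

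To make the ``at most one'' argument precise I would parametrize $r$ as $q+t\,d$ for $t\ge 0$ and track the two relevant coordinate functions (one for each forward extremal edge), each of which is monotone in~$t$. Each supporting line of a forward extremal edge is reached at a unique parameter value $t^\star$, and beyond that value the ray is strictly outside the bounding box on that side, hence outside~$P$; so any intersection of $r$ with an actual extremal edge must occur at exactly that $t^\star$. Since reaching the top side and reaching the right side of the box happen at (generically distinct, but in any case) parameter values after which the ray is irrevocably outside the box, at most one of those two crossings can land on the boundary of~$P$ itself. I would handle axis-parallel rays (zero horizontal or vertical component) as degenerate subcases of the same quadrant analysis, noting they can meet at most one extremal edge trivially.

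The main obstacle I expect is the careful bookkeeping when the ray direction is axis-parallel or when~$q$ lies exactly on the supporting line of one of the extremal edges (for instance $q$ on $e_t$ itself), since then ``weakly below'' degenerates to ``on the line'' and one must check that the ray still cannot pick up a second extremal crossing; the convention from the paper that a modem on an edge does not treat that edge as a barrier must be reconciled with the counting here. I would dispose of these boundary cases by observing that lying on an extremal edge's supporting line only removes one candidate crossing rather than adding one, so the bound of at most one crossing is preserved a fortiori. The bulk of the argument is the monotonicity-of-coordinates observation, which is routine once the quadrant classification is set up.
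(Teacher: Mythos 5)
Your argument is correct, and it supplies exactly the reasoning the paper omits: the paper never proves this observation (it is dismissed with ``it is easy to see''), and your bounding-box formalization --- quadrant classification of the ray direction, so only the two ``forward'' extremal edges are candidates, plus the monotonicity fact that once the ray crosses one side of the bounding box it is irrevocably outside it, so the later supporting-line crossing occurs strictly outside the box and hence cannot lie on the actual edge --- is precisely the intended easy argument. The one loose end is the degenerate ray through a bounding-box corner shared by two extremal edges (e.g.\ a corner of a rectangle, where your two parameter values $t^\star$ coincide at a common endpoint of $e_t$ and $e_r$); there your ``in any case'' claim needs the convention that meeting an edge only at its endpoint, or passing through a vertex, is not a proper crossing --- a convention the paper itself relies on implicitly --- under which the bound of one still holds.
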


Using this simple observation, we can prove some first results.

\begin{lemma}\label{lem:k+3}
  Every orthogonal polygon~$P$ with at most $(k+3)$ vertices can be illuminated by a \mbox{$k$-modem} placed anywhere outside~$P$.
\end{lemma}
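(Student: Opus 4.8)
The plan is to reduce the claim to a crossing count on rays and then exploit the axis-aligned bounding box of~$P$, in the spirit of Observation~\ref{obs:extremal} but adapted to an exterior modem.

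First I would record the easy setup. An orthogonal polygon has at least four vertices, so the hypothesis that $P$ has at most $k+3$ vertices forces $k\geq1$ and guarantees that the four extremal edges $e_t,e_b,e_l,e_r$ are well defined and pairwise distinct: a horizontal extremal edge never coincides with a vertical one, while $e_t\neq e_b$ and $e_l\neq e_r$ by extremality. Fix an arbitrary point~$M$ outside~$P$. As in Observation~\ref{obs:rayk+1} --- whose underlying last-edge argument applies verbatim to an exterior point, since a ray still leaves~$P$ through its last crossed edge and never re-enters afterwards --- it suffices to prove that every ray $r\in{\cal R}(M)$ crosses at most $k+1$ edges of~$P$.

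The heart of the argument is a bounding-box observation that strengthens Observation~\ref{obs:extremal} for exterior points (where a ray may both enter and leave the box, so the interior bound of one extremal edge becomes two). Let~$B$ be the axis-aligned bounding box of~$P$. Each extremal edge is a subsegment of the corresponding side of~$B$: $e_t$ lies on the side $y=y_{\max}$, $e_b$ on $y=y_{\min}$, and $e_l,e_r$ on the two vertical sides. Hence, whenever~$r$ crosses an extremal edge it crosses $\partial B$. Since~$B$ is convex, a ray meets $\partial B$ in at most two points, and these lie on at most two of the four sides of~$B$; consequently~$r$ crosses at most two of the four extremal edges. (This holds uniformly whether~$M$ lies inside or outside~$B$; in the former case~$r$ leaves~$B$ only once and crosses at most one extremal edge.)

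Finally I would combine the two steps. At least two of the four distinct extremal edges are missed by~$r$, so~$r$ crosses at most $n-2\leq(k+3)-2=k+1$ edges of~$P$. By the exterior version of Observation~\ref{obs:rayk+1}, $M$ is a valid \mbox{$k$-modem} position, and since~$M$ was an arbitrary exterior point the lemma follows. The main obstacle I anticipate is purely technical: carefully justifying that Observation~\ref{obs:rayk+1} together with the last-edge saving remains valid when the modem sits outside~$P$, and handling the degenerate rays that pass through a vertex of~$P$ or through a corner of~$B$ shared by two extremal edges. I would dispatch these by a standard general-position/perturbation argument, observing that an infinitesimal perturbation of such a ray does not change which points of~$P$ the modem illuminates.
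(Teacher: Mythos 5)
Your proposal is correct, but it takes a detour that the paper's proof avoids. The paper argues directly on \emph{segments}: for any segment with one endpoint outside~$P$ and the other inside or on~$P$, traverse it starting from the endpoint in~$P$; since that endpoint lies in all four half-planes bounding the box of~$P$, the first extremal edge crossed (if any) takes the segment out of the bounding box for good, so the segment crosses at most \emph{one} of the four extremal edges and hence at most $(k+3)-3=k$ edges --- illumination follows immediately, with no ray machinery at all. You instead work with full rays from the exterior point, where the correct bound is indeed \emph{two} extremal edges (enter and exit the convex bounding box), giving at most $k+1$ crossed edges per ray, and you then recover the needed bound of~$k$ via the last-edge saving, i.e., an exterior version of Observation~\ref{obs:rayk+1}. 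Both routes rest on the same geometric fact (extremal edges lie on the bounding box, and a straight path exits each supporting half-plane at most once), and your degenerate-case handling by perturbation is at the same level of rigor as the paper's. What your version buys is a reusable exterior analogue of Observation~\ref{obs:extremal} (rays from outside cross at most two extremal edges); what it costs is the extra step of justifying the ray-to-segment reduction for an exterior modem, which the paper's segment-based argument renders unnecessary.
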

\begin{proof}
  Any line segment with one end point outside~$P$ and the other end point inside or on the boundary of~$P$ can cross at most one out of the four extremal edges of~$P$ and hence, crosses at most~$k$ edges of~$P$. 
\end{proof}

\begin{lemma}\label{lem:k+4}
Every orthogonal polygon~$P$ with at most $(k+4)$ vertices can be illuminated by a \mbox{$k$-modem} placed anywhere in the interior or on the boundary of~$P$.
\end{lemma}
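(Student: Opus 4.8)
The plan is to mirror the proofs of Lemma~\ref{lem:k+3} and Proposition~\ref{prop:anywhere}, relying on the characterization of valid positions in Observation~\ref{obs:rayk+1}. Since the \mbox{$k$-modem} is placed at a point $q$ in the interior or on the boundary of~$P$, $q$ is a point of~$P$ and the whole statement reduces to showing that every ray $r\in{\cal R}(q)$ crosses at most $k+1$ edges of~$P$. So I would fix an arbitrary $r\in{\cal R}(q)$ and bound the number of edges it crosses.

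First I would invoke Observation~\ref{obs:extremal}: the ray~$r$ crosses at most one of the four extremal edges $e_t$, $e_b$, $e_l$, and~$e_r$. The single point that needs care is that these four are genuinely four \emph{distinct} edges, so that bounding their total contribution to the crossing count by~$1$ really discards three edges from the tally. I would argue this directly from orthogonality: $e_t$ and~$e_b$ are horizontal while $e_l$ and~$e_r$ are vertical, so no extremal horizontal edge can coincide with an extremal vertical one; moreover $e_t$ lies strictly above~$e_b$ and $e_l$ strictly to the left of~$e_r$ in any nondegenerate polygon, so the two horizontal (respectively vertical) extremal edges differ as well. (Here I also use that an orthogonal polygon has at least four vertices, so all four extremal edges exist.)

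With the four extremal edges established as distinct, the count is immediate. The polygon~$P$ has at most $k+4$ edges, of which at least three extremal ones are not crossed by~$r$, leaving at most $(k+4)-3=k+1$ edges that~$r$ can cross. As~$r$ was arbitrary, every ray in~${\cal R}(q)$ crosses at most $k+1$ edges, and Observation~\ref{obs:rayk+1} then yields that a \mbox{$k$-modem} at~$q$ illuminates all of~$P$.

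I do not expect a genuine obstacle in this argument; the only place demanding attention is the distinctness of the four extremal edges, which is precisely what lets this bound gain one extra vertex over the weaker Lemma~\ref{lem:k+3} (where the modem sits outside and the count instead yields at most~$k$ crossed edges).
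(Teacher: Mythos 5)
Your proof is correct and follows essentially the same route as the paper: invoke Observation~\ref{obs:extremal} to discard three of the (at most $k+4$) edges for any ray from~$q$, leaving at most $k+1$ crossings, and conclude via Observation~\ref{obs:rayk+1}. The only difference is that you explicitly verify the distinctness of the four extremal edges, a point the paper leaves implicit (its footnote only disambiguates ties among topmost/bottommost edges), so this is a welcome but inessential elaboration.
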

\begin{proof}
For any point $q\in P$, by Observation~\ref{obs:extremal}, every ray $r\in{\cal R}(q)$ crosses at most $k+1$ edges.
Hence, $q$ is a valid \mbox{$k$-modem} position for~$P$, by Observation~\ref{obs:rayk+1}. 
\end{proof}

\begin{lemma}\label{lem:k+5}
For every orthogonal polygon~$P$ with at most $(k+5)$ vertices there exists a point~$q_l$ on its leftmost edge~$e_l$ and a point~$q_r$ on its rightmost edge~$e_r$, such that~$P$ can be illuminated by a \mbox{$k$-modem} placed at any of~$q_l$ and~$q_r$.
\end{lemma}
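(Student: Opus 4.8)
The plan is to place a candidate modem at a suitable point on the leftmost edge $e_l$ (and symmetrically on $e_r$) and bound the number of edges any ray from that point can cross. By Observation~\ref{obs:rayk+1}, it suffices to show that every ray in ${\cal R}(q_l)$ crosses at most $k+1$ edges. A polygon with at most $k+5$ vertices has at most $k+5$ edges, so the ``budget'' we must save is at least $4$ edges per ray. First I would note that if $q_l$ lies on $e_l$, then no ray in ${\cal R}(q_l)$ crosses $e_l$ itself (the modem sits on that edge, so it is not a barrier), which already removes one edge. By Observation~\ref{obs:extremal}, every ray crosses at most one of the four extremal edges $e_t,e_b,e_l,e_r$; combined with the fact that $e_l$ is never crossed, this should let me argue that the three \emph{other} extremal edges ($e_t$, $e_b$, $e_r$) can be accounted for efficiently.

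The key idea is to exploit orthogonality together with the parity observation (Observation~\ref{obs:rayio}). Because $P$ is orthogonal and $x$-monotone, the left endpoint $v_1$ may coincide with an extremal configuration, and by choosing $q_l$ on $e_l$ carefully (for instance at the point where a horizontal line through the ``middle'' of $e_l$ meets it, or simply anywhere on $e_l$ if $e_l$ is a single vertical edge adjacent to $v_1$), I would ensure that rays heading ``backwards'' (into ${\cal R}^o(q_l)$, starting to the outside) immediately leave $P$ and cross no interior edges, while forward rays are controlled by the extremal-edge count. The plan is: for each ray $r\in{\cal R}(q_l)$, observe $r$ crosses none of $e_l$; it crosses at most one of the remaining extremal edges by Observation~\ref{obs:extremal}; and the remaining non-extremal edges number at most $(k+5)-4 = k+1$. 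I would then combine these counts, using the parity argument where necessary to shave off the final edge, to conclude that no ray exceeds $k+1$ crossings.

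The main obstacle I anticipate is the bookkeeping that guarantees exactly the right savings: naively, ``at most one extremal edge'' plus ``$k+1$ non-extremal edges'' already gives $k+2$, which is one too many. So the crux is to show that whenever a ray does cross one of the three remaining extremal edges ($e_t$, $e_b$, or $e_r$), it must simultaneously \emph{fail} to cross one of the non-extremal edges, or else parity (Observation~\ref{obs:rayio}) forces the total down. Here I expect to use the specific geometry of $e_l$ being leftmost: a ray from $q_l\in e_l$ that travels rightward and eventually exits through $e_r$ sweeps the whole polygon, but the leftmost position of $e_l$ restricts which non-extremal edges can lie to its left. I would split into the cases according to whether the ray points into ${\cal R}^i(q_l)$ or ${\cal R}^o(q_l)$ and apply the odd/even crossing parity to reduce the worst-case bound from $k+2$ to $k+1$, exactly as was done for the $p_m$ and $v_m$ analyses in Lemmas~\ref{lem:2k+5subeven} and~\ref{lem:2k+5subodd}.

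Finally, the symmetric claim for $q_r$ on $e_r$ follows by the left-right mirror symmetry of the entire argument, so I would prove only the statement for $q_l$ in full and remark that $q_r$ is handled identically. The existence assertion (that \emph{some} such $q_l$ exists rather than \emph{every} point of $e_l$ working) gives welcome freedom: I can choose $q_l$ to be the endpoint of $e_l$ that is a vertex of $P$, which by Lemma~\ref{lem:vertex}-style reasoning guarantees the two edges incident to that vertex are also never crossed, providing an extra edge of slack that makes the $k+1$ bound comfortable. This choice is likely the cleanest route and sidesteps the delicate parity case analysis entirely, so I would present it as the primary construction and mention the parity-based alternative only if the vertex of $e_l$ happens to coincide with $v_1$ in a degenerate way.
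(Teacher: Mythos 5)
Your primary construction has a genuine gap, and your proposed fallback does not close it. Placing $q_l$ at an endpoint (vertex) of $e_l$ works only when the other edge incident to that vertex is \emph{not} extremal: then rays from $q_l$ cross neither $e_l$ nor that non-extremal edge, and at most one of the remaining extremal edges (Observation~\ref{obs:extremal}), giving at most $1+k=k+1$ crossings; in this case your choice coincides with the paper's. But if $e_l$ is incident to both $e_t$ and $e_b$ (e.g.\ an L-shaped polygon whose leftmost edge spans the full height), then at either endpoint of $e_l$ the two edges you save are \emph{both} extremal, and Observation~\ref{obs:extremal} already limits every ray to one extremal crossing, so the two savings overlap: the count only gives $1+(k+1)=k+2$. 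Your fallback for this situation is parity, but parity fails exactly where it is needed: the only case not already settled by Lemma~\ref{lem:k+4} is a polygon with exactly $k+5$ vertices, which forces $k$ to be odd (orthogonal polygons have an even number of vertices); then $k+2$ is odd, and by Observation~\ref{obs:rayio} rays in ${\cal R}^i(q_l)$ cross an \emph{odd} number of edges, so parity does not exclude $k+2$ crossings for inside-starting rays. It only rescues the rays in ${\cal R}^o(q_l)$. (Also, your claim that outside-starting rays ``immediately leave $P$ and cross no interior edges'' is false for non-convex polygons, though harmless since parity does handle those rays.) Finally, the degenerate case is not ``$q_l$ coinciding with $v_1$''; it is precisely the configuration where $e_l$ joins $e_t$ to $e_b$, and your argument as written breaks there.

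The paper closes this case with a geometric rather than a parity trick: take any non-extremal horizontal edge $f$ (one exists, since of the $(k+5)/2$ horizontal edges only two are extremal) and set $q_l=\ell(f)\cap e_l$, where $\ell(f)$ is the supporting line of $f$ --- an interior point of $e_l$, not a vertex. Since $q_l$ lies on $\ell(f)$ but not on $f$, no ray of ${\cal R}(q_l)$ can cross $f$; combined with Observation~\ref{obs:extremal} this gives, for inside- and outside-starting rays alike, at most one extremal edge plus at most $(k+1)-1=k$ non-extremal edges, i.e.\ at most $k+1$ crossings. Two smaller points you should also fix: the reduction of the ``at most $k+5$ vertices'' statement to the exact case via Lemma~\ref{lem:k+4} needs to be stated (it is what makes the parity of $k$ known at all), and monotonicity should not be invoked --- the lemma is stated and later used for general orthogonal polygons.
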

\begin{proof}
We prove the case of placing the \mbox{$k$-modem} at~$q_l$.
The other case, for~$q_r$, follows analogously.
If $P$ has at most $k+4$ vertices, then the statement follows by Lemma~\ref{lem:k+4}.
Hence assume that $P$ has $k+5$ vertices. Note that this implies that $k$ is odd (because an orthogonal polygon always has an even number of vertices).

If $e_l$ is incident to a horizontal edge that is not an extremal edge of~$P$, then let this horizontal edge be~$f$.
Otherwise, $e_l$ is incident to the topmost edge~$e_t$ and the bottommost edge~$e_b$ of~$P$.
In this case, let~$f$ be any non-extreme horizontal edge.
(Note that, as $k+5 \geq 4$, there exists at least one such edge.)
For both cases, let $\ell(f)$ be the straight line supporting~$f$ and let $q_l=\ell(f)\cap e_l$.
By Observation~\ref{obs:extremal}, every ray of ${\cal R}(q_l)$ crosses at most $k+2$ edges.
Further, no ray of ${\cal R}(q_l)$ crosses~$f$.
As~$f$ is not an extremal edge of~$P$, every ray of ${\cal R}(q_l)$ crosses at most $k+1$ edges and thus,~$q_l$ is a valid \mbox{$k$-modem} position for~$P$ by Observation~\ref{obs:rayk+1}.
\end{proof}

\begin{figure}[htb]
  \centering
  \includegraphics[page=2]{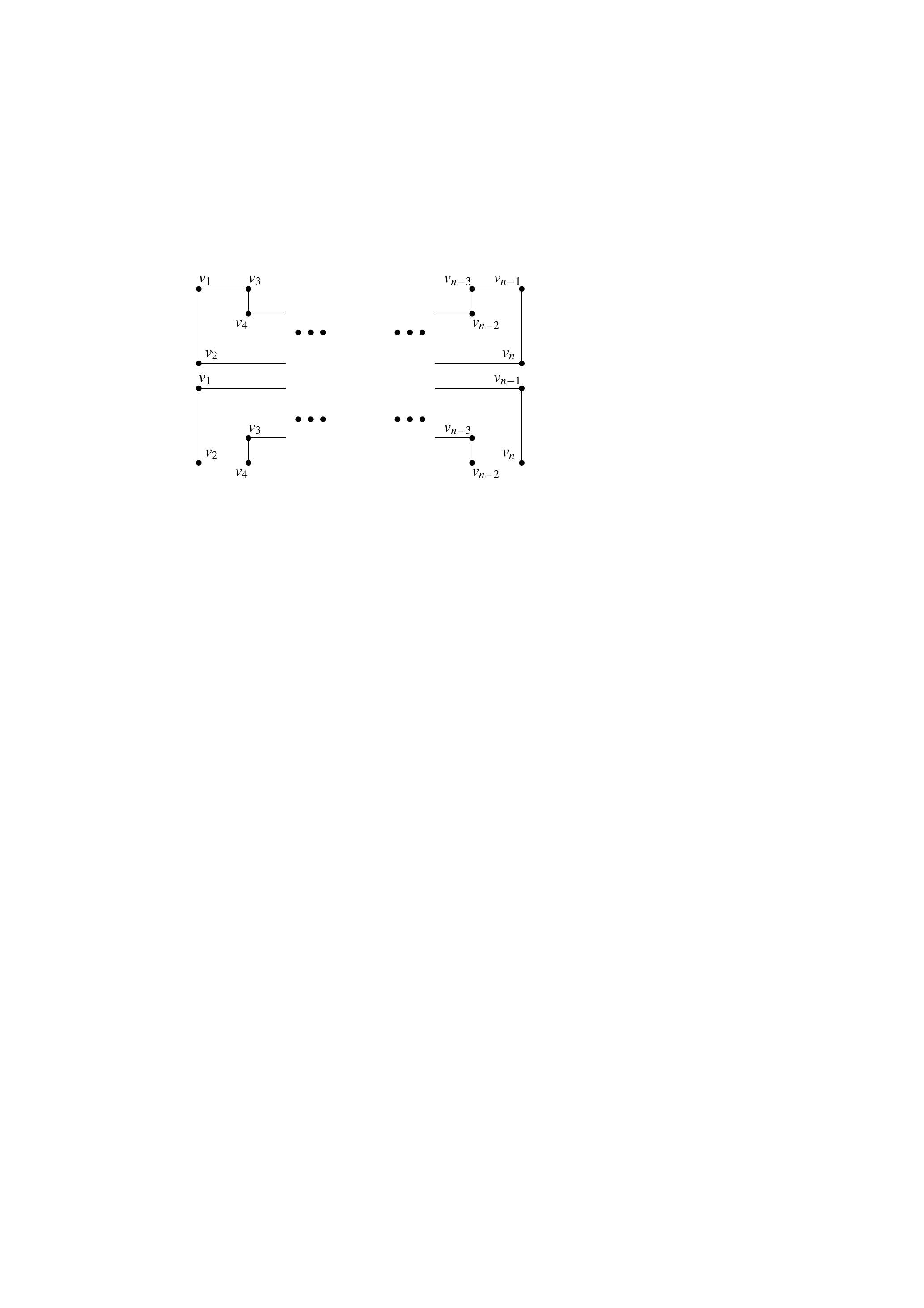}
  \caption{Four examples for stair end polygons:
	  An upper left-sided stair end polygon~(a);
  	  a lower left-sided stair end polygon~(b);
	  an upper right-sided stair end polygon~(c); and 
  	  a lower right-sided stair end polygon~(d).}
  \label{fig:stairend}
\end{figure}

The previous statements are true for general (not necessarily monotone) orthogonal polygons.
The next lemma is only true for a special kind of $x$-monotone orthogonal polygons, which we name upper or lower, left-sided or right-sided \emph{``stair end polygon''}.
An upper left-sided stair end polygon is an $x$-monotone orthogonal $n$-gon that has $v_1v_3$ as an edge.
Likewise, an upper right-sided stair end polygon is an $x$-monotone orthogonal $n$-gon that has $v_{n-3}v_{n-1}$ as an edge.
Further, a lower left- or right-sided stair end polygon is an $x$-monotone orthogonal $n$-gon that has~$v_{2}v_{4}$ or~$v_{n-2}v_{n}$, respectively, as an edge.
See \figurename~\ref{fig:stairend} for examples.

\begin{lemma}\label{lem:k+7}
Let $k\geq 3$ be odd and let~$P$ be a stair end polygon with $(k+7)$ vertices.
If~$P$ is (upper or lower) left-sided, then there exists a point~$q_l$ on its leftmost edge~$e_l$ and if~$P$ is (upper or lower) right-sided then there exists a point~$q_r$ on its rightmost edge~$e_r$, such that~$P$ can be illuminated by a \mbox{$k$-modem} placed on~$q_l$ or~$q_r$, respectively.
\end{lemma}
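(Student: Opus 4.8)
The plan is to mimic the proof of Lemma~\ref{lem:k+5}, placing the modem at a point $q_l=\ell(f)\cap e_l$ for a suitable non-extremal horizontal edge $f$, but now squeezing two extra savings out of the ``step'' at the left end. By the up--down reflection it suffices to treat the two left-sided types, and by the left--right reflection the right-sided cases then follow with $q_r$ on $e_r$; so I would fix attention on the upper left-sided case, where $e_l=v_1v_2$ (with $v_1$ the upper endpoint), $v_1v_3$ is the top tread of the step, and $v_3v_4$, $v_4v_5$ are the following riser and tread.

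First I would dispose of the rays starting to the outside. Since $e_l$ is the leftmost edge and $q_l$ lies on it, every ray in ${\cal R}^o(q_l)$ points into the half-plane to the left of the supporting line of $e_l$ (or runs along that line), which meets $P$ only in $e_l$; as $q_l\in e_l$ is not a barrier, such rays cross no edge and are trivially fine. All the work is with ${\cal R}^i(q_l)$, whose rays cross an \emph{odd} number of edges by Observation~\ref{obs:rayio}; since $k+1$ is even it suffices to show each such ray crosses at most $k+1$ edges, and I would obtain this by exhibiting, for every interior ray, at least $6$ of the $k+7$ edges that it misses.

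Four of these savings are \emph{global} and come exactly as in Lemmas~\ref{lem:k+4} and~\ref{lem:k+5}: the edge $e_l$ is never crossed; by Observation~\ref{obs:extremal} at most one of the remaining extremal edges $e_t,e_b,e_r$ is crossed, giving two more; and choosing $q_l$ on the supporting line of a non-extremal horizontal edge $f$ of the step makes $f$ free. The remaining two savings are \emph{local} to the step and rely on the monotone orthogonal structure. Placing $q_l$ at the height of the lower tread of the step, I would argue that an interior ray either runs below the step, so it misses the top tread and the upper riser entirely, or it climbs into the step, in which case the slope-ranges from $q_l$ that reach the tread and the upper riser are disjoint (so it meets at most one of them) and, by $x$-monotonicity, it leaves $P$ through the descending step edges without re-entering. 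Tallying the global and local savings, every ray of ${\cal R}^i(q_l)$ leaves at least six edges uncrossed, hence crosses at most $k+1$ edges, and $q_l$ is a valid $k$-modem position by Observation~\ref{obs:rayk+1}.

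The main obstacle is making the local argument airtight across all configurations. I expect the delicate points to be: the overlap when the top tread $v_1v_3$ coincides with the extremal edge $e_t$, where a naive free-edge count falls one short and one must instead use that the ray exits immediately upon meeting the upper riser; whether the upper chain keeps descending or turns upward right after the step, which dictates the correct height at which to place $q_l$ and which riser lies below it; a spiky upper chain, where one must use that a straight ray from a fixed $q_l$ can traverse a given horizontal band of ``teeth'' only over a bounded $x$-range and hence meets few of them; and the degenerate rays passing through a step vertex. Verifying that $f$ and the height of $q_l$ can always be chosen so that the six-edge count survives every such case is the crux, with the even/odd bookkeeping of Observation~\ref{obs:rayio} serving mainly as a consistency check rather than as an additional tool.
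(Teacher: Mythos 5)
Your high-level plan is the same as the paper's: put the modem on the end edge at the height of a well-chosen horizontal edge, and show that every ray misses at least $6$ of the $k+7$ edges; your four ``global'' savings ($e_l$, two of the remaining extremal edges via Observation~\ref{obs:extremal}, and the free edge $f$ on the supporting line through the modem) are exactly the paper's. The gap is in the two remaining ``local'' savings, and it is not a matter of polish: the \emph{fixed} choice of height (the lower tread) provably does not support the count. First, your claim that a ray which climbs into the step ``leaves $P$ through the descending step edges without re-entering'' is false: the chain beyond the step may rise again above the top tread, so a ray that exits through the top tread or the riser can cross a later riser, re-enter $P$, and keep crossing edges of the wiggling chain; $x$-monotonicity does not prevent this. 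Second, your savings double-count when a step edge is itself extremal. If the top tread happens to be $e_t$ (which occurs whenever the rest of the upper chain stays below it), then for a ray passing below the step the edges you can certify as missed are $e_l$, $f$, the top tread $=e_t$, the riser, and only \emph{one} further edge of $\{e_b,e_r\}$ guaranteed by Observation~\ref{obs:extremal} --- five in total, so your argument only yields a bound of $k+2$ crossings. Symmetrically, when the bulk of the polygon lies above the step (many horizontal edges above the top tread), rays going up from the lower-tread height get one local saving (one of top tread/riser), not two, and again only five savings overall.

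This is precisely the difficulty the paper's proof is organized around, which is why it is a five-case analysis rather than a single computation. The paper orders the horizontal edges by height and locates the third-from-bottom edge $e_{3b}$ and the third-from-top edge $e_{3t}$ relative to the step edges $e_R$ (top tread), $e_L$ (lower tread) and $f$; the modem is then placed at the top-tread vertex, or on $e_r$ at the height of $e_{3b}$, of $e_L$, or of $e_{3t}$, depending on the case. This adaptive choice guarantees, in each case, either that the side of the supporting line facing the step contains at most seven edges in total (of which any single ray can cross at most three), or that the six missed edges can be chosen pairwise distinct; the degenerate situation $e_R=e_t$ is handled by the observation that a ray crossing the \emph{topmost} edge from inside leaves $P$ for good --- exactly the ``no re-entry'' argument you invoke, but valid there only because of extremality. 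Since your proposal explicitly defers this choice of height and the attendant case analysis (``the crux''), what is missing is the actual content of the lemma, not a routine verification.
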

\begin{proof}
We prove the case of placing the \mbox{$k$-modem} at~$q_r$ for an upper right-sided stair end polygon~$P$.
The lower right-sided case and the upper and lower left-sided cases follow analogously.

Let~$f$ be the horizontal edge incident to~$v_n$, and let~$f'$ be the vertical edge that is the left neighbor of~$f$ on the boundary of~$P$.
Further, let $e_R=v_{n-3}v_{n-1}$, $e_m=v_{n-3}v_{n-2}$, let~$e_L$ be the horizontal edge incident to $v_{n-2}$, and let~$e_L'$ be the vertical edge that is the left neighbor of~$e_L$ on the boundary of~$P$.

Order the horizontal edges of~$P$ by their $y$-coordinates from top to bottom (from $e_t$ to $e_b$).
If two edges have the same $y$-coordinates, let the left one be before the right one.
Let $e_{3t}$ be the third horizontal edge and let $e_{3b}$ be the third-to-last horizontal edge in that order.
We distinguish five cases:

\begin{enumerate}
\item \textbf{$\mathbf{e_{3b}=e_R}$} (see \figurename~\ref{fig:stairendproof1-3}~(a)):\\
Choose $q_r=v_{n-1}$, let $\ell(e_R)$ be the supporting line of~$e_R$, let $r^a\in{\cal R}(q_r)$ be any ray above~$\ell(e_R)$, and let $r^b\in{\cal R}(q_r)$ be any ray below~$\ell(e_R)$.
Note that in this case, $e_R$ is not an extremal edge of~$P$. 
By Observation~\ref{obs:extremal}, $r^a$ crosses at most~$k+4$ edges.
Further, $r^a$ cannot cross the edges~$e_R$, $e_m$, and~$e_L$, leaving at most $k+1$ edges to cross.

As $e_{3b}=e_R$, there are only three horizontal edges, $e_L$, $e_R$, and~$f$, below (or on)~$\ell(e_R)$ and hence, only four vertical edges, $e_L'$, $f'$, $e_m$, and~$e_r$, (in part) below~$\ell(e_R)$. 
Out of this $7$ edges, $r^b$ cannot cross~$e_R$.
Further, $r^b$ can cross at most one edge out of~$e_r$, $f$, and~$f'$, and at most two edges out of~$e_m$, $e_L$, and~$e_L'$.
This leaves also at most $k+1$ edges to cross for~$r^b$.
Therefore, by Observation~\ref{obs:rayk+1}, $q_r$ is a valid \mbox{$k$-modem} position in this case.

\begin{figure}[htb]
  \centering
  \includegraphics[page=1]{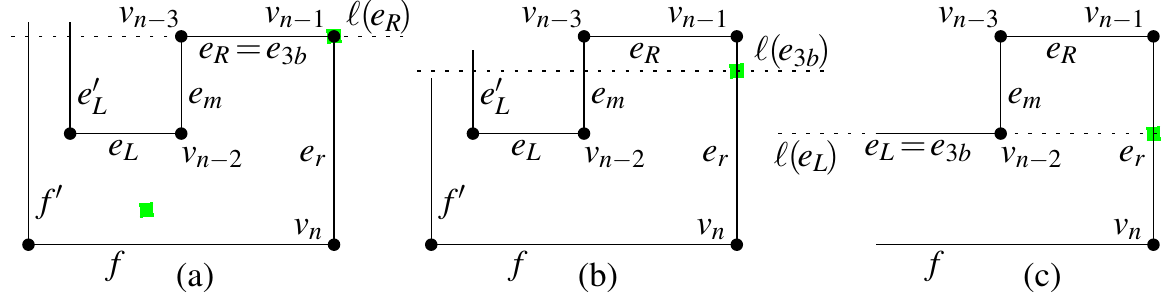}
  \caption{Sketches for the proof of Lemma~\ref{lem:k+7}:
	  Case~${e_{3b}=e_R}$ (a).
	  Case~${e_{3b}}$ is between (not including)~$\mathbf{e_{R}}$ and~$\mathbf{e_{L}}$ (b).
      Case~${e_{3b}=e_L}$ (c).}
  \label{fig:stairendproof1-3}
\end{figure}

\item \textbf{$\mathbf{e_{3b}}$ is between (not including)~$\mathbf{e_{R}}$ and~$\mathbf{e_{L}}$} (see \figurename~\ref{fig:stairendproof1-3}~(b)):\\
Let $\ell(e_{3b})$ be the supporting line of~$e_{3b}$ and choose $q_r=\ell(e_{3b})\cap e_r$.
Let $r^a\in{\cal R}(q_r)$ be any ray above~$\ell(e_{3b})$, and let $r^b\in{\cal R}(q_r)$ be any ray below~$\ell(e_{3b})$.

If~$e_R=e_t$, then $r^a$ either crosses only~$e_R$ or does not cross~$e_R$.
Further, $r^a$ can cross at most one edge out of~$e_r$, $e_b$, $e_l$, and the second horizontal edge.
If~$e_R\neq e_t$, then $r^a$ crosses at most~$k+4$ edges, by Observation~\ref{obs:extremal}, and~$r^a$ crosses at most one edge out of~$e_R$ and~$e_m$.
In both cases, $r^a$ cannot cross the edges~$e_{3b}$ and~$e_L$, leaving at most $k+1$ edges to cross.

Like before, there are only three horizontal edges, $e_L$, $e_{3b}$, and~$f$, below (or on)~$\ell(e_{3b})$ and hence, only four vertical edges, $e_L'$, $f'$, $e_m$, and~$e_r$, (in part) below~$\ell(e_{3b})$. 
Out of this $7$ edges, $r^b$ cannot cross~$e_{3b}$.
Further, $r^b$ can cross at most one edge out of~$e_r$, $f$, and~$f'$, and at most two edges out of~$e_m$, $e_L$, and~$e_L'$.
This leaves also at most $k+1$ edges to cross for~$r^b$.
Therefore, by Observation~\ref{obs:rayk+1}, $q_r$ is a valid \mbox{$k$-modem} position in this case.

\item \textbf{$\mathbf{e_{3b}=e_L}$} (see \figurename~\ref{fig:stairendproof1-3}~(c)):\\
Let $\ell(e_L)$ be the supporting line of~$e_L$ and choose $q_r=\ell(e_L)\cap e_r$.
Let $r^a\in{\cal R}(q_r)$ be any ray above~$\ell(e_L)$, and let $r^b\in{\cal R}(q_r)$ be any ray below~$\ell(e_L)$.

If~$e_R=e_t$, then $r^a$ either crosses only~$e_R$ or does not cross~$e_R$.
Further, $r^a$ can cross at most one edge out of~$e_r$, $e_b$, $e_l$, and the second horizontal edge.
If~$e_R\neq e_t$, then $r^a$ crosses at most~$k+4$ edges, by Observation~\ref{obs:extremal}, and~$r^a$ crosses at most one edge out of~$e_R$ and~$e_m$.
In both cases, $r^a$ cannot cross~$e_L$ and the second-to-last horizontal edge, leaving at most $k+1$ edges to cross.

By Observation~\ref{obs:extremal}, $r^b$ crosses at most~$k+4$ edges.
Further, $r^b$ cannot cross $e_m$, $e_L$, and the second horizontal edge, leaving at most $k+1$ edges to cross.
Therefore, by Observation~\ref{obs:rayk+1}, $q_r$ is a valid \mbox{$k$-modem} position in this case.

\item \textbf{$\mathbf{e_{3b}}$ is below~$\mathbf{e_{L}}$ and $\mathbf{e_{3t}}$ is above~$\mathbf{e_{L}}$} (see \figurename~\ref{fig:stairendproof4-5}~(a)):\\
Let $\ell(e_L)$ be the supporting line of~$e_L$ and choose $q_r=\ell(e_L)\cap e_r$.
Let $r^a\in{\cal R}(q_r)$ be any ray above~$\ell(e_L)$, and let $r^b\in{\cal R}(q_r)$ be any ray below~$\ell(e_L)$.

By Observation~\ref{obs:extremal}, $r^a$ crosses at most~$k+4$ edges.
Further, $r^a$ cannot cross~$e_L$, the second-to-last horizontal edge, and~$e_{3b}$, leaving at most $k+1$ edges to cross.

By Observation~\ref{obs:extremal}, $r^b$ crosses at most~$k+4$ edges.
Further, $r^b$ cannot cross~$e_L$, the second horizontal edge, and~$e_{3t}$, leaving at most $k+1$ edges to cross.
Therefore, by Observation~\ref{obs:rayk+1}, $q_r$ is a valid \mbox{$k$-modem} position in this case.

\begin{figure}[htb]
  \centering
  \includegraphics[page=2]{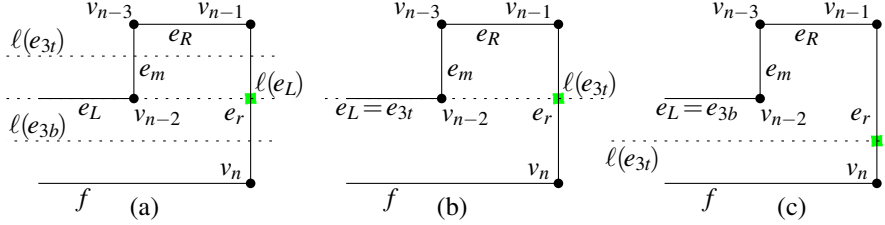}
  \caption{Sketches for the proof of Lemma~\ref{lem:k+7}:
	  Case~${e_{3b}}$ is below~${e_{L}}$ and~${e_{3t}}$ is above~${e_{L}}$ (a).
      Case~${e_{3t}}$ is between (and including)~${e_{L}}$ and~${f}$:
	  ${e_{3t}=e_L}$ (b);
	  ${e_{3t}}$ is between~${e_{L}}$ and~${f}$ (c).
    (The case where~${e_{3t}=f}$ is not depicted.)}
  \label{fig:stairendproof4-5}
\end{figure}

\item \textbf{$\mathbf{e_{3t}}$ is between (and including)~$\mathbf{e_{L}}$ and~$\mathbf{f}$} (see \figurename~\ref{fig:stairendproof4-5}~(b) and~(c)):\\
Let $\ell(e_{3t})$ be the supporting line of~$e_{3t}$ and choose $q_r=\ell(e_{3t})\cap e_r$.
Let $r^a\in{\cal R}(q_r)$ be any ray above~$\ell(e_{3t})$, and let $r^b\in{\cal R}(q_r)$ be any ray below~$\ell(e_{3t})$.

If~$e_R=e_t$, then $r^a$ either crosses only~$e_R$ or does not cross~$e_R$.
Further, $r^a$ can cross at most one edge out of~$e_r$, $e_b$, $e_l$, and the second horizontal edge.
If~$e_R\neq e_t$, then $r^a$ crosses at most~$k+4$ edges, by Observation~\ref{obs:extremal}, and~$r^a$ crosses at most one edge out of~$e_R$ and~$e_m$.
Further, $r^a$ cannot cross $e_{3t}$ and the second-to-last horizontal edge, leaving at most $k+1$ edges to cross.

By Observation~\ref{obs:extremal}, $r^b$ crosses at most~$k+4$ edges.
Further, $r^b$ cannot cross~$e_{3t}$, the second horizontal edge, and~$e_m$, leaving at most $k+1$ edges to cross.
Therefore, by Observation~\ref{obs:rayk+1}, $q_r$ is a valid \mbox{$k$-modem} position in this last case.
\end{enumerate}

As $e_{3b}$ cannot be above~$e_R$ and $e_{3t}$ cannot be below~$f$, this case analysis is exhaustive and proves the claim.
\end{proof}

Like for (general) $x$-monotone polygons, we split a large $x$-monotone orthogonal $n$-gon~$P$ into smaller pieces.
Let $4\leq i\leq n-2$ be even.
We split~$P$ along a vertical line $\ell_i$ through $v_i$ into a left and a right $x$-monotone orthogonal polygon~$P_L$ and~$P_R$, respectively.
Note that $\ell_i$ is the supporting line of the vertical edge of~$P$ with the end points $v_{i-1}$ and $v_i$.
Let~$f$ be the horizontal edge of~$P$ that is crossed by~$\ell_i$ in the point~$p_i$.

Recall that with $H_L(\ell_i)$ ($H_R(\ell_i)$) we denote the left (right) closed half-plane bounded by a vertical line~$\ell_i$.
If the horizontal edge of~$P$ that is incident to~$v_{i-1}$ is completely contained in $H_L(\ell_i)$, then~$P_L$ is $P\cap H_L(\ell_i)$ plus an additional edge~$s_L=v_{i-1}p_i$, and~$P_R$ is $P\cap H_R(\ell_i)$, plus an additional edge~$s_R=v_{i}p_i$.
Otherwise, $P_L$ is $P\cap H_L(\ell_i)$ plus an additional edge~$s_L=v_{i}p_i$, and~$P_R$ is $P\cap H_R(\ell_i)$, plus an additional edge~$s_R=v_{i-1}p_i$.
See \figurename~\ref{fig:orthonaming} for a sketch of the above naming conventions, where the horizontal edge of~$P$ that is incident to~$v_{i-1}$ goes to the right. 

\begin{figure}[htb]
  \centering
  \includegraphics{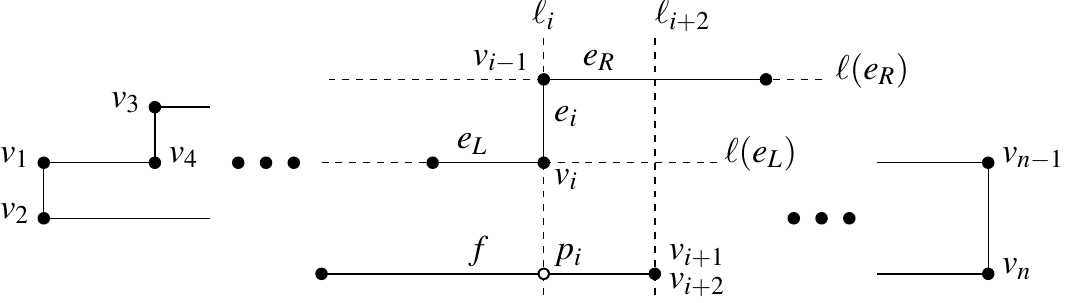}
  \caption{Example to illustrate the naming conventions for an $x$-monotone orthogonal $n$-gon.
    The vertex labeled~$v_{i+1}$ and~$v_{i+2}$ is either vertex~$v_{i+1}$ or vertex~$v_{i+2}$, depending on whether the vertical edge incident to this point goes up or down.
    Note though that in a different example also the right end point of~$e_R$ could be either~$v_{i+1}$ or~$v_{i+2}$.}
  \label{fig:orthonaming}
\end{figure}

It is easy to see that the resulting polygons are subpolygons of~$P$, both $x$-monotone and orthogonal, and that they meet the convention that at most two vertices share a common $x$-coordinate.
As both subpolygons are disjoint except for the common part on~$\ell_i$, illuminating~$P_L$ with modems in (or on the boundary of)~$P_L$ and illuminating~$P_R$ with  modems in (or on the boundary of)~$P_R$ illuminates~$P$.
Further, $P_L$ contains $i$ edges (including~$s_L$) and~$P_R$ contains $n-i+2$ edges (including~$s_R$).
We summarize these observations in the following statement.

\begin{observation}\label{obs:orthosplit}
Let~$P$ be an $x$-monotone orthogonal $n$-gon, $n\geq6$, and let~$\ell_i$ be a vertical line through $v_i$, $4\leq i\leq n-2$, $i$ is even.
Let~$P_L=P\cap H_L(\ell_i)\cup s_L$ and~$P_R=P\cap H_R(\ell_i)\cup s_R$ be subpolygons of~$P$, with~$s_L$ being the rightmost edge of~$P_L$ and~$s_R$ being the leftmost edge of~$P_R$.
\begin{itemize}\vspace*{-0.5ex}
\item $P_L$ is an $x$-monotone orthogonal $i$-gon.
\item $P_R$ is an $x$-monotone orthogonal $(n-i+2)$-gon.
\item If both~$P_L$ and~$P_R$ are illuminated by \mbox{$k$-modems} placed to the left of or on~$\ell_i$ and to the right of or on~$\ell_i$, respectively, then~$P$ is illuminated.
\end{itemize}
\end{observation}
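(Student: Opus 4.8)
The plan is to verify the three bullets directly from the construction described just above, treating the first two as bookkeeping and concentrating the real work on the illumination transfer in the third.

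First I would pin down the geometry of the cut and, in particular, justify the remark that $\ell_i$ supports the vertical edge $v_{i-1}v_i$. Since $P$ is orthogonal, every vertex is the endpoint of exactly one vertical edge, and the convention that at most two vertices share an $x$-coordinate forces these vertical edges to sit at $n/2$ distinct $x$-values, each carrying exactly two vertices. Sorting by $x$ and breaking ties by giving the higher vertex the lower label, the pair at the $j$-th such value is precisely $v_{2j-1},v_{2j}$. Hence for even $i$ the vertices $v_{i-1}$ and $v_i$ are the two endpoints of a single vertical edge and $\ell_i$ is its supporting line, exactly as claimed. In particular $s_L$ and $s_R$ are vertical segments contained in $\ell_i$, and $p_i=f\cap\ell_i$ is the only new point introduced.

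Second, I would dispatch the structural bullets. Orthogonality of $P_L$ and $P_R$ is immediate, because the only edges not inherited from $P$ are $s_L,s_R\subseteq\ell_i$, which are vertical. For $x$-monotonicity, for any vertical line $\ell$ the cross-section $P_L\cap\ell$ equals $(P\cap\ell)\cap H_L(\ell_i)$ (symmetrically for $P_R$), an interval because $P\cap\ell$ is one; hence both parts are $x$-monotone. For the counts I would use an absorption argument: the segment $P\cap\ell_i$ runs straight from $p_i$ through the vertical edge $v_{i-1}v_i$, so replacing it by the single edge $s_L$ makes one of $v_{i-1},v_i$ an interior point of a straight edge, i.e.\ no longer a corner, while $p_i$ becomes a genuine new corner. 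Thus $P_L$ retains $v_1,\dots,v_{i-1}$ and gains $p_i$, giving $i$ vertices, while $P_R$ retains $v_i,\dots,v_n$ and gains $p_i$, giving $n-i+2$ vertices; the roles of $v_{i-1}$ and $v_i$ simply swap in the second construction case, leaving the counts unchanged.

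The heart of the observation is the third bullet. The plan is to note that $P=P_L\cup P_R$, the two parts overlapping only along $P\cap\ell_i$, so every point of $P$ lies in $P_L$ (if its $x$-coordinate is at most that of $v_i$) or in $P_R$ (otherwise). It then suffices to show that illumination inside a part lifts to illumination inside $P$. Fix a modem at $q\in P_L$ placed to the left of or on $\ell_i$ and a target $p\in P_L$; since both lie in the closed half-plane $H_L(\ell_i)$, the segment $qp$ stays in $H_L(\ell_i)$. The boundaries of $P$ and $P_L$ agree outside $\ell_i$ and differ only by the added edge $s_L\subseteq\ell_i$, which $qp$ can meet at most at an endpoint, never in a transversal crossing. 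Consequently every edge of $P$ crossed by $qp$ is a (possibly truncated, as with $f$) edge of $P_L$ and vice versa, so the two wall counts coincide, and by Observation~\ref{obs:rayk+1} the segment crosses at most $k$ walls of $P$ exactly when it does so in $P_L$. The symmetric argument handles $P_R$ for modems placed to the right of or on $\ell_i$, and together these illuminate all of $P$.

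The step I expect to be the main obstacle is precisely this wall-count equality: one must be certain that the auxiliary edges $s_L,s_R$ never act as phantom walls, which is guaranteed by confining both modem and target to one closed half-plane, and that an edge of $P$ straddling $\ell_i$, such as $f$, contributes identically in $P$ and in the relevant part. Once that is secured, the monotonicity, orthogonality, and counting bullets are routine.
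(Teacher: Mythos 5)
Your proposal is correct and follows essentially the same route as the paper, which presents this statement as an observation backed by exactly the construction you verify: the supporting-line/vertical-edge fact about $v_{i-1}v_i$, the absorption of one of $v_{i-1},v_i$ into the new edge $s_L$ (resp.\ $s_R$) for the vertex counts, and the fact that the two subpolygons meet only along $\ell_i$ for the illumination transfer. One small remark: you needlessly assume the modem lies in $P_L$ (resp.\ $P_R$), whereas the statement---and the paper's later use of it, e.g.\ in Lemma~\ref{lem:2k+6}, where $P_L$ is illuminated from a point of $s_R$ via Lemma~\ref{lem:k+3}, i.e.\ possibly from outside $P_L$---only requires the modem to lie in $H_L(\ell_i)$ (resp.\ $H_R(\ell_i)$); fortunately your half-plane-confinement argument uses only that weaker hypothesis, so it applies verbatim.
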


Using the previous results we can prove the following lemma.

\begin{lemma}\label{lem:2k+6}
For every $x$-monotone orthogonal $(2k\!+\!6)$-gon~$P$ there exists a point $q\in P$, such that~$P$ can be illuminated with a \mbox{$k$-modem} placed on~$q$.
\end{lemma}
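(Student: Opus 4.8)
The plan is to split the $(2k+6)$-gon $P$ at a carefully chosen even vertical line $\ell_i$ into a left piece $P_L$ and a right piece $P_R$ via Observation~\ref{obs:orthosplit}, and to find a single point on the shared line $\ell_i$ that is simultaneously a valid \mbox{$k$-modem} position for both pieces. Since $P$ has $2k+6$ vertices, a balanced split at $i=k+3$ (forcing parity, we take the nearest even index, say $i=k+3$ when $k$ is odd so that $i$ is even, and handle the parity bookkeeping explicitly) yields $P_L$ with $i$ edges and $P_R$ with $2k+6-i+2 = 2k+8-i$ edges. Choosing $i$ near $k+4$ makes both $P_L$ and $P_R$ have roughly $k+4$ or $k+5$ edges. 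The key is that Lemmas~\ref{lem:k+4} and~\ref{lem:k+5} already handle orthogonal polygons with up to $k+4$ and $k+5$ vertices respectively, and Lemma~\ref{lem:k+5} crucially delivers a valid position \emph{on the leftmost or rightmost edge} of the piece, which is exactly the interface we want to exploit.

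First I would split $P$ with $\ell_i$ so that $P_L$ has at most $k+5$ edges and $P_R$ has at most $k+5$ edges; since $\dim P = 2k+6$ and splitting adds one auxiliary edge to each side ($i$ and $n-i+2$ edges), the balanced choice gives sizes summing to $2k+8$, so the two pieces have sizes like $(k+4,k+4)$ or $(k+3,k+5)$ depending on parity. Then for each piece I would invoke Lemma~\ref{lem:k+5} to obtain a valid position on its interface edge. The right edge of $P_L$ is the auxiliary edge $s_L$ lying on $\ell_i$, and the left edge of $P_R$ is $s_R$, also on $\ell_i$. By Lemma~\ref{lem:k+5}, $P_L$ can be illuminated from a point $q_r$ on $s_L \subseteq \ell_i$ and $P_R$ from a point $q_l$ on $s_R \subseteq \ell_i$. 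Since both $s_L$ and $s_R$ lie on the same vertical segment $P\cap\ell_i$, the remaining task is to show these two validity windows along $\ell_i$ overlap, so a common point $q$ exists.

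The hard part will be arranging that the two valid positions coincide, rather than merely both lying on $\ell_i$. Lemma~\ref{lem:k+5} constructs its position as $q_l = \ell(f)\cap e_l$ for a chosen non-extremal horizontal edge $f$; the location along the vertical edge is thus dictated by the $y$-coordinate of some interior horizontal edge. I would argue that one may choose, for both pieces, the \emph{same} witness horizontal edge $f$ of the original polygon $P$ that crosses $\ell_i$ at the interior point $p_i$, so that both constructed positions equal $p_i$. This requires checking that $f$ is non-extremal in each piece and that the crossing-count bound of $k+1$ survives the split; the auxiliary edges $s_L,s_R$ are vertical and do not interfere with a ray's crossing count beyond what Observation~\ref{obs:extremal} already controls, since along $\ell_i$ no ray crosses $f$ itself. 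If a clean single choice of $f$ fails for some degenerate configuration (e.g.\ $f$ becoming extremal in a piece after the cut), I would fall back to the stair end Lemma~\ref{lem:k+7} for odd $k\geq 3$, noting that a balanced split of a $(2k+6)$-gon can produce a piece that is a stair end polygon of the right size, giving the needed interface position there instead.

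In summary, the proof reduces to two moves: a size-balanced orthogonal split via Observation~\ref{obs:orthosplit}, and a coincidence argument placing the interface positions guaranteed by Lemma~\ref{lem:k+5} (or Lemma~\ref{lem:k+7}) at the common crossing point $p_i$ on $\ell_i$. The main obstacle is the coincidence/overlap step, which I expect to resolve by always selecting the witness edge $f$ to be the horizontal edge of $P$ cut by $\ell_i$, thereby forcing both pieces' valid positions to be exactly $p_i$, so that $q=p_i$ illuminates all of $P$ by Observation~\ref{obs:orthosplit}.
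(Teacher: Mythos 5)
Your overall plan (split via Observation~\ref{obs:orthosplit} into pieces of sizes $(k+4,k+4)$ for even $k$ and $(k+3,k+5)$ for odd $k$, then find a single point on the splitting line valid for both pieces) is the same as the paper's, and for even $k$ your argument is sound, since Lemma~\ref{lem:k+4} makes \emph{any} point of a $(k+4)$-gon a valid position, so any point of $P\cap\ell_i$ works. The genuine gap is in the odd case, exactly at the ``coincidence'' step you yourself flag as the hard part. Lemma~\ref{lem:k+5} only guarantees \emph{some} point on the interface edge; to force that point to be $p_i$ you re-run its proof with the witness edge chosen as the horizontal edge $f$ of $P$ cut by $\ell_i$, and that proof essentially requires the remnant of $f$ to be non-extremal \emph{in the piece}. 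This failure mode is not degenerate: for any comb-like polygon whose lower chain is a single horizontal edge (teeth pointing up -- note this is precisely the shape of the paper's lower-bound constructions), the remnant of $f$ is the bottommost edge of each piece, hence extremal. Then the crossing count at $p_i$ only gives $k+2$: the two edges incident to the vertex $p_i$ (the interface edge and the $f$-remnant) are already among the four extremal edges, so excluding them buys nothing beyond Observation~\ref{obs:extremal}, and one extremal plus all $k+1$ non-extremal edges may still be crossed -- one too many for Observation~\ref{obs:rayk+1}. Your fallback does not repair this: Lemma~\ref{lem:k+7} concerns stair end polygons with exactly $k+7$ vertices, while your pieces have at most $k+5$ vertices; a $(k+7)$-gon piece obtained from a more unbalanced split need not be a stair end polygon at all; and the lemma does not even exist for $k=1$ or even $k$.

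The missing idea -- and the paper's actual route -- is that no coincidence is needed, because the split is deliberately asymmetric so that the \emph{small} piece imposes no constraint on the modem's location. For odd $k$, split at $v_{k+3}$ into a $(k+3)$-gon $P_L$ and a $(k+5)$-gon $P_R$, and let $q$ be whatever point Lemma~\ref{lem:k+5} provides on the leftmost edge $s_R$ of $P_R$. Since $P_L$ has only $k+3$ vertices, it is illuminated by a $k$-modem at $q$ no matter where $q$ lies: by Lemma~\ref{lem:k+4} if $q$ is in or on $P_L$, and by Lemma~\ref{lem:k+3} if $q$ falls outside $P_L$ (which can happen, since $s_R$ need not lie on the boundary of $P_L$). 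Observation~\ref{obs:orthosplit} then completes the proof. So the correct repair is to drop the requirement $q=p_i$ altogether, rather than to strengthen the witness-edge argument.
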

\begin{proof}
  If~$k$ is even, split~$P$ vertically at $v_{k+4}$ into two $x$-monotone orthogonal $(k\!+\!4)$-gons by Observation~\ref{obs:orthosplit} and let $q=v_{k+4}$.
  By Lemma~\ref{lem:k+4}, both $(k\!+\!4)$-gons are illuminated by a \mbox{$k$-modem} at $q$.

  For odd $k$, split~$P$ vertically at $v_{k+3}$ into one $(k\!+\!3)$-gon~$P_L$ and one $(k\!+\!5)$-gon~$P_R$, both $x$-monotone and orthogonal, by Observation~\ref{obs:orthosplit}.
  Recall that~$s_R$ is the leftmost edge of~$P_R$ and that it is contained in the splitting line. 
  By Lemma~\ref{lem:k+5}, there exists a point~$q$ on $s_R$ where a \mbox{$k$-modem} can be placed to illuminate the $(k\!+\!5)$-gon~$P_R$.
  Further, Lemmas~\ref{lem:k+3} and~\ref{lem:k+4} ensure that the $(k\!+\!3)$-gon~$P_L$ is also illuminated by a \mbox{$k$-modem} at~$q$.

  Hence, whether~$k$ is even or odd, both subpolygons are illuminated and therefore also~$P$, by Observation~\ref{obs:orthosplit}. 
\end{proof}

With this lemma we can prove the following first piece of our main result for monotone orthogonal polygons.

\begin{lemma}\label{lem:orthoupperall}
  Every $x$-monotone orthogonal $n$-gon~$P$ can be illuminated with $\left\lceil \frac{n-2}{2k+4}\right\rceil$ \mbox{$k$-modems}.
\end{lemma}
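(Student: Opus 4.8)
The plan is to combine the single-modem result for orthogonal $(2k+6)$-gons (Lemma~\ref{lem:2k+6}) with the orthogonal splitting of Observation~\ref{obs:orthosplit}, mirroring the structure of the proof of Theorem~\ref{thm:gmono-generalbound} for general monotone polygons. The difference from that setting is that here the split pieces are genuine subpolygons of $P$, so the bookkeeping on modem placement is simpler.

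First I would set $t = \lceil \frac{n-2}{2k+4}\rceil$ and repeatedly peel off left pieces of size $2k+6$. Starting from $R_0 = P$, for each step I apply Observation~\ref{obs:orthosplit} to $R_{j-1}$ at the vertical line through $v_{2k+6}$; note that $2k+6$ is even, so this is an admissible split index. This yields a left $x$-monotone orthogonal $(2k+6)$-gon $L_j$ and a right $x$-monotone orthogonal polygon $R_j$. By the size count in Observation~\ref{obs:orthosplit}, $R_j$ has exactly $2k+4$ fewer vertices than $R_{j-1}$, so after $t-1$ splits the remaining polygon $R_{t-1} =: L_t$ has at most $n - (t-1)(2k+4)$ vertices. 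A short calculation shows that $n - (t-1)(2k+4) \le 2k+6$ is equivalent to $t \ge \frac{n-2}{2k+4}$, which holds by the choice of $t$.

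Next I would illuminate each of the $t$ pieces $L_1, \dots, L_t$ with a single $k$-modem. Each $L_j$ with $j < t$ is an orthogonal $(2k+6)$-gon handled directly by Lemma~\ref{lem:2k+6}. Because each $L_j$ is a genuine subpolygon of $P$ and the modem supplied by Lemma~\ref{lem:2k+6} lies inside $L_j$, the side condition in Observation~\ref{obs:orthosplit} (modems for the left part placed left of or on the split line, modems for the right part right of or on it) is automatically satisfied at every split. Hence an inductive application of the last bullet of Observation~\ref{obs:orthosplit} shows that illuminating all $t$ pieces illuminates $P$, giving the claimed bound of $t = \lceil \frac{n-2}{2k+4}\rceil$ $k$-modems.

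The routine part is the arithmetic of the size recursion; the one point needing care — and the main obstacle — is the treatment of the final remainder $L_t$, whose size can be strictly less than $2k+6$. Since Lemma~\ref{lem:2k+6} is stated for exactly $(2k+6)$-gons, I would either extend it to ``at most $2k+6$ vertices'' by re-running its case split (the smaller subpolygons only become easier and are absorbed by Lemmas~\ref{lem:k+4} and~\ref{lem:k+5}), or arrange the first split so that every piece, including the last, has exactly $2k+6$ vertices. Beyond this, one must check that each intermediate split index stays in the admissible range $4 \le i \le n'-2$ for the current size $n'$; since orthogonal polygons have an even number of vertices, $n'$ is either exactly $2k+6$ (so no further split is needed) or at least $2k+8$, and the range condition never fails.
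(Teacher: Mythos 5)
Your proposal is correct and follows essentially the same route as the paper's proof: iteratively peel off left $x$-monotone orthogonal $(2k+6)$-gons via Observation~\ref{obs:orthosplit} (noting that the side condition on modem placement is automatic because the pieces are genuine subpolygons) and illuminate each piece with a single \mbox{$k$-modem} by Lemma~\ref{lem:2k+6}. The one point you flag --- that Lemma~\ref{lem:2k+6} is stated for exactly $(2k+6)$-gons while the last remainder may be smaller --- is glossed over in the paper too, and your fix (re-running the split so the smaller pieces are absorbed by the ``at most'' Lemmas~\ref{lem:k+3}, \ref{lem:k+4}, and~\ref{lem:k+5}) is sound.
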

\begin{proof}
  By Observation~\ref{obs:orthosplit}, we can split~$P$ into a left $x$-monotone orthogonal $(2k\!+\!6)$-gon~$L_1$ and a right $x$-monotone orthogonal $(n\!-\!(2k\!+\!6)\!+\!2)$-gon~$R_1$.
  Recursing on~$R_1$, like in the proof of~Theorem~\ref{thm:gmono-generalbound}, results in $\left\lceil \frac{n-2}{2k+4}\right\rceil$ $x$-monotone orthogonal subpolygons with at most $2k\!+\!6$ vertices each (including~$L_1$ and the rightmost remaining subpolygon).
  By Lemma~\ref{lem:2k+6}, each of these polygons can be illuminated with one \mbox{$k$-modem}.
  By Observation~\ref{obs:orthosplit}, the illumination of all subpolygons implies the illumination of~$P$. 
\end{proof}

Following this upper bound for the number of necessary \mbox{$k$-modems}, we next present a lower bound construction.

\begin{lemma}\label{lem:ortholowerall}
  For even $k$, there exists an $x$-monotone orthogonal $n$-gon requiring $\left\lceil \frac{n-2}{2k+4}\right\rceil$ \mbox{$k$-modems} to be illuminated.
  For odd $k$, there exists an $x$-monotone orthogonal $n$-gon requiring $\left\lceil \frac{n-2}{2k+6}\right\rceil$ \mbox{$k$-modems} to be illuminated.
\end{lemma}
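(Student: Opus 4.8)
The plan is to mirror the lower-bound construction from the proof of Theorem~\ref{thm:gmono-generalbound}, but with orthogonal spikes, and to calibrate the group sizes to the orthogonal denominators $2k+4$ (even $k$) and $2k+6$ (odd $k$). First I would build an $x$-monotone orthogonal $n$-gon whose lower chain is a single horizontal edge $v_1v_n$ and whose upper chain is a comb of thin rectangular teeth, arranged in $t$ groups that are separated by orthogonal spacers. As in the general case, the two extremal groups contain about half as many teeth as the $t-2$ interior groups, and the spacers differ with the parity of $k$ (playing the role of the valleys and towers). The number of teeth per group and the spacer size are chosen so that an interior group together with one spacer contributes exactly $2k+4$ vertices when $k$ is even and $2k+6$ vertices when $k$ is odd; a short arithmetic calibration of the two end groups (exactly as in the proof of Theorem~\ref{thm:gmono-generalbound}) then yields $t=\lceil\frac{n-2}{2k+4}\rceil$, respectively $t=\lceil\frac{n-2}{2k+6}\rceil$.

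Next I would fix a witness set $W_k$ of $t$ interior points, one in the central tooth of each group (and one in each extremal group). The core of the argument is to show that the set of valid \mbox{$k$-modem} positions for a single witness point $w$ is confined to the group containing $w$, so that these position sets are pairwise disjoint and hence no single \mbox{$k$-modem} can illuminate two witnesses. Concretely, for a point $q$ lying outside $w$'s group, I would bound from below the number of edges crossed by the segment $qw$: a segment that leaves $w$'s tooth and reaches a neighbouring group must traverse the intervening thin teeth, each of which is entered and left through its two vertical sides and thus contributes two crossings, and it must additionally pay for the separating spacer. Using Observation~\ref{obs:extremal} (every ray meets at most one extremal edge) together with the parity Observation~\ref{obs:rayio} --- which is exactly where the even/odd case split and the choice of spacer enter --- this count exceeds $k$, so by Observation~\ref{obs:rayk+1} such a $q$ is not a valid position for $w$. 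Combining the disjointness of the $t$ position sets with $|W_k|=t$ then shows that at least $t$ \mbox{$k$-modems} are necessary, and the calibration above gives the claimed bounds.

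The main obstacle I anticipate is precisely this wall-counting step establishing disjointness, and in particular getting the two denominators right. Unlike the triangular spikes of the general construction, orthogonal teeth interact with the four extremal edges (Observation~\ref{obs:extremal}), which shifts every count by one and is the reason the even denominator is $2k+4$ rather than $2k+3$. The odd case must instead reflect the extra efficiency of a single \mbox{$k$-modem} on orthogonal pieces, captured by the stair-end Lemma~\ref{lem:k+7} (and by Lemma~\ref{lem:2k+6}), which is why the odd denominator jumps to $2k+6$ and forces a larger spacer; intuitively, since an odd-$k$ modem can illuminate a bigger orthogonal fragment, each group in the lower-bound example must be correspondingly larger before the witnesses become separated.

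The delicate points I expect to have to argue with care are (i) that the \emph{central}-tooth witnesses, rather than witnesses placed at a tooth tip, make the illumination regions genuinely group-local in both horizontal directions, and (ii) that a \mbox{$k$-modem} straddling a spacer cannot ``cheat'' by combining a short inside ray in ${\cal R}^i$ with a cheap outside ray in ${\cal R}^o$; controlling both is exactly what the parity observations and the parity-dependent spacers are designed to guarantee. Once these are pinned down, the final count is routine arithmetic identical in spirit to the concluding paragraph of the proof of Theorem~\ref{thm:gmono-generalbound}.
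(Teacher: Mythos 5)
Your overall strategy (explicit construction, witness points with pairwise disjoint illumination regions, arithmetic calibration at the end) is the same as the paper's, but your construction has a quantitative flaw that makes the claimed bounds unreachable. A comb whose lower chain is the single edge $v_1v_n$ and whose upper chain consists of thin rectangular teeth pays \emph{four} vertices per tooth (two for the tooth top, two for the gap bottom between consecutive teeth), yet a segment travelling roughly horizontally picks up only \emph{two} crossings per intervening tooth --- all the horizontal edges (tooth tops and gap bottoms) are wasted as walls. Consequently, for the illumination regions of two consecutive witnesses to be disjoint, a point midway between them must be more than $k$ walls away from each, which forces about $k+2$ teeth, hence about $4k+8$ vertices, between consecutive witnesses; this yields a lower bound of roughly $\lceil n/(4k+8)\rceil$, only half of the claimed $\lceil (n-2)/(2k+4)\rceil$. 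Your own calibration makes the problem visible: if an interior group plus spacer has $2k+4$ vertices, it contains only about $(k+2)/2$ teeth, so by your own wall-counting a modem placed midway between two consecutive witnesses reaches each of them with roughly $k/2$ crossings --- the regions are then far from disjoint, and no choice of spacer of the valley/tower type can make up a factor of two.

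The paper's construction differs in exactly the way needed to repair this. It is not a comb over a single bottom edge but a staircase-like sequence of thin \emph{corridors} (see \figurename~\ref{fig:ortholowerall}, and the $k=1$ construction in the proof of Lemma~\ref{lem:ortholower1}), in which each corridor costs only \emph{two} vertices --- this is what the count $n \geq 2(1+(k+2)(t-1))+2$ in the paper's proof expresses --- while a transversal segment still pays two crossings per corridor: there, both the horizontal and the vertical edges serve as separating walls. Witnesses are placed in every $(k+2)$-th corridor for even $k$ and every $(k+3)$-th corridor for odd $k$, giving periods of exactly $2k+4$ and $2k+6$ vertices; the odd case is driven by the parity of the crossing counts inside the construction (your Observation~\ref{obs:rayio} instinct is the right one), not by the upper-bound Lemmas~\ref{lem:k+7} and~\ref{lem:2k+6}, which a lower-bound argument cannot invoke. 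So the denominators you are aiming for are achievable, but only after replacing the comb by a corridor/staircase construction; as written, your plan cannot be completed to a proof of the stated bounds.
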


\begin{figure}[htb]
  \centering
  \includegraphics{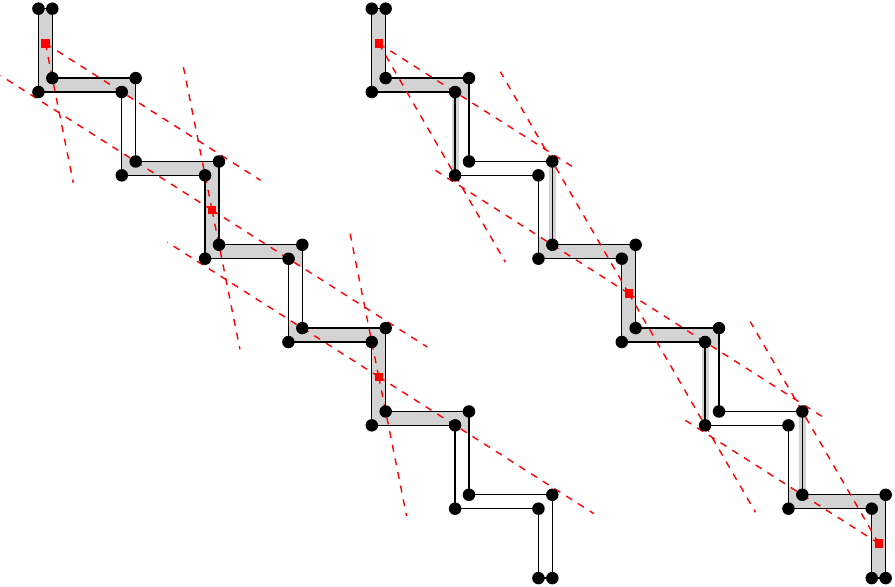}
  \caption{Lower bound construction examples for $x$-monotone orthogonal polygons, with illumination regions for~$k=2$~(left) and~$k=3$~(right).
  }
 \label{fig:ortholowerall}
\end{figure}

\begin{proof}
  The lower bound construction for an $x$-monotone orthogonal $n$-gon~$P$ is sketched in \figurename~\ref{fig:ortholowerall}, left for even~$k$, right for odd~$k$.
  To analyze the construction we place, for each~$k$, a set of~$t$ witness points into~$P$, such that the regions from which each such witness point can be illuminated by one \mbox{$k$-modem} are pairwise disjoint.
  Thus, the number of \mbox{$k$-modems} needed to illuminate~$P$ is at least~$t$.

  For even $k$, we place a witness point in the middle of every ($k\!+\!2$)-th corridor, starting from the leftmost one.
  \figurename~\ref{fig:ortholowerall}~(left) illustrates the set of witness points for~$k=2$.
  The region from which such a witness point can be illuminated with a \mbox{$k$-modem} extends from the corridor containing the point (in both left and right direction) to the next $\frac{k}{2}$ corridors and a small section of the $(\frac{k+2}{2})$-th corridor, which ends before the middle point of that corridor. 
  This way, the regions from which two witness points can be illuminated are disjoint.
  If~$P$ has~$t$ witness points, then~$P$ has at least $1+(k+2)(t-1)$ corridors.
  Hence, $n \geq 2(1+(k+2)(t-1))+2$.
  Thus, in order to illuminate~$P$, the required number of \mbox{$k$-modems} is at least the maximum possible value of~$t$, namely $\lfloor \frac{n+2k}{2k+4} \rfloor = \lceil \frac{n-3}{2k+4} \rceil$.
  Given the fact that $n-3$ is odd and $2k+4$ is even, $P$ requires at least $\lceil \frac{n-2}{2k+4} \rceil$ \mbox{$k$-modems}.

  For odd $k$, we place a witness point in the middle of every ($k\!+\!3$)-th corridor, starting from the leftmost one.
  \figurename~\ref{fig:ortholowerall}~(right) illustrates the set of witness points for~$k=3$.
  The region from which such a witness point can be illuminated with a \mbox{$k$-modem} is the same as for the even case $k-1$, with the addition of one edge of the $(\frac{k+1}{2})$-th corridor (in both left and right direction).
  Thus, this region ends before the midpoint of the $(\frac{k+3}{2})$-th corridor.
  This way, the regions from which two witness points can be illuminated are disjoint.
  By a similar analysis as in the even case, the number of \mbox{$k$-modems} needed to illuminate~$P$ is at least $\lceil \frac{n-2}{2k+6} \rceil$. 
\end{proof}

For the special case of~$k=1$ we can improve the lower bound construction for odd~$k$ to match the bounds of the even case.

\begin{lemma}\label{lem:ortholower1}
  There exists an $x$-monotone orthogonal $n$-gon requiring 
  $\left\lceil \frac{n-2}{6} \right\rceil$ \mbox{$1$-modems} to be illuminated.  
\end{lemma}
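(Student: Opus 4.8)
The plan is to produce an explicit lower bound construction tailored to $k=1$, modelled on the \emph{even}-$k$ construction of Lemma~\ref{lem:ortholowerall} rather than on its odd-$k$ counterpart. This choice is the conceptual heart of the matter: the target value $\lceil\frac{n-2}{6}\rceil = \lceil\frac{n-2}{2k+4}\rceil$ for $k=1$ coincides with the even-$k$ formula (and with the upper bound of Lemma~\ref{lem:orthoupperall}), and is strictly larger than the generic odd-$k$ bound $\lceil\frac{n-2}{8}\rceil$ that the tower-based construction of Lemma~\ref{lem:ortholowerall} yields for $k=1$. So I would not use the towers of the generic odd case; instead I would build an $x$-monotone orthogonal $n$-gon $P$ whose upper chain is a sequence of thin vertical corridors separated in the leaner ``valley'' fashion of the even case, and I would fix the corridor widths, depths, and the valley proportions so that each witness point accounts for exactly six boundary vertices of $P$ (the analogue of the ``$2$ vertices per corridor, spacing $k+2$'' bookkeeping, retuned so that the product is $2k+4=6$).

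Next I would place a set $W$ of $t$ witness points, one deep inside each of the designated corridors, spaced exactly as densely as the $k=1$ geometry allows. Recall that, by the definition of illumination, a $1$-modem at $q$ illuminates a witness point $w$ only if the segment $qw$ crosses at most $k=1$ edge of $P$. The core of the argument is then a direct angular analysis showing that the \emph{illumination region} of $w$ -- the set of positions $q$ from which $w$ can be seen through at most one wall -- is confined to the corridor of $w$ together with only a controlled sliver reaching into each neighbouring corridor, each sliver terminating before the midpoint at which an adjacent witness region would begin. Because consecutive witness points are placed at the prescribed distance, the corresponding illumination regions are then pairwise disjoint, so no single $1$-modem can illuminate two points of $W$, and hence at least $t=|W|$ modems are required to illuminate $P$. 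The counting step mirrors Lemma~\ref{lem:ortholowerall}: translating ``six vertices per witness point'' into a linear relation between $n$ and $t$ and performing the usual ceiling/parity manipulation yields $t=\lceil\frac{n-2}{6}\rceil$.

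The main obstacle I anticipate is precisely the confinement/disjointness analysis, which for $k=1$ is genuinely tight. Since a $1$-modem may reach $w$ through exactly one wall, I must verify geometrically that \emph{every} straight segment from a point outside the allotted neighbourhood of $w$ to $w$ is forced to cross at least two edges of $P$, and simultaneously that the admissible slivers inside the adjacent corridors really do stop short of the next witness region. Pinning down these slivers -- including the degenerate configurations in which the connecting segment grazes a reflex vertex or runs along an edge, where by the stated convention the incident edges are not counted as crossings -- is the delicate part, and it is here that the exact corridor proportions of the construction must be chosen with care. Once the reach of a single $1$-modem in this construction is fully pinned down, the disjointness of the witness regions and the final vertex count follow routinely, exactly as in the even-$k$ case.
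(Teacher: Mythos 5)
Your plan founders on exactly the step you flag as ``delicate'': for $k=1$ no valley-style construction of the kind you describe can have pairwise disjoint witness regions at a density of one witness per six vertices, and this is not a matter of tuning corridor proportions. The obstruction is the parity phenomenon of Observation~\ref{obs:rayio}: a segment joining two interior points crosses an even number of edges, so a $1$-modem placed in the interior needs direct visibility (it is no better than a $0$-modem), whereas a $1$-modem placed on the boundary sees the interior directly on its inward side \emph{and} through exactly one wall on its outward side. In a corridor/valley construction this makes boundary placements powerful in two directions at once: a modem on the bottom edge of a valley penetrates the wall on its left and the wall on its right, hence reaches witness points in \emph{both} adjacent corridors; a modem on a corridor wall sees its own corridor directly and the region behind the next wall outward; a modem on a horizontal edge at the top of the connecting strip sees the strip below directly and through one corridor wall above. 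Consequently, two witness points have disjoint illumination regions only if every wall bounding the visibility region of one witness is separated from the other witness by at least two further walls. Since in an $x$-monotone orthogonal polygon neither chain may overhang, this forces a full buffer corridor plus two valleys between consecutive witness points, i.e.\ eight vertices per witness, which is exactly the generic odd bound $2k+6=8$ of Lemma~\ref{lem:ortholowerall} that you are trying to beat. (A further concrete symptom: with the ``two vertices per corridor'' bookkeeping, corridors alternate between interior columns and exterior valleys, so spacing witnesses $k+2=3$ corridors apart would place every other witness in a valley, outside the polygon.)

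This is precisely why the paper's proof of Lemma~\ref{lem:ortholower1} abandons disjointness altogether: it partitions the polygon into boxes of six vertices each, places \emph{two} witness points $p^r_i$, $p^b_i$ per box, explicitly concedes that a modem in box $i$ can illuminate some witness points of boxes $i\pm1$, and then runs a sequential exchange/induction argument (a case analysis over which neighbouring witness points a modem can pick up, and what it must then forgo) to conclude that one modem per box is required nonetheless. So the missing idea in your proposal is structural, not computational: at density $\frac{1}{6}$ the pairwise-disjointness certificate is unattainable for $k=1$, and any correct proof must replace it by an argument that tolerates overlapping witness regions, as the paper's does.
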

\begin{proof}
  Consider the $x$-monotone orthogonal $n$-gon~$P$ sketched in \figurename~\ref{fig:ortholower1}~(middle).
  We partition~$P$ into~$t$ subsets~\mbox{$P\cap G_i$} with ``witness boxes''~$G_i$ (numbered increasingly from top left to bottom right), shown as dotted orthogonal boxes in the figure.
  Further, in each such subset $G_i$ of~$P$ we place two witness points~~$p^r_i$ and~$p^b_i$. 
  We show that~$P$ needs $t= \left\lceil \frac{n-2}{6} \right\rceil$ \mbox{$1$-modems} by proving that in order to illuminate all witness points in~$P$, at least one \mbox{$1$-modem} must be placed per subset~$P\cap G_i$.

\begin{figure}[htb]
  \centering
  \includegraphics{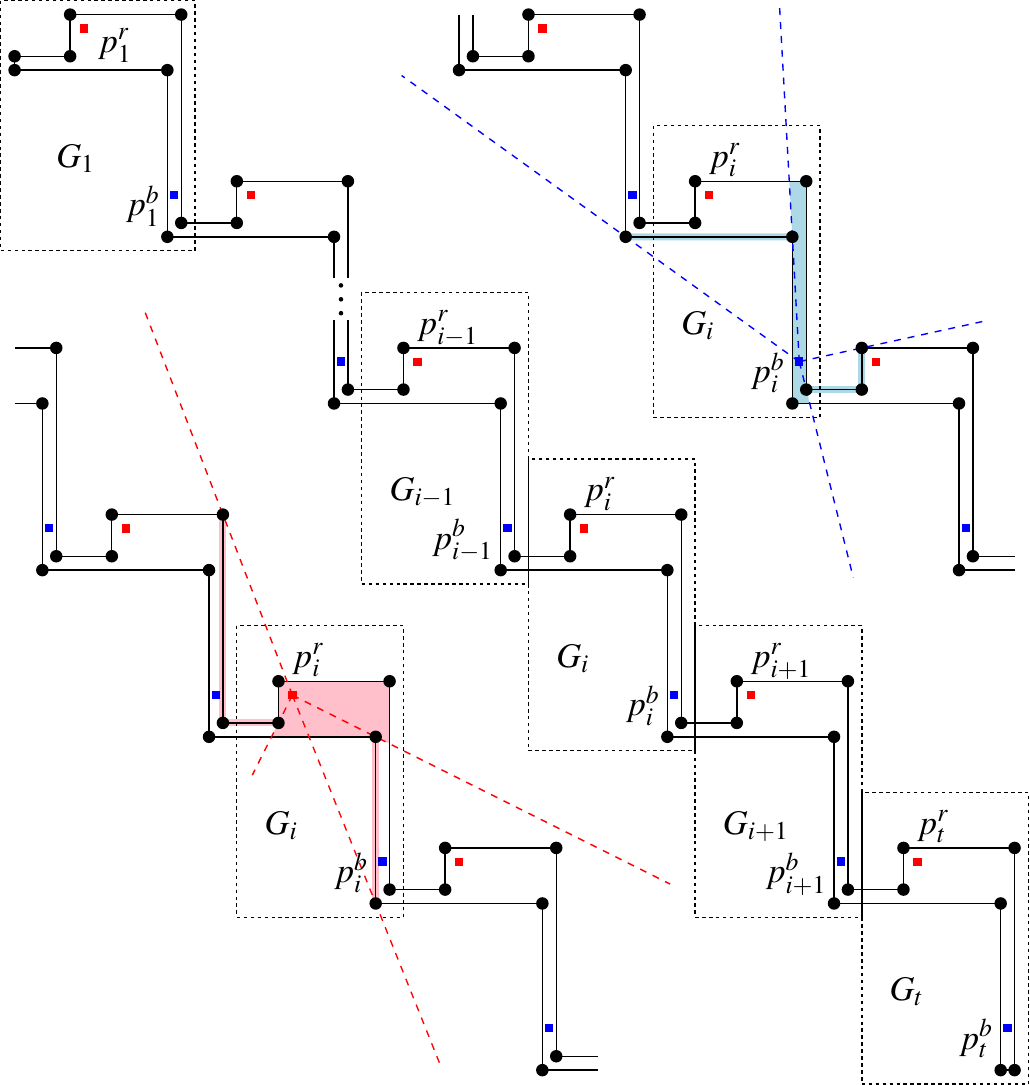}
  \caption{An $x$-monotone orthogonal $n$-gon requiring $\left\lceil \frac{n-2}{6} \right\rceil$ \mbox{$1$-modems} for illumination (middle).
	  \mbox{$1$-modem} region to illuminate the witness point~$p^r_i$ in~$P\cap G_i$ (bottom left).
	  \mbox{$1$-modem} region to illuminate the witness point~$p^b_i$ in~$P\cap G_i$ (top right).
}
 \label{fig:ortholower1}
\end{figure}

\figurename~\ref{fig:ortholower1}~(bottom left) and (top right) depict the regions of~$P$ from which each of the witness points~$p^r_i$ and~$p^b_i$ in one subset~$P\cap G_i$ of~$P$ can be illuminated with a \mbox{$1$-modem}. 
Obviously, it is possible to illuminate all of~$p^r_i$ and~$p^b_i$ using one \mbox{$1$-modem} placed accordingly in each~$P\cap G_i$.
It is also easy to see that no \mbox{$1$-modem} placed in~$P\cap G_i$ can illuminate any of the points~$p^r_j$ and~$p^b_j$ in any other subset~$P\cap G_j$ with~$j < i-1$ or~$j > i+1$.

On one hand, it is possible to illuminate~$p^r_{i+1}$ or~$p^b_{i+1}$ by a \mbox{$1$-modem} placed in~$P\cap G_i$.
On the other hand, no \mbox{$1$-modem} placed in~$P\cap G_i$ can simultaneously illuminate~$p^r_{i+1}$ and~$p^b_{i+1}$.
	
Further, it is possible to illuminate~$p^b_{i-1}$ by a \mbox{$1$-modem} placed in~$P\cap G_i$, but no \mbox{$1$-modem} placed in~$P\cap G_i$ can illuminate~$p^r_{i-1}$.
Finally, a \mbox{$1$-modem} in~$P\cap G_i$ that illuminates~$p^b_{i-1}$ illuminates neither of the points~$p^b_i$, $p^r_{i+1}$, and~$p^b_{i+1}$.

Now consider again the whole $x$-monotone orthogonal $n$-gon~$P$ and assume that~$P$ can be illuminated with less than~$t$ \mbox{$1$-modems}. 
Then there exists a minimum $1 \leq j \leq t$, such that the set $W_j = \{p^r_i, p^b_i : 1 \leq i \leq j\}$ of witness points can be illuminated with less than~$j$ \mbox{$1$-modems}, all located in $(P\cap G_1) \cup \ldots \cup (P\cap G_{j})$. 
Hence, there must be some subset~$P\cap G_h$, $h \leq j$, in which no modem is placed (if a modem is placed in the intersection of two adjacent subsets, then we count it for the left subset).
In the following, we consider the witness points in~$W_j$ according to the order of the subsets of~$P$ and place \mbox{$1$-modems} as needed.
	
In order to illuminate~$p^r_1$ we must place a \mbox{$1$-modem}~$M_1$ in~$P\cap G_1$, implying $j\geq h>1$.
If~$M_1$ does not illuminate~$p^r_2$, then we need a \mbox{$1$-modem} in~$P\cap G_2$ as well. 
Hence, assume that~$M_1$ also illuminates~$p^r_2$ and therefore, does not illuminate~$p^b_2$.
(Note that~$M_1$ illuminates~$p^b_1$ in this case.)

There are two choices for the next \mbox{$1$-modem}.
(1)~We can place a \mbox{$1$-modem}~$M_2$ in~$P\cap G_2$, such that it illuminates~$p^b_2$ and also illuminates~$p^r_3$.
(2)~We can avoid placing a \mbox{$1$-modem} in~$P\cap G_2$ by placing a \mbox{$1$-modem}~$M_2$ in~$P\cap G_3$, such that it also illuminates~$p^b_2$.

For choice~(1) observe that, as long as we keep placing one \mbox{$1$-modem} per subset, the situation stays the same as after placing~$M_1$:
after placing~$i$ \mbox{$1$-modems} for any $1\leq i < h$, all witness points in~$W_{i}$ plus (at most)~$p^r_{i+1}$ are illuminated.
And, in order to illuminate all witness points in~$W_{i+1}$, at least one more \mbox{$1$-modem} is needed.

For choice~(2) assume that each \mbox{$1$-modem}~$M_i$, $1\leq i<h$, has been placed in~$P\cap G_i$ (according to choice~(1)) and~$P\cap G_h$ is the first subset being skipped, i.e., we place the \mbox{$1$-modem}~$M_h$ in~$P\cap G_{h+1}$.
However, in this case~$M_h$ illuminates neither~$p^b_{h+1}$ nor any of the witness points of~$P\cap G_{h+2}$.
Thus, we need a \mbox{$1$-modem}~$M_{h+1}$ in~$P\cap G_{h+2}$ in order to illuminate~$p^b_{h+1}$ which, in turn, does not illuminate~$p^b_{h+2}$.

As long as we now keep placing only one \mbox{$1$-modem} per subset, this situation stays the same as after placing~$M_h$:
after placing~\mbox{$i\!-\!1$} \mbox{$1$-modems} for any $1\leq i\leq j$, all witness points in~$W_{i-1}$ plus (at most)~$p^r_{i}$ are illuminated.
In order to illuminate~$p^b_{i}$ and thus, to completely illuminate~$W_{i}$, at least one more \mbox{$1$-modem} is needed.

To change this situation, we have to place an additional \mbox{$1$-modem} that illuminates~$p^b_{i}$ either in~$P\cap G_{i-1}$ or in~$P\cap G_i$.
In addition to~$W_{i}$, with such a \mbox{$1$-modem} at most one out of~$p^r_{i+1}$ and~$p^b_{i+1}$ can be illuminated.
And this yields the same situation as after either choice~(1) or choice~(2).
Hence, it is not possible to illuminate~$W_j$ with less than~$j$ \mbox{$1$-modems}, for any $1 \leq j \leq t$, contradicting the assumption and proving the lemma.
\end{proof}

Comparing the results in Lemmas~\ref{lem:orthoupperall},~\ref{lem:ortholowerall}, and~\ref{lem:ortholower1}, observe that the bounds are tight if~$k=1$ and if~$k$ is even, but not for odd~$k\geq3$.
In fact, we prove in the next two lemmas that the upper bound for odd~$k$ is indeed lower.

\begin{lemma}\label{lem:2k+8}
For odd~$k\geq3$ and every $x$-monotone orthogonal $(2k+8)$-gon~$P$ there exists a point $q\in P$ such that~$P$ can be illuminated with a \mbox{$k$-modem} placed on~$q$.
\end{lemma}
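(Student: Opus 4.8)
The plan is to reduce the single-modem illumination of a $(2k+8)$-gon to the stair-end result of Lemma~\ref{lem:k+7} together with the small-polygon result of Lemma~\ref{lem:k+4}, glued through the orthogonal splitting of Observation~\ref{obs:orthosplit}. Since $k$ is odd, both $k+3$ and $k+7$ are even, so Observation~\ref{obs:orthosplit} lets me split $P$ at an even-indexed vertex into a $(k+3)$-gon and a $(k+7)$-gon in two natural ways: cutting at $v_{k+3}$ yields a left $(k+3)$-gon and a right $(k+7)$-gon, while cutting at $v_{k+7}$ yields a left $(k+7)$-gon and a right $(k+3)$-gon. In both cases the two parts share the vertical segment on the splitting line as the rightmost edge of the left part and the leftmost edge of the right part.

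First I would treat the favorable configuration. Suppose the cut at $v_{k+3}$ makes the right $(k+7)$-gon $P_R$ a \emph{left-sided} stair end polygon in the sense defined before Lemma~\ref{lem:k+7}. Then Lemma~\ref{lem:k+7} provides a point $q$ on the leftmost edge of $P_R$ -- which lies on the splitting line -- from which a single $k$-modem illuminates $P_R$. Because $q$ lies on the splitting line it is on the boundary of the left part $P_L$, and since $P_L$ has only $k+3\le k+4$ vertices, Lemma~\ref{lem:k+4} guarantees that the same modem at $q$ illuminates $P_L$ as well. Observation~\ref{obs:orthosplit} then yields that $q$ illuminates all of $P$. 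The symmetric situation, where the cut at $v_{k+7}$ makes the left $(k+7)$-gon a \emph{right-sided} stair end polygon, is handled identically using the rightmost-edge point of Lemma~\ref{lem:k+7}.

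The heart of the argument, and the step I expect to be the main obstacle, is showing that a stair-end configuration can always be arranged. Whether the $(k+7)$-part is a stair end polygon is a purely local condition at the splitting edge: it fails precisely when the horizontal edge of $P$ incident to the shared vertical segment, on the side where the boundary continues, is longer than the opposite horizontal edge, so that the first step of the staircase is ``missed''. I would analyze the local structure of $P$ at the central vertical edge and argue that this obstruction cannot simultaneously block both the $v_{k+3}$-cut (on its left side) and the $v_{k+7}$-cut (on its right side), so that choosing the appropriate cut gives a genuine stair end polygon. The residual configurations, in which neither cut produces a clean stair end, I would settle by a direct ray-counting argument in the spirit of the proof of Lemma~\ref{lem:k+7}: place $q$ at the height of a suitable non-extremal horizontal edge on the shared cross-section and bound, via Observations~\ref{obs:rayk+1},~\ref{obs:rayio}, and~\ref{obs:extremal}, the number of edges crossed by every outward and inward ray, exploiting that no ray through $q$ crosses the edge carrying $q$ and that, with $k+2$ odd, inward and outward rays cross an odd and even number of edges, respectively.
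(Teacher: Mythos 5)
Your overall skeleton is the same as the paper's: split at $v_{k+3}$ or $v_{k+7}$ (both even indices since $k$ is odd) into a $(k+3)$-gon and a $(k+7)$-gon, illuminate the $(k+7)$-gon from a point on its cut edge via Lemma~\ref{lem:k+7}, cover the $(k+3)$-gon from the same point, and glue with Observation~\ref{obs:orthosplit}. However, the step you yourself identify as ``the heart of the argument'' --- that one of the two cuts can always be arranged to produce a stair end polygon --- is exactly where all the content of the proof lies, and you do not prove it: you only describe what you \emph{would} do, and you hedge with ``residual configurations'' to be handled by an unspecified ray-counting argument. That fallback is not a harmless technicality; a direct ray count for a $(k+7)$-gon without the stair structure is precisely what the hypothesis of Lemma~\ref{lem:k+7} exists to avoid, and there is no reason to believe it goes through. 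So as it stands the proposal has a genuine gap.

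Moreover, your picture of the mechanism is off, which is why you end up with phantom residual cases. The stair-end property is not a lucky condition at the splitting edge that one hopes holds; in the paper it is \emph{manufactured} from the middle vertical edge $e_m=v_{k+4}v_{k+5}$ of $P$. There are exactly four local configurations at $e_m$ (which endpoint of $e_m$ carries the right-going horizontal edge $e_R$, and whether the opposite chain lies above or below $e_m$), and in each configuration one prescribed cut --- at $v_{k+7}$ in two of the cases, at $v_{k+3}$ in the other two --- makes $e_m$ together with its incident horizontal edge the stair at the cut end of the resulting $(k+7)$-gon (upper/lower, right-/left-sided according to the case). This case analysis is exhaustive, so no further argument is needed. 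A secondary inaccuracy: your claim that $q$, lying on the splitting line, ``is on the boundary of the left part'' and hence covered by Lemma~\ref{lem:k+4} is not quite right, since the cut edges $s_L$ and $s_R$ of the two parts are nested but in general not equal; $q$ may lie outside the $(k+3)$-gon, which is why the paper invokes Lemma~\ref{lem:k+3} (modem placed outside) in addition to Lemma~\ref{lem:k+4}.
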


\begin{figure}[htb]
  \centering
  \includegraphics{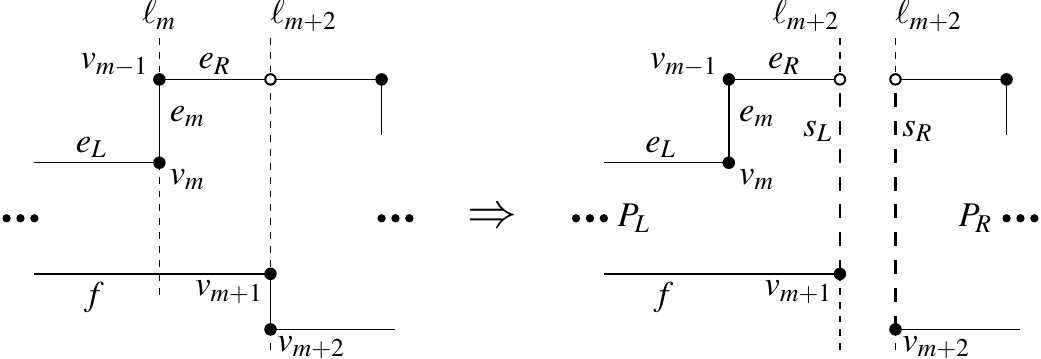}
  \caption{Sketch for proof of Lemma~\ref{lem:2k+8}:
	  ``Middle'' part of~$P$ where the splitting takes place (left).
      The resulting upper right-sided stair end $(k+7)$-gon~$P_L$ and the $x$-monotone orthogonal $(k+3)$-gon~$P_R$. The two bold dashed edges depict~$s_L$ and~$s_R$ (right).}
 \label{fig:ortho2k+8}
\end{figure}

\begin{proof}
Let~$m=k+5$ and let~$\ell_m$ be the vertical line through~$v_m$.
Let~$e_m$ be the vertical edge~$v_{m-1}v_{m}$ of~$P$.
Let~$e_R$ be the horizontal edge of~$P$ that is incident to one end point of~$e_m$ and has its other end point to the right of~$\ell_m$.
Likewise, let~$e_L$ be the horizontal edge of~$P$ that is incident to one end point of~$e_m$ and has its other end point to the left of~$\ell_m$.
Further, let~$f$ be the horizontal edge of~$P$ that is crossed by~$\ell_m$.
See \figurename~\ref{fig:ortho2k+8} for an example. 

We assume that~$e_R$ has~$v_{m-1}$ as one end point and that~$f$ is below~$v_m$ (this case is depicted in \figurename~\ref{fig:ortho2k+8}).
We split~$P$ vertically at~$\ell_{m+2}$ (through~$v_{m+2}$) into one $x$-monotone orthogonal $(k+7)$-gon~$P_L$ and one $x$-monotone orthogonal $(k+3)$-gon~$P_R$ (Observation~\ref{obs:orthosplit}).
Observe that~$P_L$ is an upper right-sided stair end polygon, with~$s_L$ as its rightmost edge, which is contained in the splitting line~$\ell_{m+2}$.
By Lemma~\ref{lem:k+7}, there exists a point~$q$ on $s_L$ where a \mbox{$k$-modem} can be placed to illuminate~$P_L$.
Further, Lemmas~\ref{lem:k+3} and~\ref{lem:k+4} ensure that~$P_R$ is also illuminated by a \mbox{$k$-modem} at~$q$.
Hence, both subpolygons are illuminated and therefore also~$P$, by Observation~\ref{obs:orthosplit}.

If~$e_R$ has~$v_{m}$ as one end point and~$f$ is above~$v_m$, then splitting~$P$ vertically at~$\ell_{m+2}$ (through~$v_{m+2}$) results in a lower right-sided stair end $(k+7)$-gon~$P_L$ (and an $x$-monotone orthogonal $(k+3)$-gon~$P_R$).
In the remaining two cases, $e_R$ has~$v_{m-1}$ as one end point and~$f$ is above~$v_m$, or~$e_R$ has~$v_{m}$ as one end point and~$f$ is below~$v_m$, we split~$P$ vertically at~$\ell_{m-2}$ (through~$v_{m-2}$) into an $x$-monotone orthogonal $(k+3)$-gon~$P_L$ and a lower or upper, respectively, left-sided stair end $(k+7)$-gon~$P_R$.
In all three cases, an analogous argumentation as above proves the lemma. 
\end{proof}

With this lemma we can prove the last missing piece of our main result for monotone orthogonal polygons.

\begin{lemma}\label{lem:orthoupperodd}
For odd~$k\geq3$, every $x$-monotone orthogonal $n$-gon~$P$ can be illuminated with $\left\lceil \frac{n-2}{2k+6}\right\rceil$ \mbox{$k$-modems}.
\end{lemma}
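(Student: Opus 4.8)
The plan is to mirror the proof of Lemma~\ref{lem:orthoupperall}, replacing its single-modem building block Lemma~\ref{lem:2k+6} by the stronger Lemma~\ref{lem:2k+8}, which for odd~$k\geq3$ covers $(2k+8)$-gons. First I would carve~$P$ into $x$-monotone orthogonal pieces of bounded size using Observation~\ref{obs:orthosplit}. Concretely, set $R_0=P$ and, for $j=1,2,\ldots$, split $R_{j-1}$ along the vertical line through its $(2k+8)$-th vertex (in $x$-order). Since $k$ is odd, $2k+8$ is even, so the index $i=2k+8$ meets the parity requirement of Observation~\ref{obs:orthosplit}; the split produces an $x$-monotone orthogonal $(2k+8)$-gon $L_j$ as the left part and an $x$-monotone orthogonal $(|R_{j-1}|-(2k+6))$-gon $R_j$ as the right part. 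I would continue as long as the current remainder has more than $2k+8$ vertices (equivalently at least $2k+10$, so that $2k+8\leq |R_{j-1}|-2$ and the split is legal), and stop once the remainder $L_t:=R_{t-1}$ has at most $2k+8$ vertices.

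The counting step determines~$t$. Each split removes $2k+6$ vertices from the remainder, so $|R_j|=n-j(2k+6)$, and the recursion stops at the smallest~$t$ with $n-(t-1)(2k+6)\leq 2k+8$. Rearranging, this is the smallest~$t$ with $t-1\geq \frac{n-2k-8}{2k+6}$, i.e.\ $t=\big\lceil \frac{n-2}{2k+6}\big\rceil$, using $\frac{n-2k-8}{2k+6}+1=\frac{n-2}{2k+6}$. This yields $t-1$ pieces $L_1,\ldots,L_{t-1}$, each an $x$-monotone orthogonal $(2k+8)$-gon, together with the final piece~$L_t$ of at most $2k+8$ vertices (and, by parity of orthogonal polygons, an even number of them, hence a genuine orthogonal polygon).

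Next I would illuminate each piece with a single $k$-modem. Each $(2k+8)$-gon $L_1,\ldots,L_{t-1}$ is illuminated by one $k$-modem by Lemma~\ref{lem:2k+8}; the final piece~$L_t$, having at most $2k+8$ vertices, is likewise illuminated by one $k$-modem (by Lemma~\ref{lem:2k+8} when it has exactly $2k+8$ vertices, and by the easier Lemmas~\ref{lem:2k+6},~\ref{lem:k+5}, and~\ref{lem:k+4} when it is smaller). Crucially, unlike the general monotone case of Theorem~\ref{thm:gmono-generalbound}, the orthogonal splitting of Observation~\ref{obs:orthosplit} returns genuine subpolygons $P_L\subseteq H_L(\ell_i)$ and $P_R\subseteq H_R(\ell_i)$, so any modem placed inside a piece automatically lies on the correct side of every splitting line that produced it. Hence the third bullet of Observation~\ref{obs:orthosplit} applies at every level of the recursion without extra placement constraints, and illuminating all~$t$ pieces illuminates~$P$ with $t=\big\lceil \frac{n-2}{2k+6}\big\rceil$ $k$-modems.

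Once Lemma~\ref{lem:2k+8} is in hand, the argument is essentially bookkeeping, so I do not expect a serious obstacle. The only points needing care are the legality and parity of each split (both guaranteed by $2k+8$ being even and by the stopping rule $|R_{j-1}|\geq 2k+10$), the exact ceiling computation, and the routine observation that the ``at most $2k+8$ vertices'' final piece is still covered by a single $k$-modem through the earlier, easier lemmas.
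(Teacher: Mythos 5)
Your proposal is correct and follows essentially the same route as the paper: iteratively split off $(2k+8)$-gons via Observation~\ref{obs:orthosplit} and illuminate each piece with one \mbox{$k$-modem} by Lemma~\ref{lem:2k+8}, mirroring the proof of Lemma~\ref{lem:orthoupperall}. Your added care about the parity of the split index and about the final, possibly smaller piece (which the paper glosses over by citing Lemma~\ref{lem:2k+8} for all pieces of at most $2k+8$ vertices) only makes the argument more complete.
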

\begin{proof}
By Observation~\ref{obs:orthosplit}, we can split~$P$ into a left $x$-monotone orthogonal $(2k\!+\!8)$-gon~$L_1$ and a right $x$-monotone orthogonal $(n\!-\!(2k\!+\!8)\!+\!2)$-gon~$R_1$.
Like in the proof of~Lemma~\ref{lem:orthoupperall}, recursing on~$R_1$ results in $\left\lceil \frac{n-2}{2k+6}\right\rceil$ $x$-monotone orthogonal subpolygons with at most $2k\!+\!8$ vertices each (including~$L_1$ and the rightmost remaining subpolygon).
By Lemma~\ref{lem:2k+8}, each of these polygons can be illuminated with one \mbox{$k$-modem}.
By Observation~\ref{obs:orthosplit}, the illumination of all subpolygons implies the illumination of~$P$. 
\end{proof}

We summarize the results for monotone orthogonal polygons from Lemmas~\ref{lem:orthoupperall},~\ref{lem:ortholowerall},~\ref{lem:ortholower1}, and~\ref{lem:orthoupperodd} in the following Theorem.

\begin{theorem}\label{thm:orthoall}
Let~$P$ be an $x$-monotone orthogonal $n$-gon.
For $k=1$ and all even~$k$, $\left\lceil \frac{n-2}{2k+4} \right\rceil$ \mbox{$k$-modems} are always sufficient and sometimes necessary to illuminate~$P$.
For odd~$k\geq3$, $\left\lceil \frac{n-2}{2k+6}\right\rceil$ \mbox{$k$-modems} are always sufficient and sometimes necessary to illuminate~$P$.
\end{theorem}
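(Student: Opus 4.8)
The plan is to assemble the theorem directly from the four preceding lemmas, pairing each sufficiency (upper) bound with a necessity (lower) bound of the \emph{exactly} equal value, so that the two coincide rather than merely matching asymptotically. Since the substantive arguments are already carried out in Lemmas~\ref{lem:orthoupperall}, \ref{lem:ortholowerall}, \ref{lem:ortholower1}, and~\ref{lem:orthoupperodd}, the only work here is a careful bookkeeping over the three regimes $k=1$, even~$k$, and odd~$k\geq 3$, together with the observation of which cited bound is sharpest in each regime.

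First I would collect the upper bounds. For $k=1$ and every even~$k$, Lemma~\ref{lem:orthoupperall} already states that $\left\lceil \frac{n-2}{2k+4}\right\rceil$ \mbox{$k$-modems} suffice. For odd~$k\geq 3$, the generic bound of Lemma~\ref{lem:orthoupperall} is superseded by the sharper Lemma~\ref{lem:orthoupperodd}, which gives $\left\lceil \frac{n-2}{2k+6}\right\rceil$ \mbox{$k$-modems}; this improvement is exactly what the larger illuminable pieces of Lemma~\ref{lem:2k+8} (size $2k+8$ rather than $2k+6$, whence the denominator $2k+6$ after the two-vertex splitting overhead) make possible. Thus each regime inherits precisely the stated sufficiency value.

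Next I would turn to the lower bounds. For even~$k$, Lemma~\ref{lem:ortholowerall} exhibits an $x$-monotone orthogonal $n$-gon requiring $\left\lceil \frac{n-2}{2k+4}\right\rceil$ \mbox{$k$-modems}, matching the upper bound. For odd~$k\geq 3$, the same lemma yields a construction needing $\left\lceil \frac{n-2}{2k+6}\right\rceil$ \mbox{$k$-modems}, again matching. The one delicate point, and the step I expect to be the main obstacle, is the case $k=1$: here the generic odd-$k$ construction of Lemma~\ref{lem:ortholowerall} only forces $\left\lceil \frac{n-2}{8}\right\rceil$ modems, which falls short of the $\left\lceil \frac{n-2}{6}\right\rceil$ upper bound. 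This gap is precisely what the dedicated construction of Lemma~\ref{lem:ortholower1} closes, requiring $\left\lceil \frac{n-2}{6}\right\rceil=\left\lceil \frac{n-2}{2\cdot 1+4}\right\rceil$ \mbox{$1$-modems} and so lifting $k=1$ into the same bound as the even regime. Everywhere else the sufficiency and necessity values are read off directly from the cited lemmas and seen to agree, which completes the proof.
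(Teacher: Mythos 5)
Your proposal is correct and matches the paper exactly: the paper gives no separate proof for this theorem, stating only that it summarizes Lemmas~\ref{lem:orthoupperall}, \ref{lem:ortholowerall}, \ref{lem:ortholower1}, and~\ref{lem:orthoupperodd}, which is precisely the assembly you carry out. Your bookkeeping in each regime---including the observation that for $k=1$ the generic odd-$k$ lower bound of $\left\lceil \frac{n-2}{8}\right\rceil$ falls short and must be replaced by the dedicated $\left\lceil \frac{n-2}{6}\right\rceil$ construction of Lemma~\ref{lem:ortholower1}---is exactly the intended argument.
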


\section{Conclusion}

Inspired by current wireless networks, we studied a new variant of the classic polygon-illumination problem.
To model the way wireless devices communicate within a building, we allow signals to cross at most a given number~$k$ of walls.

Using as a main tool the Splitting Lemma that allows us to divide a polygon into simpler, overlapping polygons, we gave an upper bound of $\lceil \frac{n-2}{2k+3} \rceil$ on the number of \mbox{$k$-modems} needed to illuminate any given monotone polygon with~$n$ vertices.
We also presented a family of monotone polygons that need at least $\lceil \frac{n-2}{2k+3} \rceil$ \mbox{$k$-modems}, which shows that our upper bound is tight.

Further, we also studied the particular case when the monotone polygons are orthogonal, where we derived similar tight bounds, which differ depending on the parity of~$k$.
For even~$k$, $\left\lceil \frac{n-2}{2k+4} \right\rceil$ \mbox{$k$-modems} are always sufficient and sometimes needed to illuminate a monotone orthogonal $n$-gon.
And for odd~\mbox{$k\geq3$}, $\left\lceil \frac{n-2}{2k+6} \right\rceil$ \mbox{$k$-modems} are always sufficient and sometimes needed to illuminate a monotone orthogonal $n$-gon.
Interestingly, the bounds for the number of \mbox{$k$-modems} for~$k=1$ are the same as for even~$k$.
This is an artifact of the orthogonality and the small constant~$1$.

Let us conclude with the following open problem: 
What is the algorithmic complexity of finding the minimum number of \mbox{$k$-modems} (and their position)
to illuminate a given monotone (orthogonal) polygon?


\section*{Acknowledgments}
We thank Clemens Huemer and Jan P{\"o}schko for helpful discussions.




\end{document}